\def\ve#1{\mathchoice{\mbox{\boldmath$\displaystyle\bf#1$}}
    {\mbox{\boldmath$\textstyle\bf#1$}}
    {\mbox{\boldmath$\scriptstyle\bf#1$}}
    {\mbox{\boldmath$\scriptscriptstyle\bf#1$}}}
\newcommand\veb{{\ve b}}
\newcommand\vecc{{\ve c}}
\newcommand\ved{{\ve d}}
\newcommand\vece{{\ve e}}
\newcommand\vef{{\ve f}}
\newcommand\veg{{\ve g}}
\newcommand\veh{{\ve h}}
\newcommand\vel{{\ve l}}
\newcommand\ves{{\ve s}}
\newcommand\veu{{\ve u}}
\newcommand\vev{{\ve v}}
\newcommand\vew{{\ve w}}
\newcommand\vex{{\ve x}}
\newcommand\vey{{\ve y}}
\newcommand\vez{{\ve z}}
\newcommand\vealpha{{\ve \alpha}}
\newcommand\vesigma{{\ve \sigma}}
\def \A {E^{(n)}}
\def\R{\mathbb{R}}
\def\Z{\mathbb{Z}}
\def\N{\mathbb{N}}
\def\RR{\mathbb{R}}
\def \C {{\cal C}}
\def \F {{\cal F}}
\def \G {{\cal G}}
\def \L {{\cal L}}
\def \R {{\cal R}}
\def \T {{\cal T}}
\def \l {\langle}
\def \r {\rangle}
\newcommand{\pref}{\ensuremath{\succ}}
\newcommand{\Oh}{O}
\newcommand{\cGEn}{\G\big(\A\big)}
\DeclareMathOperator{\rank}{{\rm rank}}
\newcommand{\NPh}{\hbox{{\sf NP}-hard}\xspace}
\newcommand{\NPc}{\hbox{{\sf NP}-complete}\xspace}
\newcommand{\transpose}{\intercal}
\newcommand{\hy}{\hbox{-}\nobreak\hskip0pt}
\newcommand{\prob}[3]{
    \begin{center}
        \begin{tabularx}{\textwidth}{lX}
            \multicolumn{2}{l}{#1}\\
            {\bf Input:}&{#2}\\
            {\bf Find:}&{#3}
        \end{tabularx}
    \end{center}
}
\theoremstyle{plain}
\newtheorem{theorem}{Theorem}
\newtheorem{corollary}[theorem]{Corollary}
\newtheorem{lemma}[theorem]{Lemma}
\newtheorem{proposition}[theorem]{Proposition}
\theoremstyle{definition}
\newtheorem{definition}[theorem]{Definition}
\newcommand{\OhOp}[1]{O\mathopen{}\mathclose\bgroup\left( #1 \aftergroup\egroup\right)}
\newtheorem*{rep@theorem}{\rep@title}
\newcommand{\newreptheorem}[2]{%
\newenvironment{rep#1}[1]{%
 \def\rep@title{#2 \ref{##1}}%
 \begin{rep@theorem}}%
 {\end{rep@theorem}}}
\newcommand{\runtimeclass}[1]{\ensuremath{\mathsf{#1}}}
\newcommand{\FPT}{\runtimeclass{FPT}\xspace}
\newcommand{\NP}{\runtimeclass{NP}\xspace}
\begin{document}

\title{Combinatorial $n$-fold Integer Programming and Applications\thanks{Research supported by CE-ITI grant project P202/12/G061 of GA~{\v C}R, GA~UK grant project 1784214, and ERC Starting Grant 306465 (BeyondWorstCase). An extended abstract of these results appeared in the Proceedings of the 25th European Symposium of Algorithms~\cite{KnopEtAl2017b}.}}

\date{}

\author{
Dušan Knop\thanks{\texttt{knop@kam.mff.cuni.cz} Department of Applied Mathematics, Charles University, Prague, Czech Republic \emph{and} Department of Informatics, University of Bergen, Bergen, Norway}
\and
Martin Koutecký\thanks{\texttt{koutecky@technion.ac.il} Department of Applied Mathematics, Charles University, Prague, Czech Republic \emph{and} Technion - Israel Institute of Technology, Haifa, Israel}
\and
Matthias Mnich\thanks{\texttt{mmnich@uni-bonn.de} Institut f{\"u}r Informatik, Universit\"at Bonn, Bonn, Germany \emph{and} Department of Quantitative Economics, Maastricht University, Maastricht, The~Netherlands}
}

\maketitle
\begin{abstract}
  Many fundamental \NP-hard problems can be formulated as integer linear programs (ILPs).
A~famous algorithm by Lenstra solves ILPs in time that is exponential only in the dimension of the program, and polynomial in the size of the ILP.
  That algorithm became a ubiquitous tool in the design of fixed-parameter algorithms for \NP-hard problems, where one wishes to isolate the hardness of a problem by some parameter.
  However, in many cases using Lenstra's algorithm has two drawbacks:
  First, the run time of the resulting algorithms is often doubly-exponential in the parameter, and second, an ILP formulation in small dimension cannot easily express problems involving many different costs.

  Inspired by the work of Hemmecke, Onn and Romanchuk~[Math. Prog. 2013], we develop a single-exponential algorithm for so-called \emph{combinatorial $n$-fold integer programs}, which are remarkably similar to prior ILP formulations for various problems, but unlike them, also allow variable dimension.
  We then apply our algorithm to a few representative problems like \textsc{Closest String}, \textsc{Swap Bribery}, \textsc{Weighted Set Multicover}, and obtain exponential speedups in the dependence on the respective parameters, the input size, or both.

  Unlike Lenstra's algorithm, which is essentially a bounded search tree algorithm, our result uses the technique of augmenting steps.
  At its heart is a deep result stating that in combinatorial $n$-fold IPs, existence of an augmenting step implies existence of a ``local'' augmenting step, which can be found using dynamic programming.
  Our results provide an important insight into many problems by showing that they exhibit this phenomenon, and highlights the importance of augmentation techniques.
\end{abstract}

\sloppy

\section{Introduction}\label{sec:introduction}
The \textsc{Integer Linear Programming} (ILP) problem is fundamental as it models many combinatorial optimization problems.
Since it is \NPc, we naturally ask about the complexity of special cases.
A~fundamental algorithm by Lenstra from 1983 shows that ILPs can be solved in polynomial time when their number of variables (the dimension) $d$ is fixed~\cite{Lenstra1983}; that algorithm is thus a natural tool to prove that the complexity of some special cases of other \NPh problems is also polynomial.

A~systematic way to study the complexity of ``special cases'' of \NPh problems was developed in the past 25 years in the field of parameterized complexity.
There, the problem input is augmented by some integer parameter~$k$, and one then measures the problem complexity in terms of both the instance size $n$ as well as $k$.
Of central importance are algorithms with run times of the form $f(k) n^{\OhOp{1}}$ for some computable function $f$, which are called \emph{fixed-parameter algorithms}; the key idea is that the degree of the polynomial does not grow with $k$.
For background on parameterized complexity, we refer to the monograph~\cite{CyganEtAl2015}.

Kannan's improvement~\cite{Kannan1987} of Lenstra's algorithm runs in time $d^{\OhOp{d}} n$, which is thus a fixed-parameter algorithm for parameter $d$.
Gramm et al.~\cite{GrammEtAl2003} pioneered the application of Lenstra's and Kannan's algorithm in parameterized complexity: they modeled {\sc Closest String} with $k$ input strings as an ILP of dimension $k^{\OhOp{k}}$, and thereby concluded with the first fixed-parameter algorithm for \textsc{Closest String}.
This success led Niedermeier~\cite{Niedermeier2004} to propose in his book:
\begin{quote}
    \emph{[...] It remains to investigate further examples besides
        \textsc{Closest String} where the described ILP approach turns out
        to be applicable. More generally, it would be interesting to
        discover more connections between fixed-parameter algorithms and
        (integer) linear programming.}
\end{quote}
Since then, many more applications of Lenstra's and Kannan's algorithm for parameterized problems have been proposed.
However, essentially all of them~\cite{BredereckEtAl2015,DornSchlotter2012,FellowsEtAl2008,HermelinRozenberg2015,Lampis2012,MnichWiese2015} share a common trait with the algorithm for \textsc{Closest String}: they have a doubly-exponential run time dependence on the parameter.
Moreover, it is difficult to find ILP formulations with small dimension for problems whose input contains many objects with varying cost functions, such as in \textsc{Swap Bribery}~\cite[Challenge \#2]{BredereckEtAl2014}.

\subsection{Our contributions}
We show that a certain form of ILP, which is closely related to the previously used formulations for \textsc{Closest String} and other problems, can be solved in single-exponential time and in variable dimension.
For example, Gramm et al.'s~\cite{GrammEtAl2003} algorithm for \textsc{Closest String} runs in time $2^{2^{\OhOp{k \log k}}} \Oh(\log L)$ for $k$ strings of length~$L$ and has not been improved since 2003, while our algorithm runs in time $k^{\OhOp{k^2}} \OhOp{\log L}$.
Moreover, our algorithm has a strong combinatorial flavor and is based on different notions than are typically encountered in parameterized complexity, most importantly augmenting steps.

As an example of our form of ILP, consider the following ILP formulation of the \textsc{Closest String} problem.
We are given $k$ strings $s_1, \dots, s_k$ of length~$L$ that come (after some preprocessing) from alphabet $[k]:=\{1,\hdots,k\}$, and an integer $d$.
The goal is to find a string $y \in [k]^L$ such that, for each $s_i$, the Hamming distance $d_H(y, s_i)$ is at most $d$, if such $y$ exists.
For $i \in [L]$, $(s_1[i], \dots, s_k[i])$ is the $i$-th column of the input.
Clearly there are at most $k^k$ different column types in the input, and we can represent the input succinctly with multiplicities $b^{\vef}$ of each column type $\vef \in [k]^k$.
Moreover, there are $k$ choices for the output string $y$ in each column.
Thus, we can encode the solution by, for each column type $\vef \in [k]^k$ and each output character $e \in [k]$, describing how many solution columns are of type $(\vef, e)$.
This is the basic idea behind the formulation of Gramm et al.~\cite{GrammEtAl2003}, as depicted on the left:
\[
\begin{array}{rcl|rclr}
\displaystyle \sum_{e \in [k]} \sum_{\vef \in [k]^k} d_H(e, f_j) x_{\vef,e} & \leq & d
&\displaystyle \sum_{\vef \in [k]^k} \sum_{(\vef', e) \in [k]^{k+1}}  d_H(e, f_j) x_{\vef',e}^{\vef} &\leq & d
& \forall j \in [k] \\
\displaystyle \sum_{e \in [k]} x_{\vef,e} &= & b^{\vef}
&\displaystyle \sum_{(\vef',e) \in [k]^{k+1}} x_{\vef',e}^{\vef} &=& b^{\vef}
&\forall \vef \in [k]^k \\
x_{\vef,e} &\geq & 0 &&&& \forall (\vef,e) \in  [k]^{k+1} \\
&&& x_{\vef,e}^{\vef'} &=& 0 & \forall \vef' \neq \vef, \forall e \in [k]\\
&&&
0 \leq x_{\vef,e}^{\vef} &\leq & b^{\vef} & \forall \vef \in [k]^{k} 
\end{array}
\]
\noindent
Let $(1~\cdots~1) = \mathbf{1}^{\transpose}$ be a row vector of all ones.
Then we can view the above as
\[
\begin{array}{ccccl|cccccl}
D_1    & D_2    & \cdots & D_{k^k} & \leq d  & \hspace{9.5em}~
& D & D & \cdots & D & \leq d
\\
\mathbf{1}^{\transpose}    & 0      & \cdots & 0 & = b^1 & \hspace{20em}   
& \mathbf{1}^{\transpose} & 0 & \cdots & 0 & = b^1 
\\
0      & \mathbf{1}^{\transpose}    & \cdots & 0 & = b^2 & \hspace{20em} 
& 0      & \mathbf{1}^{\transpose}    & \cdots & 0 & = b^2
\\
\vdots & \vdots & \ddots & \vdots &  =  \vdots & \hspace{20em}
& \vdots & \vdots & \ddots & \vdots &  =  \vdots
\\
0      & 0      & \cdots & \mathbf{1}^{\transpose} & = b^{k^k} & \hspace{20em}  
& 0      & 0      & \cdots & \mathbf{1}^{\transpose} & = b^{k^k},
\\
\end{array}
\]
where $D = (D_1~D_2~\dots~D_{k^k})$.
The formulation on the right is clearly related to the one on the left, but contains ``dummy'' variables which are always zero. 
This makes it seem unnatural at first, but notice that it has the nice form
\begin{align}
\min\left\{f(\vex)\, \mid E^{(n)}\vex=\veb\,,\ \vel\leq\vex\leq\veu\,,\ \vex\in\Z^{nt}\right\},\label{eq:standard_nfold}\\
~~ \mbox{where }~E^{(n)}:=
\left(
\begin{array}{cccc}
D    & D    & \cdots & D    \\
A    & 0      & \cdots & 0      \\
0      & A    & \cdots & 0      \\
\vdots & \vdots & \ddots & \vdots \\
0      & 0      & \cdots & A    \\
\end{array}
\right).\notag
\end{align}
Here, $r,s,t,n \in \N$, $\veu, \vel \in \Z^{nt}$, $\veb \in \Z^{r+ns}$ and $f: \Z^{nt} \to \Z$ is a separable convex function (i.e., $f(\vex) = \sum_{i=1}^n \sum_{j=1}^t f^i_j(x^i_j)$ with every $f^i_j: \Z \to \Z$ univariate convex),
$E^{(n)}$ is an $(r+ns)\times nt$-matrix, $D \in \Z^{r \times t}$ is an $r\times t$-matrix and $A \in \Z^{s \times t}$ is an $s\times t$-matrix.
We call $E^{(n)}$ the \emph{$n$-fold product of $E = \left(\begin{smallmatrix}D\\A\end{smallmatrix}\right)$}.
By $L = \l \veb, \vel, \veu, f \r$ we denote the length of the binary encoding of the vectors $\veb, \vel, \veu$ and the objective function $f$, where $\l f \r = \l \max_{\vex: \vel \leq \vex \leq \veu} |f(\vex)| \r$ is the encoding length of the maximum absolute value attained by $f$ over the feasible region.
This problem~\eqref{eq:standard_nfold} is known as \emph{$n$\hy fold integer programming} $(IP)_{E^{(n)},\veb,\vel,\veu,f}$.
Building on a dynamic program of Hemmecke, Onn and Romanchuk~\cite{HemmeckeEtAl2013} and a so-called proximity technique of Hemmecke, K{\"o}ppe and Weismantel~\cite{HemmeckeEtAl2014}, Knop and Kouteck{\'y}~\cite{KnopKoutecky2017} prove that:
\begin{proposition}[{\cite[Thm. 7]{KnopKoutecky2017}}]\label{thm:nfold}
There is an algorithm that solves\footnote{Given an IP, we say that to \emph{solve} it is to either (i) declare it infeasible or unbounded or (ii) find a minimizer of it.} \mbox{$(IP)_{E^{(n)},\veb,\vel,\veu,f}$} encoded with $L = \l \veb, \vel, \veu, f \r$ bits in time \mbox{$\Delta^{O(trs + t^2s)}\cdot n^3L$}, where \mbox{$\Delta = 1 + \max\{\|D\|_\infty, \|A\|_\infty\}$}.
\end{proposition}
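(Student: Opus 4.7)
My plan is to solve the program via the Graver augmentation framework. Starting from a feasible point $\vex_0$ (which one constructs by applying the same framework to an auxiliary $n$-fold IP of the same shape), one iteratively replaces $\vex_k$ by $\vex_k + \lambda_k \veg_k$, where $\veg_k \in \mathcal{G}(E^{(n)})$ and $\lambda_k \in \Z_{>0}$ are chosen so that $\vel \leq \vex_k + \lambda_k \veg_k \leq \veu$ and $f$ strictly decreases. Three ingredients are required: (i) a bound on $\|\veg\|_\infty$ for $\veg \in \mathcal{G}(E^{(n)})$ that is independent of $n$; (ii) an efficient subroutine to compute a sufficiently good augmenting step; and (iii) a bound on the number of iterations.

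For (i), I would use the Hemmecke--Onn--Romanchuk estimate $\|\veg\|_\infty \leq \Delta^{O(rs + ts)}$ for every $\veg \in \mathcal{G}(E^{(n)})$, proved by first bounding $\|\vey\|_\infty$ for $\vey \in \mathcal{G}(A)$ by the classical Graver bound $(2\Delta + 1)^s$ and then ``gluing'' brick contributions through the $r$ linking constraints. For (ii), I would run a dynamic program over the bricks: because each brick $\veg^i$ satisfies $A\veg^i = 0$ and $\|\veg^i\|_\infty$ is within the bound from (i), the partial sums $\sum_{j \leq i} D\veg^j$ lie in a grid of size at most $\Delta^{O(trs + t^2 s)}$. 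The DP propagates through the $n$ bricks layer by layer, locating the best brick-bounded augmenting step in time $\Delta^{O(trs + t^2 s)} \cdot n$. For (iii), I would invoke the Hemmecke--K{\"o}ppe--Weismantel proximity theorem to obtain an initial feasible integer point within $\ell_1$-distance $n \cdot \Delta^{O(rs + ts)}$ of an LP optimum, so that the subsequent augmentation phase lives in a region of polynomially bounded diameter; taking \emph{Graver-best} steps then makes the integrality gap $f(\vex_k) - f(\vex^*)$ shrink geometrically, yielding $O(n^2 \log L)$ iterations.

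Combining (ii) and (iii) gives the claimed $\Delta^{O(trs + t^2 s)} \cdot n^3 L$ running time, with the three factors of $n$ coming from the DP sweep, the Graver-best iteration count, and standard bit-complexity bookkeeping for the auxiliary feasibility IP. The conceptual obstacle is the augmenting-step subroutine: directly enumerating Graver basis elements is prohibitive, and one must exploit the block structure to phrase ``the best admissible augmenting vector'' as a shortest-path problem on a layered DAG whose vertices are the intermediate values of $\sum_{j \leq i} D\veg^j$. Matching the exponent $O(trs + t^2 s)$ hinges on the simultaneous tightness of (a) the brick norm $\|\veg^i\|_\infty$, which controls the number of per-brick transitions, and (b) the number of reachable linking states, which controls the layer width; any slack in either bound inflates the final exponent.
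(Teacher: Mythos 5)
This proposition is not proved in the paper; it is imported verbatim from Knop and Kouteck\'y, and the paper only records that its proof combines the Hemmecke--Onn--Romanchuk dynamic program with the Hemmecke--K\"oppe--Weismantel proximity theorem---which is exactly the architecture you describe, and also the blueprint the paper itself follows when proving Theorem~\ref{thm:combinatorialnfold}. So your proposal takes essentially the same route. One concrete correction to your accounting: the three factors of $n$ in the running time are (1) the layer-by-layer DP sweep over the $n$ bricks, (2) the enumeration of the $O(n\cdot t\cdot g(E))$ candidate step lengths $\alpha$ after the proximity argument shrinks the box to $\|\veu-\vel\|_\infty\le nt\cdot g(E)$ (the DP must be rerun for each $\alpha$ because the separable-convex vertex weights $f^i_j(x^i_j+\alpha h^i_j)-f^i_j(x^i_j)$ depend on $\alpha$), and (3) the $(2nt-2)\cdot O(L)$ bound on the number of Graver-best iterations from Proposition~\ref{prop:graverbest}; your attribution of the third factor to ``bit-complexity bookkeeping for the auxiliary feasibility IP'' and your iteration count of $O(n^2\log L)$ are both off, though neither affects the validity of the overall approach.
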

However, since the ILP on the right for {\sc Closest String} satisfies $t = k^k$, applying Proposition~\ref{thm:nfold} gives no advantage over applying Lenstra to solve the {\sc Closest String} problem.

We overcome this impediment by harnessing the special structure of the ILP for {\sc Closest String}.
Observe that its constraint matrix $A$ has the form
\begin{eqnarray}
\label{eqn:combnfoldip-cond1}
   A = (1~\cdots~1) &    = & \mathbf{1}^{\transpose} \in \Z^{1 \times t}.
\end{eqnarray}
Moreover, under suitable assumptions on the objective function $f$, we call any such IP a \emph{combinatorial $n$\hy{}fold IP}:
\begin{definition}[Combinatorial $n$-fold IP]
  \label{combinatorialNFoldDef}
  Let $A = (1~\cdots~1)\in\mathbb Z^{1\times t}$, let $D\in \Z^{r\times t}$ be a~matrix, and let \mbox{$E = \left(\begin{smallmatrix}D\\A\end{smallmatrix}\right)$}.
  Let $f\colon \Z^{nt} \to \Z$ be a~separable convex function represented by an evaluation oracle.
  A \emph{combinatorial $n$\hy{}fold IP} is 
  \begin{equation}
  \label{eqn:combnfold}
    \min \left\{ f(\vex) \mid \A \vex = \veb, \vel \leq \vex \leq \veu, \vex \in \Z^{nt} \right\} \enspace .
  \end{equation}
\end{definition}
To use $f$ algorithmically, we also want $f$ to admit an efficient optimization oracle for the continuous relaxation of~\eqref{eqn:combnfold}.
This property we can often assume.
Indeed, if for any $\alpha \in \RR$ there is an efficient separation oracle for the \emph{level set} $\{\vex \mid f(\vex) \leq \alpha\}$, then the~continuous optimum $\hat{\vex}$ of $\min \big\{  f(\vex) \mid \A \vex = \veb, \vel \leq \vex \leq \veu \big\}$ can be found in polynomial time using the~ellipsoid method.

Our main result is a fast algorithm for combinatorial $n$-fold IPs, which is exponentially faster in $t$ than previous works for general $n$-fold IPs.
\newcommand{\mainresult}{There is an algorithm that solves any combinatorial $n$-fold IP $(IP)_{E^{(n)},\veb, \vel, \veu, f}$ of size $L = \l \veb, \vel, \veu, f \r$ in time $t^{\OhOp{r}} (\Delta r)^{\OhOp{r^2}} \OhOp{n^3 L} + \mathsf{oo}$, where $\Delta = 1 + \|D\|_\infty$ and $\mathsf{oo}$ is the time required for one call to an optimization oracle for the continuous relaxation of $(IP)_{E^{(n)},\veb, \vel, \veu, f}$.}
\begin{theorem}
\label{thm:combinatorialnfold}
  \mainresult
\end{theorem}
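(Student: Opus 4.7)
The plan is to apply an augmentation-steps framework for $n$-fold IPs, specialised to the combinatorial case $A = \mathbf{1}^\transpose$. Three ingredients interact: an initial integer feasible point obtained from the continuous relaxation, an iterative augmentation loop, and a dynamic programme that computes a best augmenting step in each iteration. First, using one call to the optimisation oracle, I compute a continuous optimum $\hat\vex$ of the LP relaxation of~\eqref{eqn:combnfold} and round it to an integer-feasible point $\vex_0$; by an integer proximity theorem in the style of Hemmecke--K\"oppe--Weismantel, some integer optimum $\vex^*$ lies within $\ell_1$-distance $\OhOp{n}\cdot g_\infty$ of $\hat\vex$, where $g_\infty$ bounds the $\ell_\infty$-norms of Graver elements of $\A$, so $\vex_0$ is already close to optimal.

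Starting from $\vex_0$, I run a Schulz--Weismantel style loop: in each iteration, find a best integer vector $\veg$ with $\A\veg=0$, $\vel\leq\vex+\veg\leq\veu$, and $f(\vex+\veg)<f(\vex)$, and update $\vex\leftarrow\vex+\veg$. A halving analysis on the scaling factor shows that $\OhOp{nL}$ augmentation calls suffice, with $\OhOp{n^2}$ bookkeeping per call, yielding the $\OhOp{n^3L}$ prefactor. The technical heart is the best-step subroutine, and this is where the combinatorial structure is exploited. I would prove two structural lemmas particular to $A = \mathbf{1}^\transpose$: a \emph{sparsity} lemma asserting that each brick $\veg^i$ of a suitable best step has $\|\veg^i\|_0 = \OhOp{r}$, because $E$ has only $r+1$ rows, so any $\sqsubseteq$-minimal kernel vector of $E$ is supported on at most $r+2$ columns by a Carath\'eodory/conformal argument; and a \emph{norm} lemma asserting $\|\veg^i\|_\infty \leq (\Delta r)^{\OhOp{r}}$ via a Steinitz-type bound using the same effective rank.

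Given these bounds, I compute a best step by a dynamic programme over bricks $i = 1, \dots, n$ whose state is the cumulative partial row-sum $\sum_{i' \leq i} D\veg^{i'}\in\Z^r$. The relevant intermediate sums can be kept of magnitude $(\Delta r)^{\OhOp{r}}$ by a further proximity argument, giving a table of size $(\Delta r)^{\OhOp{r^2}}$. The transition for brick $i$ first chooses the at most $\OhOp{r}$ nonzero positions among the $t$ columns, contributing $t^{\OhOp{r}}$ choices, then assigns each a value of magnitude at most $(\Delta r)^{\OhOp{r}}$, contributing another $(\Delta r)^{\OhOp{r^2}}$ factor. Multiplied by the number of augmentation calls, the total cost matches the claimed bound $t^{\OhOp{r}}(\Delta r)^{\OhOp{r^2}}\OhOp{n^3L} + \mathsf{oo}$.

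The main obstacle is the sparsity lemma. In the general $n$-fold setting each brick of a Graver element may have support up to $t$, which would alone cost $t^{\Omega(t)}$ in the DP enumeration; reducing this to $\OhOp{r}$ is exactly where the hypothesis $A = \mathbf{1}^\transpose$ is essential. The argument needs that, given a brick-level kernel vector of $E$ with support exceeding $r+2$, one can exhibit a strictly smaller conformal sub-vector still in $\ker E$, contradicting $\sqsubseteq$-minimality of the encompassing $n$-fold Graver element. Once the sparsity and norm lemmas are in place, the remaining proof is a bookkeeping-heavy but essentially standard specialisation of the augmentation machinery underlying Proposition~\ref{thm:nfold}.
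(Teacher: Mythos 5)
Your high-level architecture (proximity to bound the box, a Graver-best augmentation loop, a dynamic program for the best step) matches the paper's, but the technical heart of your plan --- the sparsity lemma --- is false, and your proposed proof of it conflates two different notions of minimality. The Carath\'eodory-type argument that a kernel vector supported on more than $\mathrm{rank}(E)+1$ columns admits another kernel vector on a proper subset of its support bounds the support of \emph{circuits} (support-minimal kernel elements); it does not bound the support of $\sqsubseteq$-minimal elements, because the combination you would subtract off need not be conformal to the original vector. Concretely, for $r=1$, $D=(0,1,2,3)$ and $E=\left(\begin{smallmatrix}D\\\mathbf{1}^{\transpose}\end{smallmatrix}\right)$, the vector $(1,-1,-1,1)$ is a Graver element of $E$ with support $4>r+2$. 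Worse, a brick of an element of $\cGEn$ is not a Graver element of $E$ but a sum of up to $g(E)$ elements of $\G(\mathbf{1}^{\transpose})$ (Lemma~\ref{lem:partialsums}), so its support can be as large as $2g(E)$, and the bound $g(E)\le t^2(2r\Delta)^r$ (Lemma~\ref{lem:ge_bound}) genuinely carries a $t^2$ factor; the same issue undermines your norm lemma $\|\veg^i\|_\infty\le(\Delta r)^{\OhOp{r}}$ and your claim that the DP's intermediate row-sums stay of magnitude $(\Delta r)^{\OhOp{r}}$. Since your DP only searches vectors with brick support $\OhOp{r}$, it can miss every feasible Graver element, so it neither certifies optimality nor returns a Graver-best step, and the iteration bound from Proposition~\ref{prop:graverbest} no longer applies.

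The paper sidesteps sparsity entirely: it shows that $\G(\mathbf{1}^{\transpose})$ consists of the $t(t-1)$ vectors with one $+1$ and one $-1$ (Lemma~\ref{lem:GA_structure}), deduces $g(E)\le t^2(2r\Delta)^r$, and runs the DP \emph{coordinate by coordinate} ($nt$ layers), with state $(h^i_j,\beta^i_j,\vesigma^i_j)$ recording the entry, the brick prefix sum, and the $r$-dimensional prefix sum under $D$; all three are bounded in magnitude via $g(E)$, which yields layer size $t^{\OhOp{r}}(\Delta r)^{\OhOp{r^2}}$ with no need to guess supports. A secondary gap: you cannot ``round'' the continuous optimum to an integer-\emph{feasible} point, since rounding destroys $E^{(n)}\vex=\veb$; the paper uses the continuous optimum only to shrink the variable bounds (Lemma~\ref{lem:equiv_bounded_instance}) and obtains an initial feasible point by a separate Phase-I auxiliary combinatorial $n$-fold IP with slack variables.
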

Observe that, when applicable, our algorithm is not only asymptotically faster than Lenstra's, but works even if $n$ is variable (not parameter).

\subsection{Succinctness}
A~common aspect shared by all of our applications is that bounding some parameter of the instance makes it preferable to view the instance in a succinct way (following the terminology of Faliszewski et al.~\cite{FaliszewskiEtAl2006}; Onn~\cite{Onn2014,OnnSarrabezolles2015} calls these problems \emph{huge} whereas Goemans and Rothvo{\ss}~\cite{GoemansRothvoss2014} call them \emph{high multiplicity}).
The \emph{standard} way of viewing an instance is that the input is a collection of individual objects (bricks, matrix columns, voters, covering sets etc.).
The \emph{succinct} way of viewing an instance is by saying that identical objects are of the same \emph{type}, giving a bound $T$ on the number of distinct types, and then presenting the input as numbers $n_1, \dots, n_{T}$ such that $n_i$ is the number of objects of type $i$.
Clearly, any standard instance can be converted to a succinct instance of roughly the same size (the number of objects is an upper bound on $T$), but the converse is not true as the numbers $n_i$ might be large.
Also, it is sometimes not trivial (see Sect.~\ref{subsec:hugenfold}) that the output can be represented succinctly; still, in all cases which we study it can.

In our applications we always state what are the types and what is the upper bound $T$ on the number of types; we assume some arbitrary enumeration of the types.
We also assume that the input is presented succinctly and thus we do not include the time needed to read and convert a standard instance into a succinct instance in the runtime of our algorithms.

\subsection{Applications}
We apply Theorem~\ref{thm:combinatorialnfold} to several fundamental combinatorial optimization problems, for which we obtain exponential improvements in the parameter dependence, the input length, or both.
For a summary of results, see Table~\ref{tab:results}; this list is not meant to be exhaustive.
In fact, we believe that for any Lenstra-based result in the literature which only achieves double-exponential run times, there is a good chance that it can be sped up using our algorithm.
The only significant obstacle seem to be large coefficients in the constraint matrix or an exponential number of ``global'' constraints.

\begin{table}[htb]
  \centering
  \resizebox{\textwidth}{!}{%
    \begin{tabular}{lll}
      \toprule
      Problem                  & Previous best run time & Our result\\
      \toprule
      {\sc Closest String}          & $2^{2^{\OhOp{k \log k}}} \OhOp{\log L}$~\cite{GrammEtAl2003} & $k^{\Oh(k^2)} \OhOp{\log L}$\\
      \midrule
      {\sc Optimal Consensus}       & \FPT for $k \leq 3$, open for $k \geq 4$~\cite{AmirEtAl2011} & $k^{\OhOp{k^2}} \OhOp{\log L}$\\
      \midrule
      Score-{\sc Swap Bribery}      & $2^{2^{\OhOp{|C| \log |C|}}} \OhOp{\log |V|}$~\cite{DornSchlotter2012} & $|C|^{\OhOp{|C|^2}} \OhOp{T^3 \log |V|}$,\\
            & $|C|^{\OhOp{|C|^6}}\OhOp{|V|^3}$~\cite{KnopEtAl2017} & with $T \leq |V|$\\
      \midrule
      C1-{\sc Swap Bribery}         & $2^{2^{\OhOp{|C| \log |C|}}} \OhOp{\log |V|}$~\cite{DornSchlotter2012} & $|C|^{\OhOp{|C|^4}} \OhOp{T^3 \log |V|}$,\\
      & $|C|^{\OhOp{|C|^6}} \OhOp{|V|^3}$~\cite{KnopEtAl2017} & with $T \leq |V|$\\
      \midrule
      {\sc Weighted Set Multicover} & $2^{2^{\OhOp{k \log k}}} \Oh(n)$~\cite{BredereckEtAl2015} & $k^{\OhOp{k^2}} \Oh(\log n)$\\
      \midrule
      {\sc Huge $n$-fold IP}        & \FPT with $D=I$ and & \FPT with parameter-\\
                                    & $A$ totally unimodular & sized domains\\
      \bottomrule
    \end{tabular}}
    \caption{Run time improvements for a few representative problems resulting from this work.}
    \label{tab:results}
\end{table}

\paragraph{Stringology.} A~typical problem from stringology is to find a string $y$ satisfying certain distance properties with respect to $k$ strings $s_1, \dots, s_k$.
All previous fixed-parameter algorithms for such problems we are aware of for parameter~$k$ rely on Lenstra's algorithm, or their complexity status was open (e.g., the complexity of \textsc{Optimal Consensus}~\cite{AmirEtAl2011} was unknown for all $k \geq 4$).
Interestingly, Boucher and Wilkie~\cite{BoucherWilkie2010} show the counterintuitive fact that \textsc{Closest String} is easier to solve when $k$ is large, which makes the parameterization by $k$ even more significant.
Finding an algorithm with run time only single-exponential in~$k$ was a repeatedly posed problem, e.g. by Bulteau et al.~\cite[Challenge \#1]{BulteauEtAl2014} and Avila et al.~\cite[Problem 7.1]{AvilaEtAl2006}.
By applying our result, we close this gap for a wide range of problems.
\begin{theorem}
\label{thm:strings_singleexp}
  The problems
  \begin{itemize}
    \item \textsc{Closest String}, \textsc{Farthest String}, \textsc{Distinguishing String Selection}, \textsc{Neighbor String}, \textsc{Closest String with Wildcards}, \textsc{Closest to Most Strings}, \textsc{$c$\hy HRC} and \textsc{Optimal Consensus} are solvable in time $k^{\Oh(k^2)} \Oh(\log L)$, and,
    \item \textsc{$d$-Mismatch} is solvable in time $k^{\Oh(k^2)} \Oh(L^2 \log L)$, 
  \end{itemize}
  for inputs consisting of $k$ strings of length $L$ succinctly encoded by multiplicities of identical columns.
\end{theorem}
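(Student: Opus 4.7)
The plan is, for each problem listed, to formulate it as a combinatorial $n$-fold IP with parameters bounded in terms of $k$ alone and then invoke Theorem~\ref{thm:combinatorialnfold}. Throughout I aim for $r = O(k)$, $t = k^{O(k)}$, $\Delta = O(k)$, and number of bricks $n \leq k^{O(k)}$, with multiplicities and distance thresholds encoded in $O(\log L)$ bits; plugging these into $t^{O(r)}(\Delta r)^{O(r^2)} O(n^3 L)$ collapses immediately to $k^{O(k^2)} \cdot O(\log L)$.

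I would use \textsc{Closest String} as the template, since the construction has already been exhibited in the introduction. Its bricks are indexed by the column types $\vef \in [k]^k$, giving $n \leq k^k$; the variables in a brick are indexed by pairs $(\vef', e) \in [k]^k \times [k]$, giving $t = k^{k+1}$; the global block $D$ has $r = k$ rows (one per input string) with $0/1$ entries, so $\Delta = 2$; and $A = \mathbf{1}^{\transpose}$ enforces the multiplicity $b^{\vef}$. Since only feasibility is asked, the continuous oracle $\mathsf{oo}$ is trivial. I would then adapt this template to the other problems in the first bullet by small modifications: \textsc{Farthest String} flips the inequalities; \textsc{Distinguishing String Selection} adds $k$ lower-bound distance constraints for the ``far'' group, pushing $r$ to $2k$; \textsc{Neighbor String}, \textsc{Closest String with Wildcards} and \textsc{Closest to Most Strings} change only the cost coefficients of $D$ and, in the last case, introduce $O(k)$ indicator variables counting strings whose distance exceeds the threshold; for \textsc{$c$\hy HRC} one enumerates a $k^{O(k)}$-sized family of candidate cluster assignments and for each solves a disjoint union of Closest String instances, with the outer loop absorbed into $k^{O(k^2)}$; and for \textsc{Optimal Consensus} the objective becomes separable convex in the output column distribution, whose continuous relaxation admits an efficient ellipsoid-based oracle as noted after Definition~\ref{combinatorialNFoldDef}.

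For \textsc{$d$-Mismatch} the input is not purely column-succinct, so I would enumerate the $O(L^2)$ possible windows (start and end positions within the text) and apply Theorem~\ref{thm:combinatorialnfold} to each resulting Closest-String-style instance, which accounts for the extra $O(L^2)$ factor in the bound.

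The main obstacle is two-fold. First, for each problem I must verify that the natural column-succinct formulation really has $r = O(k)$, $\Delta = O(k)$, $t = k^{O(k)}$, a separable convex objective, and an efficient continuous oracle; this is delicate for \textsc{$c$\hy HRC} and \textsc{Optimal Consensus}, which need a small combinatorial outer loop before the IP is clean. Second, the bottleneck factor $(\Delta r)^{O(r^2)}$ in Theorem~\ref{thm:combinatorialnfold} insists that neither $\Delta$ nor $r$ may depend on $L$, so I must be careful that global distance thresholds and per-string slacks remain encoded by entries polynomial in~$k$ rather than in~$L$.
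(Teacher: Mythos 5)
Your overall route is essentially the paper's: the paper packages all of these problems into a single meta-problem ($\delta$-\textsc{Multi Strings}, Theorem~\ref{thm:multistrings}) with one combinatorial $n$-fold IP formulation whose parameters are $\hat r = 2k$, $\hat t \le K^{2k}$, $\hat n \le K^k$, $\hat \Delta \le 1+K$, and then derives Theorem~\ref{thm:strings_singleexp} by observing that each listed problem is a specialization (or, for \textsc{$d$-Mismatch}, \textsc{Closest to Most Strings} and \textsc{$c$-HRC}, reduces to \textsc{Closest String} with a polynomial or $k^{O(k)}$-size outer enumeration). Your per-problem modification of the \textsc{Closest String} template, the $O(L^2)$ window enumeration for \textsc{$d$-Mismatch}, and the cluster-assignment enumeration for \textsc{$c$-HRC} all match this.

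There is one step that, as written, would fail: your treatment of \textsc{Closest to Most Strings} by ``$O(k)$ indicator variables counting strings whose distance exceeds the threshold.'' To couple a $0/1$ outlier indicator $z_i$ to the distance constraint for string $s_i$ you need something like $\sum \delta(\cdot)\,x \le d + M z_i$ with $M \ge L$, i.e.\ a big-$M$ coefficient of order $L$ inside the globally uniform block $D$. That makes $\Delta$ depend on $L$, and the factor $(\Delta r)^{O(r^2)}$ in Theorem~\ref{thm:combinatorialnfold} then blows up to $L^{O(k^2)}$ --- exactly the failure mode you yourself flag in your final paragraph. The fix is the one the paper uses (via the known \FPT-reduction of Boucher and Ma): enumerate the at most $2^k$ candidate outlier sets $O$ outside the IP and solve a plain \textsc{Closest String} instance on $\{s_1,\dots,s_k\}\setminus O$ for each; the $2^k$ overhead is absorbed into $k^{O(k^2)}$. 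With that replacement, and with the standard inequality-to-equality conversion for the globally uniform constraints (Sect.~\ref{sec:applications}), your argument goes through.
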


\paragraph{Computational social choice.}
A~typical problem in computational social choice takes as input an election consisting of a set $V$ of voters and a set $C$ of candidates which are ranked by the voters; the objective is to manipulate the election in certain ways to let a desired candidate win the election under some voting rule $\mathcal R$.
This setup leads to a class of bribery problems, a prominent example of which is $\mathcal R$-{\sc Swap Bribery} where manipulation is by swaps of candidates which are consecutive in voters' preference orders.
For a long time, the only known algorithms minimizing the number of swaps required run times which were doubly-exponential in $|C|$; improving those run times was posed as a challenge~\cite[Challenge \#1]{BredereckEtAl2014}.
Recently, Knop et al.~\cite{KnopEtAl2017} solved the challenge using Proposition~\ref{thm:nfold}.
However, Knop et al.'s result has a cubic dependence $O(|V|^3)$ on the number of voters, and the dependence on the number of candidates is still quite large, namely $|C|^{\Oh(|C|^6)}$.

We improve their result to \emph{logarithmic} dependence on $|V|$, and smaller dependence on $|C|$.
By $T$ we denote the number of \emph{voter types}, where two voters are of the same type if they have the same preferences over the candidates and the same cost function for bribery; clearly $T \leq |V|$.
\begin{theorem}
\label{thm:applicationBribery}
  $\R$-\textsc{Swap Bribery} can be solved in time
  \begin{itemize}
    \item $|C|^{\Oh(|C|^2)} \Oh(T^3 (\log |V| + \log \sigma_{\max}))$ for $\R$ any natural scoring protocol, and,
    \item $|C|^{\Oh(|C|^4)} \Oh(T^3 (\log |V| + \log \sigma_{\max}))$ for $\R$ any C1 rule,
  \end{itemize}
  where $T$ is the number of voter types and $\sigma_{\max}$ is the maximum cost of a swap.
\end{theorem}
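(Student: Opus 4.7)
The plan is to reduce $\R$-\textsc{Swap Bribery} to one (or a small family of) combinatorial $n$-fold IP(s) in the sense of Definition~\ref{combinatorialNFoldDef} and invoke Theorem~\ref{thm:combinatorialnfold}. Group voters into $T \leq |V|$ \emph{voter types} sharing a preference order and swap-cost function, with multiplicities $n_1,\dots,n_T$. Create one brick per voter type $\tau$, containing a variable $x^\tau_\pi \in \Z_{\geq 0}$ for each of the $t = |C|!$ linear orders $\pi$ of $C$, representing how many voters of type $\tau$ end up with preference $\pi$ after bribery. The local constraint $\sum_\pi x^\tau_\pi = n_\tau$ is exactly $A = (1~\cdots~1) = \mathbf{1}^{\transpose}$. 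The objective is separable (in fact linear): the coefficient of $x^\tau_\pi$ is the minimum total swap cost of converting $\pi_\tau$ into $\pi$ under type $\tau$'s cost function, which can be precomputed in $|C|^{\OhOp{|C|}} T$ time by dynamic programming over inversions.

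For a natural scoring protocol $\R$ with vector $(s_1,\dots,s_{|C|})$, the global block $D$ has $r = |C|-1$ rows, one per non-designated candidate $c \neq p$, enforcing $\sum_{\tau,\pi}(s_{\pi^{-1}(p)}-s_{\pi^{-1}(c)})\,x^\tau_\pi \geq 1$; so $\Delta = 1+\|D\|_\infty \leq \poly(|C|)$ because $D$ depends only on the scoring vector. Plugging $t = |C|!$, $r = \OhOp{|C|}$, $\Delta = \poly(|C|)$, $n = T$, and $L = \OhOp{T\log|V|+\log\sigma_{\max}}$ into Theorem~\ref{thm:combinatorialnfold} gives $t^{\OhOp{r}}(\Delta r)^{\OhOp{r^2}} = |C|^{\OhOp{|C|^2}}$ and the claimed $|C|^{\OhOp{|C|^2}} \OhOp{T^3(\log|V|+\log\sigma_{\max})}$ bound. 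The continuous-relaxation oracle is a linear program, solvable in polynomial time.

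For a C1 rule $\R$, the winner is determined by the majority tournament $M$ on $C$, of which there are $2^{\binom{|C|}{2}}$, and $\R$-winners can be read off from $M$ in time depending only on $|C|$. Iterate over every tournament $M$ in which $p$ is an $\R$-winner, and for each one solve the combinatorial $n$-fold IP above augmented by $\binom{|C|}{2}$ global constraints that force, for every pair $\{a,b\}$, the quantity $\sum_{\tau,\pi}(\mathbf{1}[\pi(a)<\pi(b)]-\mathbf{1}[\pi(b)<\pi(a)])\,x^\tau_\pi$ to have the sign prescribed by $M$. Now $r = \OhOp{|C|^2}$ and still $\Delta = \OhOp{1}$, so each invocation of Theorem~\ref{thm:combinatorialnfold} costs $(|C|!)^{\OhOp{|C|^2}}(\poly(|C|))^{\OhOp{|C|^4}}\cdot \OhOp{T^3 L} = |C|^{\OhOp{|C|^4}} \OhOp{T^3 L}$; multiplying by the $2^{\OhOp{|C|^2}}\leq |C|^{\OhOp{|C|^2}}$ outer iterations and returning the cheapest feasible solution over all $M$ yields the stated C1 bound.

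The main obstacle I expect is keeping $\Delta$ small. It is tempting to push swap costs into $D$ via auxiliary flow-like constraints, but then $\|D\|_\infty$ would scale with $\sigma_{\max}$ and the $(\Delta r)^{\OhOp{r^2}}$ factor would destroy the bound. The formulation above sidesteps this by encoding all cost information in the separable convex objective $f$ (where it contributes only a $\log\sigma_{\max}$ term to the input length $L$) and keeping $D$ polynomially bounded in $|C|$; the same separation is what lets the dependence on $|V|$ drop to \emph{logarithmic}, since $|V|$ enters only through the right-hand-side $\veb$.
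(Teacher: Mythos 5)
Your proposal matches the paper's proof essentially step for step: one brick per voter type with $|C|!$ variables for the target preference orders, the precomputed minimum swap costs placed in a linear (hence separable convex) objective so that $\sigma_{\max}$ only enters through $L$, local constraints $\sum_\pi x^\tau_\pi = n_\tau$, score-comparison rows with $\Delta = \Oh(|C|)$ for natural scoring protocols, and enumeration of majority-relation outcomes with $\Delta = \Oh(1)$ and $r = \Oh(|C|^2)$ for C1 rules. The one substantive discrepancy is in the C1 case: you enumerate only the $2^{\binom{|C|}{2}}$ strict tournaments, whereas a C1 rule is determined by the three-valued relation (win/lose/tie on each pair of candidates) and an optimal bribery may produce a head-to-head tie under which the designated candidate still wins (e.g.\ Copeland$^{\alpha}$ with $\alpha>0$), so you could miss the optimum; the paper instead enumerates all $3^{|C|^2}$ scenarios in $\{<,=,>\}^{|C|^2}$ and adds equality constraints for tied pairs, which changes nothing asymptotically and is the fix you should adopt.
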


\paragraph{Connections between stringology and computational social choice.}
Challenge \#3 of Bulteau et al.~\cite{BulteauEtAl2014} asks for connections between problems in stringology and computational social choice.
We demonstrate that in both fields combinatorial $n$-fold IP is an important tool.
An important feature of both {\sc Bribery}-like problems and {\sc Closest String}-like problems is that permuting voters or characters does not have any effect.
This fits well the $n$-fold IP format, which does not allow any interaction between bricks.
It seems that this feature is important, as when it is taken away, such as in {\sc Closest Substring}, the problem becomes $\mathsf{W}[1]$-hard~\cite{Marx2008}, even for parameter $d+k$.

Another common feature is that both types of problems naturally admit ILP formulations for succinct variants of the problems, as mentioned above.
Moreover, it was precisely this fact that made all previous algorithms doubly-exponential---the natural succinct formulation has exponentially many (in the parameter) variables and thus applying Lenstra's algorithm leads to a doubly-exponential runtime.

\paragraph{Weighted set multicover.}
Bredereck et al.~\cite{BredereckEtAl2015} define the \textsc{Weighted Set Multicover} (WSM) problem, which is a significant generalization of the classical \textsc{Set Cover} problem.
Their motivation to study WSM was that it captures several problems from computational social choice and optimization problems on graphs implicit to previous works~\cite{FialaEtAl2017,GajarskyEtAl2013,Lampis2012}.
Bredereck et al.~\cite{BredereckEtAl2015} design an algorithm for WSM that runs in time~$2^{2^{\Oh(k\log k)}}\Oh(n)$, using Lenstra's algorithm.

Again, our result yields an \emph{exponential} improvement of that by Bredereck et al.~\cite{BredereckEtAl2014}, both in the dependence on the parameter and the size of the instance:
\begin{theorem}
\label{thm:weightedSetMulticover}
  There is an algorithm that solves \textsc{Weighted Set Multicover} in time \mbox{$k^{\Oh(k^2)} \Oh(\log n + \log w_{\max})$} for succinctly represented instances of $n$ sets over a universe of size $k$, where $w_{\max}$ is the maximum weight of any set.
\end{theorem}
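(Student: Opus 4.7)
The plan is to model \textsc{Weighted Set Multicover} as a combinatorial $n$\hy{}fold IP in the style of the \textsc{Closest String} construction recalled in the excerpt, and then invoke Theorem~\ref{thm:combinatorialnfold}. Group the input sets by the subset of the universe $U$ they cover; since $|U|=k$, there are at most $2^k$ such \emph{subset types}. For each subset type $S\subseteq U$, let $m_S$ be its total multiplicity in the succinct input, and let $w_1^{(S)}\leq\cdots\leq w_{m_S}^{(S)}$ be the sorted weights of sets of that type. Using $x_S$ sets of subset type $S$ most cheaply costs $c_S(x_S)=\sum_{i=1}^{x_S}w_i^{(S)}$; the marginal weight $w_{x_S+1}^{(S)}$ is non-decreasing, so $c_S$ is convex, $f(\vex)=\sum_S c_S(x_S)$ is separable convex, and an efficient evaluation oracle is available after a one-time sort. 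The continuous relaxation of the IP below is a convex program with $\OhOp{4^k}$ variables and $\OhOp{k}$ linear constraints, which the ellipsoid method solves in time $2^{\OhOp{k}}\OhOp{\log n + \log w_{\max}}$.

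Next I would take one brick per subset type, for $N=2^k$ bricks in the $n$\hy{}fold sense, and replicate the redundant-encoding trick from the \textsc{Closest String} IP in the excerpt. In brick $S$ I would introduce $2^k$ variables $z_S^{(S')}$ indexed by all $S'\subseteq U$, plus overcoverage-slack coordinates $y_S^{(j)}$ for $j\in U$ and one filler $v_S$, giving $t=2^k+k+1=\OhOp{2^k}$. Per-brick lower and upper bounds pin $z_S^{(S')}=0$ for $S'\neq S$ and impose $0\leq z_S^{(S)}\leq m_S$, while the $y_S^{(j)}$ and $v_S$ are non-negative with sufficiently large upper bounds. The brick-independent global matrix $D\in\{-1,0,1\}^{k\times t}$ is defined by $D_{j,S'}=[j\in S']$ on the subset columns, $D_{j,y^{(j')}}=-[j=j']$ on the slack columns, and $0$ on the filler column; summed over all bricks the $j$-th global row reads $\sum_S[j\in S]\,z_S^{(S)}-\sum_S y_S^{(j)}=d_j$, matching the coverage requirement $\sum_{S\ni j}x_S\geq d_j$. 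The local row $A=(1~\cdots~1)$ is satisfied by choosing the right-hand side $b_S$ large enough so that the filler $v_S$ absorbs the residual, giving $s=1$, $r=k$, $\Delta=1+\|D\|_\infty=2$.

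Applying Theorem~\ref{thm:combinatorialnfold} with $N=2^k$, $t=\OhOp{2^k}$, $r=k$, $\Delta=2$, and encoding length $L=\OhOp{\log n+\log w_{\max}}$ (since all right-hand sides, bounds, and values of $f$ are polynomial in $n$ and $w_{\max}$), the running time evaluates to $t^{\OhOp{r}}(\Delta r)^{\OhOp{r^2}}\cdot \OhOp{N^3 L}=(2^k)^{\OhOp{k}}\cdot k^{\OhOp{k^2}}\cdot 2^{\OhOp{k}}\cdot \OhOp{\log n+\log w_{\max}}=k^{\OhOp{k^2}}\,\OhOp{\log n+\log w_{\max}}$, and the single continuous-oracle call is absorbed into the same bound. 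The main technical obstacle is the one already overcome by the \textsc{Closest String} construction in the excerpt: the combinatorial $n$\hy{}fold format forces $D$ to be identical in every brick even though the coverage incidence vector $([j\in S])_{j\in U}$ of brick $S$ genuinely depends on~$S$; the redundant encoding circumvents this by enlarging $t$ to include a column for every possible incidence vector and switching off all but the correct one via per-brick upper bounds.
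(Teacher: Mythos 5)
Your proposal is correct and follows essentially the same route as the paper: group sets into at most $2^k$ types, encode the cheapest way to pick $x$ sets of a type as a convex prefix-sum function, use one brick per type with a full complement of $2^k$ variables of which all but one are pinned to zero by the box constraints, express the coverage demands as globally uniform constraints with $\Delta=O(1)$ and $r=k$, and invoke Theorem~\ref{thm:combinatorialnfold}. The only cosmetic difference is that you hand-build the slack and filler variables for the inequalities, whereas the paper invokes its generic transformations for inequalities in locally and globally uniform constraints.
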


\paragraph{Huge $n$-fold IP.}
Onn~\cite{Onn2014} introduces a high-multiplicity version of the standard $n$-fold IP problem~\eqref{eq:standard_nfold}, where the number of bricks is now given in binary.
It thus closely relates to the {\sc Cutting Stock} problem, the high-multiplicity version of {\sc Bin Packing} where the number of items for given size is given in binary; the complexity of {\sc Cutting Stock} for constantly many item sizes was a long-standing open problem that was recently shown to be polynomial-time solvable by Goemans and Rothvo{\ss}~\cite{GoemansRothvoss2014} and by Jansen and Klein~\cite{JansenK2017}.
Previously, \textsc{Huge $n$-fold IP} was shown to be fixed-parameter tractable when $D=I$ and $A$ is totally unimodular; using our result, we show that it is also fixed-parameter tractable when $D$ and~$A$ are arbitrary, but the size of variable domains is bounded by a parameter.


\subsection{Comparison with Lenstra's algorithm}
The basic idea behind Lenstra's algorithm is the following.
Given a system $A \vex \leq \veb$ it is possible to compute its volume and determine that it is either too large not to contain an integer point, or too small not to be flat in some direction.
In the first case we are done; in the second case we can take $d$ slices of dimension $d-1$ and recurse into them, achieving a $d^{\Oh(d)} n^{\Oh(1)}$ runtime.
Note that we only decide feasibility; optimization can be then done by binary search.
On the other hand, the basic idea behind our algorithm is the following.
We only focus on optimizing and later show that testing feasibility reduces to it.
Starting from some feasible solution, the crucial observation is that if there is a step improving the objective, there is one which does not modify many variables, and can be found quickly by dynamic programming.
Moreover, if the current solution is far from the optimum, then it is possible to make a long step, and polynomially many long steps will reach the optimum.

More concretely, consider the run of these two algorithms on an instance of {\sc Closest String} consisting in $k$ strings each of length $L$.
Lenstra's algorithm essentially either determines that the bounds are loose enough that there must exist a solution, or (oversimplifying) determines that there is a column type $\vef \in [k]^k$ and a character $e \in [k]$ such that there are at most $k^k$ consecutive choices for how many times the solution contains character $e$ at a column of type $\vef$.
Then, we recurse, obtaining a $2^{2^{\Oh(k \log k)}}\Oh(\log L)$-time algorithm.
On the other hand, our algorithm views the problem as an optimization problem, so we think of starting with a string of all blanks which is trivially at distance~$0$ from any string, and the goal is to fill in all blanks such that the result is still in distance at most $d$ from the input strings.
An augmenting step is a set of character swaps that decreases the number of blanks.
The crucial observation is that if an augmenting step exists, then there is also one only changing few characters, and it can be found in time $k^{\Oh(k^2)}\Oh(\log L)$.
Thus (omitting details), we can iteratively find augmenting steps until we reach the optimum.

\paragraph{Related work.}
\label{sec:relatedwork}
Our main inspiration are augmentation methods based on Graver bases, especially a fixed-parameter algorithm for $n$-fold IP of Hemmecke, Onn and Romanchuk~\cite{HemmeckeEtAl2013}.
Our result improves the runtime of their algorithm for a special case.
All the following related work is orthogonal to ours in either the achieved result, or the parameters used for it.

In fixed dimension, Lenstra's algorithm~\cite{Lenstra1983} was generalized for arbitrary convex sets and quasiconvex objectives by Khachiyan and Porkolab~\cite{KhachiyanPorkolab2000}.
The currently fastest algorithm of this kind is due to Dadush et al.~\cite{DadushEtAl2011}.
The first notable fixed-parameter algorithm for a non-convex objective is due to Lokshtanov~\cite{Lokshtanov2015}, who shows that optimizing a quadratic function over the integers of a polytope is fixed-parameter tractable if all coefficients are small.
Ganian and Ordyniak~\cite{GanianOrdyniak2016} and Ganian et al.~\cite{GanianEtAl2017} study the complexity of ILP with respect to structural parameters such as treewidth and treedepth, and introduce a new parameter called \emph{torso-width}.

Besides fixed-parameter tractability, there is interest in the (non)existence of kernels of ILPs, which formalize the (im)possibility of various preprocessing procedures.
Jansen and Kratsch~\cite{JansenKratsch2015} show that ILPs containing parts with simultaneously bounded treewidth and bounded domains are amenable to kernelization, unlike ILPs containing totally unimodular parts.
Kratsch~\cite{Kratsch2016} studies the kernelizability of sparse ILPs with small coefficients.

\section{Preliminaries}
\label{sec:preliminaries}
For positive integers $m,n$ we set $[m\,:\,n] = \{m,\ldots, n\}$ and $[n] = [1\,:\,n]$.
For a graph~$G$ we denote by $V(G)$ its set of vertices.
We write vectors in boldface (e.g., $\vex, \vey$) and their entries in normal font (e.g., the $i$-th entry of~$\vex$ is~$x_i$).
If~$A$ is a matrix, $A_r$ denotes its $r$-th column.
%
For a matrix $A \in \Z^{m \times n}$, vectors $\veb \in \Z^{m}$, $\vel, \veu \in \Z^{n}$ and a function $f: \Z^n \to \Z$, let $(IP)_{A, \veb, \vel, \veu, f}$ be the problem
\[
  \min\left\{f(\vex)\, \mid A\vex=\veb\,,\ \vel\leq\vex\leq\veu\,,\ \vex\in\Z^{n}\right\}.
\]
We say that $\vex$ is \emph{feasible} for $(IP)_{A, \veb, \vel, \veu, f}$ if $A\vex = \veb$ and $\vel \leq \vex \leq \veu$.

\paragraph{Graver bases and augmentation.}
Let us now introduce Graver bases and discuss how they can be used for optimization.
We also recall $n$-fold IPs; for background, we refer to the books of Onn~\cite{Onn2010} and De Loera et al.~\cite{DeLoeraEtAl2013}.

Let $\vex, \vey$ be $n$-dimensional integer vectors.
We call $\vex, \vey$ \emph{sign\hy{}compatible} if they lie in the same orthant, that is, for each $i \in [n]$ the sign of $x_i$ and $y_i$ is the same.
We call $\sum_i \veg^i$ a \emph{sign\hy{}compatible sum} if all $\veg^i$ are pair-wise sign\hy{}compatible.
Moreover, we write $\vey \sqsubseteq \vex$ if $\vex$ and $\vey$ are sign\hy{}compatible and $|y_i| \leq |x_i|$ for each $i \in [n]$, and write $\vey \sqsubset \vex$ if at least one of the inequalities is strict.
Clearly, $\sqsubseteq$ imposes a partial order called ``conformal order'' on $n$-dimensional vectors.
For an integer matrix $A \in \Z^{m \times n}$, its \emph{Graver basis} $\G(A)$ is the set of $\sqsubseteq$-minimal non-zero elements of the \emph{lattice} of $A$, $\ker_{\Z}(A) = \{\vez \in \Z^n \mid A \vez = \mathbf{0}\}$.
An important property of~$\G(A)$ is the following.
\begin{proposition}[{\cite[Lemma 3.2]{Onn2010}}]
  \label{prop:graver_conformal_sum}
  Every integer vector $\vex \neq \mathbf{0}$ with $A \vex = \mathbf{0}$ is a sign\hy{}compatible sum $\vex = \sum_i \veg^i$ of Graver basis elements $\veg^i \in \G(A)$, with some elements possibly appearing with repetitions.
\end{proposition}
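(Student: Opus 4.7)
The plan is to prove the proposition by strong induction on the $\ell_1$-norm $\|\vex\|_1 = \sum_{i=1}^n |x_i|$, which is a positive integer whenever $\vex \in \ker_\Z(A) \setminus \{\mathbf{0}\}$. In the base case, $\vex$ is itself $\sqsubseteq$-minimal in $\ker_\Z(A) \setminus \{\mathbf{0}\}$, so by definition $\vex \in \G(A)$ and the single-term sum $\vex = \vex$ is trivially a sign-compatible decomposition.

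For the inductive step I would argue as follows. Assume $\vex \in \ker_\Z(A) \setminus \{\mathbf{0}\}$ is not $\sqsubseteq$-minimal. Then by the definition of $\G(A)$ there exists some $\veg \in \G(A)$ with $\veg \sqsubset \vex$. I set $\vex' := \vex - \veg$ and verify three properties. First, $A \vex' = A\vex - A\veg = \mathbf{0}$, so $\vex' \in \ker_\Z(A)$. Second, $\vex' \sqsubseteq \vex$: in each coordinate $i$, the relation $\veg \sqsubseteq \vex$ says $g_i$ shares the sign of $x_i$ and $|g_i| \leq |x_i|$, which forces $x'_i = x_i - g_i$ to share the sign of $x_i$ and to satisfy $|x'_i| \leq |x_i|$. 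In particular $\vex'$ is sign-compatible with both $\vex$ and $\veg$, and the $\ell_1$-norm splits additively as $\|\vex\|_1 = \|\vex'\|_1 + \|\veg\|_1$. Third, since $\veg \neq \mathbf{0}$ (Graver elements are nonzero by definition) we get $\|\vex'\|_1 < \|\vex\|_1$, enabling the induction.

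Now I would apply the inductive hypothesis to $\vex'$: if $\vex' = \mathbf{0}$ then $\vex = \veg$ is already a single Graver basis element; otherwise $\vex'$ admits a decomposition $\vex' = \sum_j \veh^j$ with each $\veh^j \in \G(A)$ pair-wise sign-compatible and sign-compatible with $\vex'$. Because $\vex' \sqsubseteq \vex$, every $\veh^j$ is also sign-compatible with $\vex$; combined with $\veg \sqsubseteq \vex$, this puts all summands of $\vex = \veg + \sum_j \veh^j$ in a common orthant with $\vex$, so they are pair-wise sign-compatible, yielding the desired sign-compatible representation.

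The only mildly delicate point, which I expect to be the principal obstacle, is the coordinate-wise check that $x_i - g_i$ shares the sign of $x_i$ whenever $\veg \sqsubseteq \vex$. This is a routine case split on whether $x_i$ is positive, negative, or zero (with the $x_i = 0$ case forcing $g_i = 0$ via $|g_i| \leq |x_i|$), but it underpins both $\vex' \sqsubseteq \vex$ and the additivity $\|\vex\|_1 = \|\vex'\|_1 + \|\veg\|_1$ that drives the induction. Everything else reduces to linearity of $A$ and the definition of $\G(A)$ as the $\sqsubseteq$-minimal nonzero elements of $\ker_\Z(A)$.
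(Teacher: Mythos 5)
Your argument is correct: the induction on $\|\vex\|_1$, peeling off one Graver basis element $\veg \sqsubset \vex$ at a time and using that $\veg \sqsubseteq \vex$ forces $\vex - \veg$ to stay in the same orthant with strictly smaller $\ell_1$-norm, is exactly the standard proof of this fact. Note that the paper does not prove this statement at all --- it imports it verbatim from Onn's book (Lemma 3.2 there), whose proof follows the same route as yours. The only point you wave at slightly is the claim that a non-$\sqsubseteq$-minimal $\vex$ has a \emph{Graver} element strictly below it; strictly speaking the definition only hands you a nonzero lattice element $\vey \sqsubset \vex$, and you must take a $\sqsubseteq$-minimal element of the finite nonempty set $\{\vey \in \ker_{\Z}(A)\setminus\{\mathbf{0}\} : \vey \sqsubseteq \vex\}$ to land in $\G(A)$ --- a routine one-line addition, not a real gap.
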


Let $\vex$ be a feasible solution to $(IP)_{A, \veb, \vel, \veu, f}$.
We call $\veg$ a \emph{feasible step} if $\vex + \veg$ is feasible for $(IP)_{A, \veb, \vel, \veu, f}$.
Further, call a feasible step $\veg$ \emph{augmenting} if $f(\vex+\veg) < f(\vex)$.
An augmenting step $\veg$ and a \emph{step length} $\alpha \in \Z$ form an \emph{$\vex$\hy{}feasible step pair} with respect to a feasible solution $\vex$ if $\vel \le \vex + \alpha\veg \le \veu$.
An augmenting step~$\veg$ and a step length $\alpha \in \Z$ form a \emph{Graver\hy best step} if $f(\vex + \alpha \veg) \leq f(\vex + \alpha' \tilde{\veg})$ for all $\vex$\hy{}feasible step pairs $(\tilde{\veg},\alpha') \in \G(A)\times \Z$.

The \emph{Graver\hy{}best augmentation procedure} for $(IP)_{A, \veb, \vel, \veu, f}$ with given feasible solution $\vex_0$ works as follows:
\begin{enumerate}
  \item \label{step1}If there is no Graver\hy{}best step for $\vex_0$, return it as optimal.
  \item If a Graver\hy{}best step $(\alpha, \veg)$ for $\vex_0$ exists, set $\vex_0 := \vex_0 + \alpha \veg$ and go to \ref{step1}. 
\end{enumerate}
\begin{proposition}[{\cite[implicit in Theorem 3.4.1]{DeLoeraEtAl2013}}]
\label{prop:graverbest}
  Given a feasible solution~$\vex_0$ for $(IP)_{A, \veb, \vel, \veu, f}$ where $f$ is separable convex, the Graver\hy{}best augmentation procedure finds an optimum of $(IP)_{A, \veb, \vel, \veu, f}$ in at most $2n-2 \log M$ steps, where $M = f(\vex_0) - f(\vex^*)$ and $\vex^*$ is any minimizer of $f$.
\end{proposition}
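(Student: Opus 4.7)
The plan is to establish two things separately: (i) the stopping criterion is correct, i.e.\ the procedure only halts at an optimum; and (ii) the geometric-decrease bound on the number of iterations.

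\textbf{Correctness of the stopping criterion.} Suppose $\vex_0$ is feasible but not optimal, and let $\vex^*$ be a minimizer. Then $\vez := \vex^* - \vex_0$ is a nonzero vector with $A\vez=\mathbf{0}$, so by Proposition~\ref{prop:graver_conformal_sum} it admits a sign-compatible decomposition $\vez = \sum_{i} \veg^i$ with each $\veg^i \in \G(A)$ (repetitions allowed). Sign-compatibility combined with $\vel \le \vex_0, \vex^* \le \veu$ implies that every partial sum $\vex_0 + \sum_{j \le k} \veg^j$ also lies in $[\vel,\veu]$, so each $(\veg^i, 1)$ is an $\vex_0$-feasible step pair. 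Walking along this chain and applying separable convexity coordinate-wise (the marginal of a univariate convex function is non-decreasing) gives the telescoping inequality $\sum_i \bigl[f(\vex_0 + \veg^i) - f(\vex_0)\bigr] \le f(\vex^*) - f(\vex_0) < 0$, so at least one $\veg^i$ is strictly augmenting. Hence a Graver-best step exists whenever $\vex_0$ is not optimal.

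\textbf{Iteration bound.} The core lemma is that a Graver-best augmentation shrinks the optimality gap $g(\vex) := f(\vex) - f(\vex^*)$ by a constant multiplicative factor. At the current feasible $\vex$, decompose $\vex^* - \vex = \sum_{i=1}^{K} \lambda_i \veg^i$ into a sign-compatible combination with distinct $\veg^i \in \G(A)$ and positive integer multipliers $\lambda_i$, where $K \le 2n-2$ by the integer Carath\'eodory-type theorem of Seb\H{o}/Cook--Fonlupt--Schrijver applied to the Graver basis restricted to the orthant containing $\vex^* - \vex$ (in that orthant it is a Hilbert basis of a rational pointed cone of dimension $\le n$). Conformality guarantees that each $(\veg^i, \lambda_i)$ is an $\vex$-feasible step pair, and another telescoping argument using convexity of the univariate pieces $f^i_j$ yields
\[
\sum_{i=1}^{K} \bigl[f(\vex + \lambda_i \veg^i) - f(\vex)\bigr] \;\le\; f(\vex^*) - f(\vex) \;=\; -g(\vex).
\]
Therefore some $(\veg^i, \lambda_i)$ attains an improvement of at least $g(\vex)/K \ge g(\vex)/(2n-2)$, and a Graver-best step is no worse. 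After $t$ iterations the gap is at most $M\bigl(1 - 1/(2n-2)\bigr)^t \le M e^{-t/(2n-2)}$; since $f$ is integer-valued the procedure must terminate at the optimum once this quantity drops below $1$, giving the claimed bound of $O(n \log M)$ iterations.

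\textbf{Main obstacle.} The nontrivial ingredient is the Carath\'eodory-type bound $K \le 2n-2$ on the length of a conformal Graver decomposition: it is proved by passing to the orthant of $\vex^* - \vex$ and applying the classical bound for Hilbert bases of pointed rational cones. The other subtle point is the telescoping used twice above; it is where separability of $f$ really matters, since we need marginals of each $f^i_j$ along the individual summands to be dominated by the marginals taken at the shifted partial sums, a monotonicity that is exactly the one-dimensional convexity of $f^i_j$. Once these two pieces are in place, the ``best Graver step'' procedure follows by a standard geometric-progression argument.
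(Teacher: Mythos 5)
Your proposal is correct and follows essentially the same route as the source the paper cites for this proposition (the paper itself gives no proof, importing it from De Loera et al.): the orthant/Hilbert-basis argument giving the Carath\'eodory bound of $2n-2$ on conformal Graver decompositions, the superadditivity of separable convex functions over sign-compatible sums, and the resulting geometric decrease of the optimality gap. The only cosmetic remark is that the bound in the statement should be read as $(2n-2)\log M$, which is exactly what your $M(1-1/(2n-2))^t<1$ calculation yields.
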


\paragraph{\texorpdfstring{$n$-fold}{n-fold} IP.}
The structure of $E^{(n)}$ (in problem~\eqref{eq:standard_nfold}) allows us to divide the $nt$ variables of $\vex$ into $n$ \textit{bricks} of size~$t$.
We use subscripts to index within a brick and superscripts to denote the index of the brick, i.e.,~$x_j^i$ is the $j$-th variable of the $i$-th brick with $j \in [t]$ and $i \in [n]$.

\section{Combinatorial \texorpdfstring{$\boldsymbol{n}$-fold}{n-fold} IPs}
\label{sec:maintheorem}
This section is dedicated to proving Theorem~\ref{thm:combinatorialnfold}.
We fix an instance of combinatorial $n$\hy fold IP, that is, a tuple $(n, D, \veb, \vel, \veu, f)$.


\subsection{Graver complexity of combinatorial $n$-fold IP}
The key property of the $n$-fold product $E^{(n)}$ is that, for any $n \in \N$, the number of nonzero bricks of any $\veg \in \cGEn$ is bounded by some constant $g(E)$ called the \emph{Graver complexity of~$E$}.
A~proof is given for example by Onn~\cite[Lemma 4.3]{Onn2010}; it goes roughly as follows.
Consider any $\veg \in \cGEn$ and take its restriction to its nonzero bricks $\bar{\veg}$.
By Proposition~\ref{prop:graver_conformal_sum}, each brick $\bar{\veg}^j$ can be decomposed into elements from~$\G(A)$, giving a vector $\veh$ whose bricks are elements of $\G(A)$.
Then, consider a compact representation $\vev$ of~$\veh$ by counting how many times each element from $\G(A)$ appears.
Since $\veg \in \cGEn$ and $\veh$ is a decomposition of its nonzero bricks, we have that $\sum_{j} D \veh^j = 0$.
Let $G$ be a matrix with the elements of $\G(A)$ as columns.
It is not difficult to show that $\vev \in \G(DG)$.
Since $\|\vev\|_1$ is an upper bound on the number of bricks of~$\veh$ and thus of nonzero bricks of $\veg$ and clearly does not depend on $n$, \mbox{$g(E) = \max_{v \in \G(DG)} \|\vev\|_1$} is finite.

Let us make precise two observations from this proof.
\begin{lemma}[{\cite[Lemma 3.1]{HemmeckeEtAl2011}, \cite[implicit in proof of Lemma 4.3]{Onn2010}}]
\label{lem:partialsums}
  Let $(\veg^1, \dots, \veg^n) \in \cGEn$.
  Then for $i = 1,\hdots,n$ there exist vectors $\veh^{i,1}, \dots, \veh^{i,n_i} \in \G(A)$ such that $\veg^i = \sum_{k=1}^{n_i} \veh^{i,k}$, and $\sum_{i=1}^n n_i \leq g(E)$.
\end{lemma}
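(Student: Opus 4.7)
My plan is to formalize the argument sketched in the paragraph immediately preceding the lemma. Fix $\veg = (\veg^1, \dots, \veg^n) \in \G(E^{(n)})$. Reading off the $n$ block-diagonal rows of $E^{(n)} \veg = \mathbf{0}$ gives $A \veg^i = \mathbf{0}$ for every $i \in [n]$, so each brick lies in $\ker_{\Z}(A)$. Applying Proposition~\ref{prop:graver_conformal_sum} to $A$ yields, for each $i$, a sign-compatible decomposition
\[
\veg^i \;=\; \sum_{k=1}^{n_i} \veh^{i,k}, \qquad \veh^{i,k} \in \G(A),
\]
with $n_i = 0$ allowed when $\veg^i = \mathbf{0}$. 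This gives the existence part of the statement, so the entire content of the lemma reduces to the bound $\sum_i n_i \le g(E)$.

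To prove the bound, I would aggregate the family $\{\veh^{i,k}\}_{i,k}$ into a multiplicity vector: list the (finitely many) elements of $\G(A)$ as the columns of a matrix $G$, and let $\vev \in \Z^{|\G(A)|}_{\ge 0}$ record how many times each column of $G$ appears across the $\veh^{i,k}$. Then $\|\vev\|_1 = \sum_i n_i$. The top block of $E^{(n)} \veg = \mathbf{0}$ says $\sum_i D \veg^i = \mathbf{0}$, hence
\[
D G \vev \;=\; \sum_{i,k} D \veh^{i,k} \;=\; \sum_i D \veg^i \;=\; \mathbf{0},
\]
so $\vev \in \ker_{\Z}(DG)$. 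Since the definition of Graver complexity used just above the lemma is $g(E) = \max\{\|\vev\|_1 : \vev \in \G(DG)\}$, it suffices to show $\vev \in \G(DG)$.

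The main obstacle, and essentially the only nontrivial step, is verifying $\sqsubseteq$-minimality of $\vev$ in $\ker_{\Z}(DG) \setminus \{\mathbf{0}\}$. I would argue by contradiction: suppose $\mathbf{0} \ne \vev' \sqsubset \vev$ satisfies $DG \vev' = \mathbf{0}$. Since $\vev' \ge \mathbf{0}$ with $v'_j \le v_j$ for every $j$, I can choose, within each brick $i$, a subset $S_i \subseteq [n_i]$ such that $\sum_i |\{k \in S_i : \veh^{i,k} = G_j\}| = v'_j$ for every column index $j$ (just delete $v_j - v'_j$ of the original occurrences of $G_j$, arbitrarily). Setting $\veg'^{i} := \sum_{k \in S_i} \veh^{i,k}$, each $\veg'^{i}$ is a sub-sum of a sign-compatible sum, so $\veg'^{i} \sqsubseteq \veg^{i}$; by sign-compatibility a nonempty sub-sum of nonzero vectors cannot vanish, so $\veg' = \mathbf{0}$ would force every $S_i = \emptyset$ and hence $\vev' = \mathbf{0}$. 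Finally $A \veg'^{i} = \mathbf{0}$ by construction and $\sum_i D \veg'^{i} = DG \vev' = \mathbf{0}$, so $\veg' \in \ker_{\Z}(E^{(n)}) \setminus \{\mathbf{0}\}$ with $\veg' \sqsubset \veg$ (strict at any brick containing a removed term), contradicting $\veg \in \G(E^{(n)})$. Thus $\vev \in \G(DG)$ and $\sum_i n_i = \|\vev\|_1 \le g(E)$. The delicate point I would double-check is the reassembly: $\vev$ forgets the brick structure, and I need the freedom to redistribute the missing occurrences among bricks without destroying the sub-sum property within each brick, which works precisely because the decomposition inside each brick is sign-compatible.
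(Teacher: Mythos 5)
Your proof is correct and follows exactly the route the paper sketches in the paragraph preceding the lemma (the paper itself only cites the result and leaves the key step as ``it is not difficult to show that $\vev \in \G(DG)$''), and your contradiction argument for the $\sqsubseteq$-minimality of $\vev$ — reassembling a strictly smaller $\veg' \sqsubset \veg$ from a hypothetical $\vev' \sqsubset \vev$ using the sign-compatibility of the per-brick decompositions — is precisely the missing verification, carried out correctly. The only caveat is definitional rather than mathematical: you rely, as the paper's own sketch does, on taking $g(E) = \max_{\vev \in \G(DG)} \|\vev\|_1$ as the working definition of Graver complexity, which is consistent with the paper's usage even though Lemma~\ref{lem:ge_estimate} phrases it as an upper bound.
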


\begin{lemma}[{\cite[Lemma 6.1]{HemmeckeEtAl2011}, \cite[implicit in proof of Lemma 4.3]{Onn2010}}]
\label{lem:ge_estimate}
  Let $D \in \Z^{r \times t}$, $A \in \Z^{s \times t}$ and let $G \in \Z^{t \times p}$ be the matrix whose columns are the elements of $\G(A)$.
  Then $|\G(A)| \leq \|A\|_\infty^{st}$, and for $E = \left(\begin{smallmatrix}D\\A\end{smallmatrix}\right)$ it holds
  \[
    g(E) \leq \max_{\vev \in \G(DG)} \|\vev\|_1 \leq \|A\|_\infty^{st} \cdot (r \|DG\|_\infty)^r.
  \]
\end{lemma}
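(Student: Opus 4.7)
The plan is to prove the two claims of the lemma in sequence: first the cardinality bound on $\G(A)$, then the chain of inequalities for $g(E)$, which reuses the first bound.

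For $|\G(A)|\le \|A\|_\infty^{st}$, I would invoke the classical Cram\'er--Hadamard bound on circuits: every circuit of $A$ is supported in at most $s+1$ coordinates, with entries bounded by subdeterminants of $s\times s$ submatrices, hence by $\|A\|_\infty^s$ up to polynomial factors. A conformal-minimality argument (any strictly larger $\veg \in \G(A)$ would conformally dominate a circuit) then upgrades this to $\|\veg\|_\infty \le \|A\|_\infty^s$ for every $\veg \in \G(A)$, absorbing polynomial factors into the exponent using $\|A\|_\infty\ge 1$. Counting integer points in the resulting box in $\Z^t$ then yields $|\G(A)|\le (2\|A\|_\infty^s+1)^t \le \|A\|_\infty^{st}$.

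Next, I would establish $g(E) \le \max_{\vev\in\G(DG)}\|\vev\|_1$ by making rigorous the sketch preceding the lemma. Fix $\veg=(\veg^1,\ldots,\veg^n)\in\cGEn$ and apply Proposition~\ref{prop:graver_conformal_sum} in each nonzero brick to write $\veg^i = \sum_{k=1}^{n_i}\veh^{i,k}$ as a sign-compatible sum with $\veh^{i,k}\in\G(A)$. Enumerating $\G(A)=\{g_1,\ldots,g_p\}$, let $\vev\in\Z_{\ge 0}^p$ be the multiplicity vector $v_e = |\{(i,k):\veh^{i,k}=g_e\}|$. Then $G\vev = \sum_i\veg^i$, so $DG\vev = D\sum_i\veg^i = 0$ because this is the top block of $\A\veg=0$. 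The key technical step is to show $\vev\in\G(DG)$: if not, write $\vev=\vev_1+\vev_2$ conformally with both $\vev_j\in\ker(DG)\setminus\{0\}$; for each $e$, partition the set $\{(i,k):\veh^{i,k}=g_e\}$ into parts of sizes $v_{1,e}$ and $v_{2,e}$, and let $\veg_j^i$ be the sum of those $\veh^{i,k}$ assigned to side $j$. Each $\veg_j^i$ is a subsum of the sign-compatible sum for $\veg^i$, so $\veg_j\sqsubseteq\veg$; also $A\veg_j^i=0$ since each $\veh^{i,k}\in\ker A$, and $D\sum_i\veg_j^i = DG\vev_j = 0$, so $\veg_j\in\ker\A$, contradicting $\veg\in\cGEn$. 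Hence $\vev\in\G(DG)$, and since $\veg$ has at most $\sum_i n_i = \|\vev\|_1$ nonzero bricks, the first inequality follows.

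Finally, for the norm estimate the cardinality bound gives $p = |\G(A)|\le\|A\|_\infty^{st}$, which bounds the support of any $\vev \in \G(DG)\subseteq\Z^p$, and applying the same $\infty$-norm Graver bound to $DG\in\Z^{r\times p}$ (which has $r$ rows) gives $\|\vev\|_\infty\le(r\|DG\|_\infty)^r$; multiplying, $\|\vev\|_1\le p\|\vev\|_\infty\le\|A\|_\infty^{st}(r\|DG\|_\infty)^r$. The main obstacle is the conformal-minimality step for $\vev\in\G(DG)$: a splitting of $\vev$ must be pulled back to a splitting of $\veg$ respecting brick boundaries. This works precisely because all pieces inside a single brick are pairwise sign-compatible (as they form a sign-compatible sum for $\veg^i$), so any reassignment of pieces between the two sides keeps each brick sign-compatible with $\veg^i$, placing $\veg_j$ conformally below $\veg$.
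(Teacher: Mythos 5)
First, a point of orientation: the paper does not prove this lemma at all --- it is imported from Hemmecke--Onn--Romanchuk and from Onn's book, and the only in-paper material is the informal sketch in the paragraph preceding Lemma~\ref{lem:partialsums}. So you are reconstructing a cited proof rather than competing with one. Your reconstruction of the substantive middle inequality $g(E)\le\max_{\vev\in\G(DG)}\|\vev\|_1$ is correct and is exactly the intended argument: decompose each brick via Proposition~\ref{prop:graver_conformal_sum}, pass to the multiplicity vector $\vev$, observe $DG\vev=\mathbf{0}$, and certify $\vev\in\G(DG)$ by pulling any conformal splitting $\vev=\vev_1+\vev_2$ back to a splitting $\veg=\veg_1+\veg_2$ with both parts in $\ker_{\Z}(E^{(n)})$ and $\veg_j\sqsubseteq\veg$. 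You correctly isolate the one delicate point, namely that an arbitrary reassignment of pieces between the two sides stays conformal because the pieces within a brick are pairwise sign-compatible; the same fact also guarantees $\veg_j\neq\mathbf{0}$ whenever $\vev_j\neq\mathbf{0}$ (no cancellation inside a brick), which is needed to contradict $\sqsubseteq$-minimality and is worth stating explicitly.

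The two numerical bookends are where the proposal does not close. For $|\G(A)|\le\|A\|_\infty^{st}$, your box-counting yields $(2\|A\|_\infty^s+1)^t$, and the final inequality $(2\|A\|_\infty^s+1)^t\le\|A\|_\infty^{st}$ is false for every value of $\|A\|_\infty$, since the base on the left strictly exceeds $\|A\|_\infty^s$; no ``absorption of polynomial factors'' repairs this. In fact the target bound is itself false at $\|A\|_\infty=1$: by Lemma~\ref{lem:GA_structure}, $|\G(\mathbf{1}^{\transpose})|=t(t-1)$ while $\|A\|_\infty^{st}=1$. So this clause of the lemma is a loose transcription from the literature; you should flag the discrepancy (and prove a corrected form with a larger base) rather than assert an inequality that does not hold. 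Note the paper never actually relies on $\|A\|_\infty^{st}$ --- Lemma~\ref{lem:ge_bound} substitutes the exact value $p=t(t-1)$. Second, the estimate $\|\vev\|_\infty\le(r\|DG\|_\infty)^r$ is not ``the same bound'' as the one you invoked for $\G(A)$ (that one had no $r^r$-type factor), and it is the genuinely load-bearing ingredient of the chain; it needs its own justification or citation, e.g.\ Cram\'er/Hadamard bounds on the circuits of the $r$-row matrix $DG$ combined with the fact that Graver elements are conformal sums of circuits, or a pigeonhole argument on partial column sums for the nonnegative $\vev$ that actually arise. Your factorization $\|\vev\|_1\le p\cdot\|\vev\|_\infty$ is, however, exactly how the cited bound is structured, so the overall route matches the intended one.
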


Notice that this bound on $g(E)$ is exponential in $t$.
Our goal now is to improve the bound on $g(E)$ in terms of $t$, exploiting the simplicity of the matrix~$A$ in combinatorial $n$-fold IPs.

To see this, we will need to understand the structure of $\G(\mathbf{1}^{\transpose})$:
\begin{lemma}
\label{lem:GA_structure}
  It holds that
  \begin{itemize}
    \item $\G(\mathbf{1}^{\transpose}) = \{\veg \mid \veg \mbox{ has one $1$ and one $-1$ and $0$ otherwise}\} \subseteq \Z^t$,
    \item $p = |\G(\mathbf{1}^{\transpose})| = t(t-1)$,
    \item and $\|\veg\|_1 = 2$ for all $\veg \in \G(\mathbf{1}^{\transpose})$.
  \end{itemize}
\end{lemma}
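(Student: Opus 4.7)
The plan is to unfold the definition directly: $\G(\mathbf{1}^{\transpose})$ is the set of $\sqsubseteq$-minimal nonzero elements of the lattice $\ker_\Z(\mathbf{1}^{\transpose}) = \{\vez \in \Z^t \mid \sum_{i=1}^t z_i = 0\}$. Once the first bullet is established, the other two are immediate: the $+1$-position can be chosen in $t$ ways and the $-1$-position in $t-1$ ways, and each such vector has $\ell_1$-norm $1 + 1 = 2$.

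For the $\supseteq$ inclusion, I would take a vector $\veg$ with one entry equal to $+1$, one entry equal to $-1$, and zeros elsewhere. It clearly lies in $\ker_\Z(\mathbf{1}^{\transpose})$. For $\sqsubseteq$-minimality, suppose $\veh \sqsubseteq \veg$ is nonzero. Sign-compatibility together with $|h_i| \leq |g_i|$ forces the support of $\veh$ to be contained in the two nonzero positions of $\veg$, with entry in $\{0,1\}$ at the $+1$-position and $\{0,-1\}$ at the $-1$-position. For $\veh$ to lie in the kernel (i.e., have zero coordinate sum), both positions must be nonzero, giving $\veh = \veg$.

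For the $\subseteq$ inclusion, I would pick any $\veg \in \G(\mathbf{1}^{\transpose})$. Since $\veg \neq \mathbf{0}$ and $\sum_i g_i = 0$, there is an index $i$ with $g_i \geq 1$ and an index $j$ with $g_j \leq -1$. Define $\veh := \mathbf{e}_i - \mathbf{e}_j$, where $\mathbf{e}_k$ is the $k$-th standard unit vector. Then $\veh$ is a nonzero element of $\ker_\Z(\mathbf{1}^{\transpose})$ and it is sign-compatible with $\veg$ with $|h_k| \leq |g_k|$ for every $k \in [t]$, so $\veh \sqsubseteq \veg$. Minimality of $\veg$ forces $\veh = \veg$, which means $\veg$ has exactly the two nonzero entries $g_i = 1$ and $g_j = -1$.

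There is no real obstacle here; the proof is essentially a direct consequence of unfolding the definitions. The only delicate point is the minimality argument, which hinges on the observation that any strict $\sqsubseteq$-reduction of a nonzero vector of this form would still have zero coordinate sum, and hence could not be strictly smaller unless it is the zero vector.
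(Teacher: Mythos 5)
Your proof is correct and follows essentially the same route as the paper: the paper also observes that the claimed vectors are clearly $\sqsubseteq$-minimal in $\ker_{\Z}(\mathbf{1}^{\transpose})$, and for the converse picks a positive entry $h_i$ and a negative entry $h_j$ of any purported other minimal element and exhibits $\mathbf{e}_i - \mathbf{e}_j$ as a strictly conformally smaller kernel vector. Your write-up merely spells out the ``clear'' minimality direction in more detail; there is no substantive difference.
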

\begin{proof}
  Observe that the claimed set of vectors is clearly $\sqsubseteq$-minimal in $\ker_{\Z}(\mathbf{1}^{\transpose})$.
  We are left with proving there is no other non-zero $\sqsubseteq$-minimal vector in $\ker_{\Z}(\mathbf{1}^{\transpose})$.
  For contradiction assume there is such a vector $\veh$.
  Since it is non-zero, it must have a positive entry $h_i$.
  On the other hand, since $\mathbf{1}^{\transpose} \veh = \mathbf{0}$, it must also have a negative entry $h_j$.
  But then $\veg$ with $g_i = 1$, $g_j = -1$ and $g_k = 0$ for all $k \not\in \{i,j\}$ is $\veg \sqsubset \veh$, a contradiction.
  The rest follows.
\end{proof}

With this lemma in hand, we can prove that:
\begin{lemma}
\label{lem:ge_bound}
  Let $D \in \Z^{r \times t}$, $E = \left(\begin{smallmatrix}D\\\mathbf{1}^{\transpose}\end{smallmatrix}\right)$, and $\Delta = 1 + \|D\|_\infty$.
  Then, $g(E) \leq t^2 (2 r \Delta)^r$.
\end{lemma}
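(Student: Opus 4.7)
The plan is to revisit the proof sketch of Lemma \ref{lem:ge_estimate} that precedes it in the text, and then plug in the structural information about $\G(\mathbf{1}^{\transpose})$ supplied by Lemma \ref{lem:GA_structure}. The point of the current lemma is that the loose factor $\|A\|_\infty^{st}$ in the general bound becomes trivially $1$ when $A=\mathbf{1}^{\transpose}$ (so it conveys no information about the size of $\G(A)$), but the exact count $|\G(\mathbf{1}^{\transpose})| = t(t-1)$ together with the very small $\ell_\infty$-norm of columns of $G$ will give us the sharper bound $t^{2}(2r\Delta)^{r}$, with the $(2r\Delta)^{r}$ factor reflecting the $r$ rows of $D$ and the $t^{2}$ factor reflecting the number of Graver elements of $\mathbf{1}^{\transpose}$.

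Concretely, I would proceed as follows. First, recall from the argument that precedes Lemma \ref{lem:ge_estimate} that $g(E) \le \max_{\vev \in \G(DG)} \|\vev\|_1$, where $G \in \Z^{t \times p}$ is the matrix whose columns are the elements of $\G(\mathbf{1}^{\transpose})$. Next, invoke Lemma \ref{lem:GA_structure} to get $p = t(t-1) \le t^{2}$ and to conclude that every column of $G$ has exactly one entry equal to $+1$, one equal to $-1$, and the rest zero. Hence every column of $DG$ is of the form $D_i - D_j$ for columns $D_i,D_j$ of $D$, so
\[
\|DG\|_\infty \;\le\; 2\|D\|_\infty \;\le\; 2\Delta.
\]

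Then I would bound $\|\vev\|_1$ for $\vev \in \G(DG)$ by combining the crude entrywise bound $\|\vev\|_\infty \le (r\|DG\|_\infty)^r \le (2r\Delta)^r$ (which is exactly the bound on coordinates of Graver elements of an $r$-row matrix already used inside Lemma \ref{lem:ge_estimate}) with the fact that $\vev$ has only $p \le t^{2}$ coordinates. This yields
\[
\|\vev\|_1 \;\le\; p \cdot \|\vev\|_\infty \;\le\; t^{2}(2r\Delta)^{r},
\]
and taking the maximum over $\vev \in \G(DG)$ gives the claim.

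I do not anticipate a serious obstacle: all of the heavy lifting is in Lemma \ref{lem:ge_estimate} and in the computation of $\G(\mathbf{1}^{\transpose})$ already done in Lemma \ref{lem:GA_structure}. The only subtlety is to notice that in the generic bound of Lemma \ref{lem:ge_estimate} the factor $\|A\|_\infty^{st}$ plays the role of ``number of columns of $G$'', and when $A = \mathbf{1}^{\transpose}$ this must be replaced by the actual count $t(t-1)$ from Lemma \ref{lem:GA_structure}; otherwise we would lose the correct polynomial-in-$t$ leading factor.
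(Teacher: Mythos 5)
Your proposal is correct and matches the paper's proof: both plug $p = t(t-1) \le t^2$ and $\|DG\|_\infty \le 2\|D\|_\infty \le 2\Delta$ (via the structure of $\G(\mathbf{1}^{\transpose})$ from Lemma~\ref{lem:GA_structure}) into the bound $g(E) \le \max_{\vev\in\G(DG)}\|\vev\|_1 \le p\cdot(r\|DG\|_\infty)^r$ underlying Lemma~\ref{lem:ge_estimate}. Your explicit remark that the generic factor $\|A\|_\infty^{st}$ must be replaced by the true count $t(t-1)$ of columns of $G$ is exactly the (implicit) point of the paper's one-line argument.
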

\begin{proof}
  We simply plug the correct values into the bound of Lemma~\ref{lem:ge_estimate}.
  By Lemma~\ref{lem:GA_structure}, $p = t(t-1) \le t^2$.
  Also, $\|DG\|_\infty \leq \max_{\veg \in \G(\mathbf{1}^{\transpose})} \left\{ \|D\|_\infty \cdot \|\veg\|_1 \right\} \leq 2\Delta$ where the last inequality follows from $\|\veg\|_1 = 2$ for all $\veg \in \G(\mathbf{1}^{\transpose})$, again by Lemma~\ref{lem:GA_structure}.
\end{proof}

\subsection{Dynamic programming}
\label{sec:combNFoldDP}
Hemmecke, Onn and Romanchuk~\cite{HemmeckeEtAl2013} devise a clever dynamic programming algorithm to find augmenting steps for a feasible solution of an $n$-fold IP.
Lemma~\ref{lem:partialsums} is key in their approach, as they continue by building a set $Z(E)$ of all sums of at most $g(E)$ elements of $\G(A)$ and then use it to construct the dynamic program.
However, such a set $Z(E)$ would clearly be of size exponential in $t$---too large to achieve our single-exponential run times.
In their dynamic program, layers correspond to partial sums of elements of~$\G(A)$.

Our insight is to build a different dynamic program.
In our dynamic program, we will exploit the simplicity of $\G(A) = \G(\mathbf{1}^{\transpose})$ so that the layers will immediately correspond to the coordinates $h^i_j$ of an augmenting vector~$\veh$.
Additionally, we also differ in how we enforce feasibility with respect to the upper rows $(D~D~\cdots~D)$.

We now give the details of our approach.
Let $\Sigma(E) = \prod_{j = 1}^r \left[-2\Delta \cdot g(E) \,:\, 2\Delta \cdot g(E) \right]$ be the \emph{signature set of $E$} whose elements are \emph{signatures}.
Essentially, we will use the signature set to keep track of partial sums of prefixes of the augmenting vector $\veh$ to ensure that it satisfies $D \veh = \mathbf{0}$.
Crucially, we notice that to ensure $D \veh = \mathbf{0}$, it suffices to remember the partial sum of the prefixes of $\veh$ multiplied by $D$, thus shrinking them to dimension $r$.
This is another insight which allows us to avoid the exponential dependence on $t$.
Note that $\left|\Sigma(E)\right| \leq \left(1 + 4 g(E) \Delta \right)^r$.

\begin{definition}[Augmentation graph]
\label{def:augmentationGraph}
  Let $\vex$ be a feasible solution for $(IP)_{E^{(n)}, \veb, \vel, \veu, f}$ and let ${\alpha \in \N}$.
  Their \emph{augmentation graph} $DP(\vex, \alpha)$ is a vertex-weighted directed layered graph with two distinguished vertices $S$ and~$T$ called the \emph{source} and the \emph{sink}; and $nt$ layers $\L(1,1),\ldots, \L(n,t)$ structured according to the bricks, such that for all $i \in [n]$, $j \in [t]$, 
  \[
    \L(i,j) = (i,j) \times [-g(E) \, :\, g(E)] \times [-g(E) \, :\, g(E)] \times \Sigma(E).
  \]
  Thus, each vertex is a tuple $\big(i,j, h^i_j, \beta^i_j, \vesigma^i_j\big)$, with the following meaning:
  \begin{itemize}
    \item $i \in [n]$ is the index of the brick,
    \item $j \in [t]$ is the position within the brick,
    \item $h^i_j \in [-g(E) \,:\, g(E)]$ is the value of the corresponding coordinate of a proposed augmenting vector $\veh$,
    \item $\beta^i_j \in [-g(E) \,:\, g(E)]$ is a brick prefix sum  $\sum_{\ell=1}^j h^i_j$ of the proposed augmenting vector $\veh$, and,
    \item $\vesigma^i_j \in \Sigma(E)$ is the signature, representing the prefix sum $\sum_{k=1}^i D \veh^k + \sum_{\ell=1}^j D_\ell h^i_\ell$.
  \end{itemize}
  A vertex $\big(i,j, h^i_j, \beta^i_j, \vesigma^i_j\big)$ has weight $f^i_j(\alpha h^i_j + x^i_j) - f^i_j(x^i_j)$.
 
  Let $S = \big(0,t,0,0, \mathbf{0}\big)$ and $T = \big(n+1,1,0,0, \mathbf{0}\big)$, where the last coordinate is an $r$-dimensional all-zero vector.
 
  \noindent\textit{Edges to the first layer of a brick.}
  Every vertex $\big(i,t, h^i_t, 0, \vesigma^i_t\big)$ has edges to each vertex $\big(i+1,1, h^{i+1}_1, h^{i+1}_1, \vesigma^{i+1}_1\big)$ for which $l^{i+1}_1 \leq x^{i+1}_1 + \alpha h^{i+1}_1 \leq u^{i+1}_1$ and $\vesigma^{i+1}_1 = \vesigma^{i}_t + D_1 h^{i+1}_1$.
  Recall that $D_j$ is the $j$-th column of matrix~$D$.
  We emphasize that there are no other outgoing edges from layer $\L(i,t)$ to layer $\L(i+1,1)$.

  \noindent\textit{Edges within a brick.}
  Every vertex $\big(i,j, h^i_j, \beta^i_j, \vesigma^i_j)$ with $j < t$ has edges to each vertex $\big(i, j+1, h^i_{j+1}, \beta^i_{j+1}, \vesigma^i_{j+1} \big)$ for which
  \begin{itemize}
    \item $l^{i}_{j+1} \leq x^{i}_{j+1} + \alpha h^{i}_{j+1} \leq u^{i}_{j+1}$,
    \item $\beta^{i}_{j+1} = \beta^i_j + h^{i}_{j+1}$, with $\beta^i_{j+1} \in \left[-g(E)\colon g(E)\right]$, and,
    \item $\vesigma^{i}_{j+1} = \vesigma^{i}_{j} + D_{j+1} h^{i}_{j+1}$.
  \end{itemize}
\end{definition}
See Fig.~\ref{fig:DPgrahpSchema} for a scheme of the augmentation graph.

Note that by the bounds on $g(E)$ by Lemma~\ref{lem:ge_bound},
the number of vertices in each layer of $DP(\vex, \alpha)$ is bounded by
\begin{equation}
\label{LmaxBound}
  L_{\max} \leq g(E)^2 \cdot \left|\Sigma\right|
           \leq \left(t^2(2r\Delta)^r\right)^2 \cdot \left(1 + 4\Delta \cdot \left(t^2(2r\Delta)^r\right) \right)^r
           \leq \left(t^2 (2r\Delta)^r\right)^{\Oh(r)} \enspace .
\end{equation}

\begin{figure}[bt]
  \usetikzlibrary{calc,positioning,fit,shapes,decorations.pathreplacing,scopes}

\begin{tikzpicture}
  \tikzstyle{layer}=[rounded corners, inner sep=0pt,draw]
  \tikzstyle{layerLabel}=[rounded corners, inner sep=3pt, fill=gray!40]
  \tikzstyle{state}=[circle, draw, inner sep=1pt, minimum height=.5cm, minimum width=.5cm,align=center]
  \tikzstyle{finalState}=[state, black!30, fill, text=black, draw=black]
  \tikzstyle{edge}=[ultra thick,->,>=stealth]

\begin{scope}[local bounding box=jj+1]
  \node[layerLabel] (LijLabel) {$\ (\,i\,,\,j\,)\ $};
  \node[state,label={[xshift=7pt, yshift=4pt]0:\vdots}] at ($(LijLabel) - (0,2.5)$) (Nij) {$h, \beta, \vesigma$};
  \node at ($(LijLabel) - (0,4.2)$) (Dij) {};
  \node[layer,fit=(LijLabel)(Nij)(Dij)] {};

  \node[layerLabel] at ($(LijLabel) + (3.6,0)$) (Lij+1Label) {$(i,j + 1)$};
  \node[state] at ($(Lij+1Label) - (0,.8)$) (Nij+1) {$\hat{h}, \hat{\beta}, \hat{\vesigma}$};
  \node[state] at ($(Lij+1Label) - (0,3.5)$) (N2ij+1) {$\bar{h}, \bar{\beta}, \bar{\vesigma}$};
  \node at ($(Lij+1Label) - (0,4.2)$) (Dij+1) {};
  \node[layer,fit=(Lij+1Label)(Nij+1)(Dij+1)] (Lij+1) {};

  \draw[edge] (Nij) to node[midway, above, rotate=22]{$\hat{\beta} = \beta + \hat{h}$} node[midway, below, rotate=22]{$\hat{\vesigma} = \vesigma + D^j\hat{h}$}(Nij+1) ;
  \draw[edge] (Nij) to (N2ij+1);
\end{scope}
\node at ($(jj+1) + (0,3)$) {Transitions within brick};

\begin{scope}[local bounding box=ii+1]
  \node[layerLabel] at ($(Lij+1Label) + (3.2,0)$) (LitLabel) {$\ (\,i\,,\,t\,)\ $};
  \node[state,label={[xshift=7pt, yshift=4pt]0:\vdots}] at ($(LitLabel) - (0,2.5)$) (Nit) {$h, \beta, \vesigma$};
  \node at ($(LitLabel) - (0,4.2)$) (Dit) {};
  \node[layer,fit=(LitLabel)(Nit)(Dit)] (Lit) {};

  \node[layerLabel] at ($(LitLabel) + (3.6,0)$) (Li+11Label) {$(i + 1,1)$};
  \node[state] at ($(Li+11Label) - (0,.8)$) (Ni+11) {$\hat{h}, \hat{h}, \hat{\vesigma}$};
  \node[state] at ($(Li+11Label) - (0,3.5)$) (N2i+11) {$\bar{h}, \bar{h}, \bar{\vesigma}$};
  \node at ($(Li+11Label) - (0,4.2)$) (Di+11) {};
  \node[layer,fit=(Li+11Label)(Ni+11)(Di+11)] {};

  \draw[edge] (Nit) to node[midway, below, rotate=22]{$\hat{\vesigma} = \vesigma + D^1\hat{h}$}(Ni+11) ;
  \draw[edge] (Nit) to (N2i+11);
\end{scope}
\node at ($(ii+1) + (0,3)$) {Transitions between bricks};

\node at ($(Lij+1)!.5!(Lit)$) {$\cdots$};
\end{tikzpicture}
  \caption{Transitions in the augmentation graph $DP(\vex, \alpha)$.}
\label{fig:DPgrahpSchema}
\end{figure}
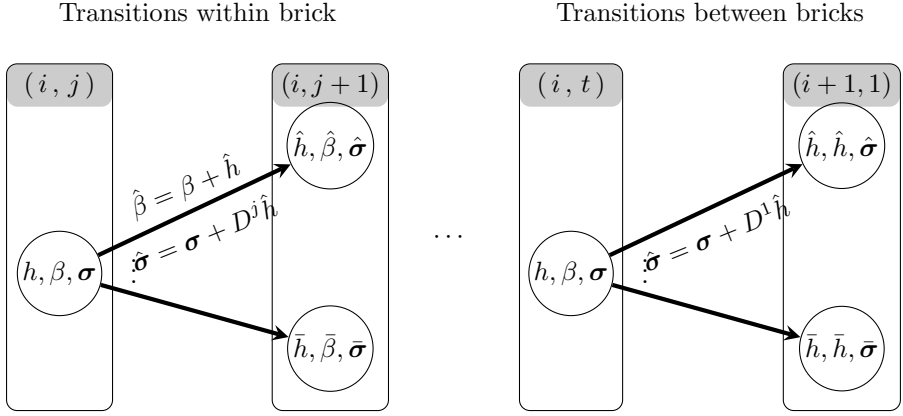

Let $P$ be an $S$--$T$ path in $DP(\vex,\alpha)$.
We define the \emph{$P$-augmentation vector} $\veh \in \Z^{nt}$ by the $h^i_j$-coordinates of the vertices of $P$.

Let $\vex$ be a feasible solution of the combinatorial $n$\hy{}fold IP instance fixed at the beginning of Sect.~\ref{sec:maintheorem}.
We say that $\veh$ is a {\em solution of $DP(\vex,\alpha)$} if there exists an $S$--$T$ path $P$ such that $\veh$ is the $P$-augmentation vector. The weight $w(\veh)$ is then defined as the weight of the path $P$; note that $w(\veh) = f(\vex + \alpha \veh) - f(\vex)$.

The following lemma relates solutions of $DP(\vex, \alpha)$ to potential feasible steps in $\cGEn$.
\begin{lemma}
\label{DP:solutionProperties}
  Let $\vex \in \Z^{nt}$ be a feasible solution, $\alpha \in \N$, and let $\veh$ be a solution of $DP(\vex, \alpha)$.
  Then $\vel \le\vex+\alpha \veh\le\veu$ and $\A\veh=\mathbf{0}$.
\end{lemma}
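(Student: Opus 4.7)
The plan is to unpack the definition of $DP(\vex,\alpha)$ and show that each requirement on the path $P$ witnessing $\veh$ translates directly into one of the two conclusions. The lemma is essentially bookkeeping: the $\beta$-coordinates enforce the lower block $A\veh^i=0$ brick by brick, the $\vesigma$-coordinates enforce the upper block $\sum_i D\veh^i=\mathbf 0$ globally, and the per-edge conditions $l^i_j\le x^i_j+\alpha h^i_j\le u^i_j$ enforce the box constraints.

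\medskip

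First I would handle the box constraints. Fix the $S$–$T$ path $P$ witnessing that $\veh$ is a solution. For each $(i,j)\in[n]\times[t]$, the path traverses a unique vertex in layer $\L(i,j)$; its third coordinate is precisely $h^i_j$. The edge entering this vertex (either an intra-brick edge or, if $j=1$, an inter-brick edge) is present in $DP(\vex,\alpha)$ only if $l^i_j\le x^i_j+\alpha h^i_j\le u^i_j$, by the explicit requirement in Definition~\ref{def:augmentationGraph}. Ranging over all $(i,j)$ gives $\vel\le\vex+\alpha\veh\le\veu$.

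\medskip

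Next I would verify $E^{(n)}\veh=\mathbf 0$, which decomposes into $\mathbf 1^{\transpose}\veh^i=0$ for every $i\in[n]$, and $\sum_{i=1}^n D\veh^i=\mathbf 0$. For the first, an easy induction on $j$ using the within-brick transition rule $\beta^i_{j+1}=\beta^i_j+h^i_{j+1}$ (and the fact that inter-brick edges leave layer $\L(i+1,1)$ with $\beta^{i+1}_1=h^{i+1}_1$) shows that along $P$, the vertex in layer $\L(i,j)$ satisfies $\beta^i_j=\sum_{\ell=1}^j h^i_\ell$. Now note that edges leaving layer $\L(i,t)$ are only defined at vertices with $\beta^i_t=0$; since $P$ uses such an edge for every $i\in[n]$ (the last one being the edge into $T=(n+1,1,0,0,\mathbf 0)$), we conclude $\mathbf 1^{\transpose}\veh^i=\sum_{\ell=1}^t h^i_\ell=0$ for all $i$.

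\medskip

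For the upper-block condition, a completely analogous telescoping argument on $\vesigma$ applies. The within-brick rule $\vesigma^i_{j+1}=\vesigma^i_j+D_{j+1}h^i_{j+1}$ and the inter-brick rule $\vesigma^{i+1}_1=\vesigma^i_t+D_1 h^{i+1}_1$ together yield, by induction along $P$ starting from $S=(0,t,0,0,\mathbf 0)$, the identity
\[
\vesigma^{i}_{j}=\sum_{k=1}^{i-1} D\veh^{k}+\sum_{\ell=1}^{j}D_\ell h^i_\ell
\]
for every vertex on $P$. Applying this at the penultimate vertex (in $\L(n,t)$), whose outgoing edge lands in $T$, and using that the transition into $T$ forces $\vesigma^{n+1}_1=\mathbf 0$, we obtain $\sum_{k=1}^{n}D\veh^{k}=\mathbf 0$, as required. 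Combining with the brick-wise identities $A\veh^i=0$ gives $E^{(n)}\veh=\mathbf 0$.

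\medskip

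There is no real obstacle here; the only point requiring mild care is to keep the indexing of the signature consistent at the two kinds of transitions so that the telescoping collapses correctly, and to notice that the edge into $T$ plays the same role for brick $n$ that the ordinary inter-brick edges play for bricks $1,\ldots,n-1$ in enforcing $\beta^n_t=0$ and the final value $\vesigma=\mathbf 0$.
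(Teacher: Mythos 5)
Your proof is correct and follows essentially the same route as the paper's: the box constraints come from the per-edge feasibility conditions, $\mathbf{1}^{\transpose}\veh^i=0$ from the requirement $\beta^i_t=0$ on vertices with outgoing inter-brick edges, and $D\veh=\mathbf{0}$ from the signature telescoping and the edge into $T$. You simply spell out the inductions that the paper leaves implicit.
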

\begin{proof}
  To see that $\vel \le \vex + \alpha \veh \le \veu$, recall that there is no incoming edge to a vertex $(i,j, h^i_j, \beta^i_j, \vesigma^i_j)$ which would violate the bound $l^i_j \leq x^i_j + \alpha h^i_j \leq u^i_j$.
	
  To see that $\A \veh = \mathbf{0}$, first observe that by the definition of $\beta^i_j$, and the condition that only if $\beta^i_t = 0$ there is an outgoing edge, we have that every brick~$\veh^i$ satisfies $\mathbf{1}^{\transpose} \veh^i = 0$.
  Second, by the definition of $\vesigma^i_j$ and the edges incoming to $T$, we have that $D \veh = \mathbf{0}$.
  Together, this implies $\A \veh = \mathbf{0}$. 
\end{proof}

\begin{lemma}
\label{lem:DPproperties}
  Let $\vex \in \Z^{nt}$ be a feasible solution.
  Then every $\veg \in\cGEn$ with $\vel \leq \vex + \alpha \veg \leq \veu$ is a~solution of $DP(\vex, \alpha)$.
\end{lemma}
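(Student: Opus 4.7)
The plan is to exhibit, for a given Graver element $\veg \in \cGEn$ with $\vel \le \vex + \alpha\veg \le \veu$, an explicit $S$--$T$ path $P$ in $DP(\vex,\alpha)$ whose $P$-augmentation vector is $\veg$. The natural candidate is to put, in layer $\L(i,j)$, the vertex
\[
\bigl(i,\,j,\,g^i_j,\,\beta^i_j,\,\vesigma^i_j\bigr), \qquad \beta^i_j := \sum_{\ell=1}^{j} g^i_\ell, \qquad \vesigma^i_j := \sum_{k=1}^{i-1} D\veg^k + \sum_{\ell=1}^{j} D_\ell\, g^i_\ell,
\]
which matches exactly the intended meaning of the coordinates from Definition~\ref{def:augmentationGraph}. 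The task then reduces to three verifications: (i) each proposed tuple actually lies in its layer, (ii) the required edges exist, and (iii) the path starts at $S$ and ends at $T$.

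The central tool for the range checks in (i) is Lemma~\ref{lem:partialsums}, which lets me write $\veg = \sum_{i=1}^n \sum_{m=1}^{n_i} \veh^{i,m}$ as a sign-compatible sum of at most $g(E)$ vectors from $\G(\mathbf{1}^{\transpose})$, each supported on a single brick and, by Lemma~\ref{lem:GA_structure}, having exactly one $+1$ and one $-1$. Consequently every prefix sum of a single $\veh^{i,m}$ within its brick lies in $\{-1,0,+1\}$. Summing over the $n_i \le g(E)$ decompositions used inside brick~$i$ yields at once $|g^i_j|,\,|\beta^i_j|\le n_i \le g(E)$, so the $h$- and $\beta$-coordinates fit into $[-g(E):g(E)]$.

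For the signature I would observe that $\vesigma^i_j = D \cdot \pi^{i,j}$, where $\pi^{i,j}\in\Z^{nt}$ is the prefix of $\veg$ through coordinate $(i,j)$. Using the same decomposition, $\pi^{i,j}$ is a sum of $\le g(E)$ prefixes of individual $\veh^{i',m}\in\G(\mathbf{1}^{\transpose})$, each of $\ell_1$-norm at most $1$, hence $\|\vesigma^i_j\|_\infty \le \|D\|_\infty \cdot g(E) \le \Delta\cdot g(E) \le 2\Delta\cdot g(E)$ so $\vesigma^i_j \in \Sigma(E)$. For (ii), the bound constraint $l^i_j \le x^i_j + \alpha g^i_j \le u^i_j$ on every vertex is granted directly by the hypothesis on $\veg$; the recurrences $\beta^i_{j+1} = \beta^i_j + g^i_{j+1}$ and $\vesigma^i_{j+1} = \vesigma^i_j + D_{j+1}g^i_{j+1}$ hold by definition, and the edge across bricks requires $\beta^i_t = 0$, which is true because the $s$ middle rows of $E^{(n)}$ applied to $\veg$ force $\mathbf{1}^{\transpose}\veg^i = 0$ for every~$i$. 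For (iii), $S = (0,t,0,0,\mathbf{0})$ corresponds to the empty prefix, and the arrival condition $\vesigma^n_t = \mathbf{0}$ at $T$ is precisely the identity $D\veg = \sum_{i=1}^n D\veg^i = \mathbf{0}$ extracted from the top $r$ rows of $E^{(n)}\veg = \mathbf{0}$.

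The only step that takes a little care is the signature bound, because $\vesigma^i_j$ mixes already-completed bricks with a partial contribution inside the current brick; the clean way to handle this, which I outlined above, is to unify the two terms as ``$D$ applied to one prefix of $\veg$'' and then feed that single object into the Lemma~\ref{lem:partialsums} decomposition. The remaining checks are bookkeeping.
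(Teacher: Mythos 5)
Your proof follows essentially the same route as the paper's: select in each layer the vertex determined by the prefix sums of $\veg$, and use Lemma~\ref{lem:partialsums} together with Lemma~\ref{lem:GA_structure} to check that the $h$-, $\beta$- and $\vesigma$-coordinates stay within their ranges and that the required edges and endpoints exist. The only quibble is that a prefix of an element of $\G(\mathbf{1}^{\transpose})$ has $\ell_1$-norm at most $2$, not $1$ (when both the $+1$ and the $-1$ fall inside the prefix), but this merely changes your signature bound to $2\Delta\cdot g(E)$, which is exactly what $\Sigma(E)$ is defined to accommodate, so the argument is unaffected.
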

\begin{proof}
  Let $\veg\in\cGEn$ satisfy $\vel \leq \vex + \alpha \veg \leq \veu$.
  We shall construct an $S$--$T$ path $P$ in $DP(\vex, \alpha)$ such that~$\veg$ is the $P$-augmentation vector.
  We will describe which vertex is selected from each layer, and argue that this well defined.
  Then, by the definition of $DP(\vex, \alpha)$, it will be clear that the selected vertices are indeed connected by edges.
  
  In layer $\L(1,1)$, we select vertex $\big(1,1,g^1_1, g^1_1, D_1 g^1_1\big)$.
  In layer $\L(i,j)$, we select vertex $\big(i,j, g^i_j, \beta^i_j, \vesigma^i_j \big)$ with $\beta^i_j = \beta^i_{j-1} + g^i_j$ and $\vesigma^i_j = \vesigma^i_{j-1} + D_j g^i_j$ if $j > 1$, and $\beta^i_j = \beta^{i-1}_{t} + g^i_j$ and $\vesigma^i_j = \vesigma^{i-1}_t + g^{i-1}_t$ otherwise.
  
  We shall argue that this is well defined, i.e., that all of the specified vertices actually exist.
  From Lemma~\ref{lem:partialsums} it follows that $\veg$ can be decomposed into $M \leq g(E)$ vectors $\tilde{\veg}^1, \dots \tilde{\veg}^M \in \G(\mathbf{1}^{\transpose})$.
  By Lemma~\ref{lem:GA_structure}, $\|\tilde{\veg}^i\|_1 \leq 2$ for every $i$, which implies that $\|\veg\|_\infty \leq g(E)$.
  Moreover, since $M \leq g(E)$, we also have that for every $i \in [n]$ and every $j \in [t]$, $|\sum_{\ell = 1}^j g^i_j| \leq g(E)$ (i.e., the brick prefix sum is also bounded by $g(E)$ in absolute value). 
  Thus, vertices with the appropriate $g^i_j$- and $\beta^i_j$- coordinates exist.
  Regarding the $\vesigma^i_j$ coordinate, we make a similar observation: for every $i \in [n]$ and $j \in [t]$, $\big (\sum_{\hat{\ell}=1}^{i-1} D \veg^{\hat{\ell}}\big) + \big(\sum_{\ell = 1}^{j} D_\ell g^i_\ell \big) \in \Sigma(E)$.
  
  From the definition of the edges in $DP(\vex, \alpha)$, and the fact that \mbox{$\vel \leq \vex + \alpha \veg \leq \veu$}, the selected vertices create a path.
\end{proof}

\begin{lemma}[optimality certification]
\label{lem:optimality_cert}
  There is an algorithm that, given a feasible solution $\vex\in\Z^{nt}$ for $(IP)_{E^{(n)}, \veb, \vel, \veu, f}$ and $\alpha \in \N$, in time $nt L_{\max}^2 \leq t^{\Oh(r)}(\Delta r)^{\Oh(r^2)} n$ either finds a~vector~$\veh$ such that (i) $\A\veh = \mathbf{0}$, (ii) $\vel\le\vex + \alpha \veh\le\veu$ and (iii) $f(\vex+ \alpha\veh) < f(\vex)$, or decides that none exists.
\end{lemma}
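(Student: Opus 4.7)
The plan is to build the augmentation graph $DP(\vex,\alpha)$ from Definition~\ref{def:augmentationGraph} explicitly and compute a minimum-weight $S$--$T$ path in it. The graph is a layered DAG whose $nt+2$ layers each contain at most $L_{\max}$ vertices; edges go only between consecutive layers, so there are $O(nt L_{\max}^2)$ of them. A single topological-order sweep computes a minimum-weight $S$--$T$ path in time linear in the number of edges, which by \eqref{LmaxBound} is $nt L_{\max}^2 \le t^{\Oh(r)}(\Delta r)^{\Oh(r^2)} n$, matching the claim.

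If the minimum weight is strictly negative, I would output the $P$-augmentation vector $\veh$ of the optimal path $P$. Lemma~\ref{DP:solutionProperties} gives $\A\veh=\mathbf{0}$ and $\vel\le\vex+\alpha\veh\le\veu$, while the vertex weights are chosen precisely so that $w(P)=f(\vex+\alpha\veh)-f(\vex)<0$; hence (i)--(iii) all hold.

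The main correctness obligation is the converse: if every $S$--$T$ path has weight $\ge 0$, then no $\veh$ satisfying (i)--(iii) exists at all, not merely none realizable as a DP solution. Suppose for contradiction that such an $\veh$ exists. Since $\A\veh=\mathbf{0}$ and $\veh\ne\mathbf{0}$, Proposition~\ref{prop:graver_conformal_sum} gives a sign-compatible decomposition $\veh=\sum_k \veg^k$ with $\veg^k\in\cGEn$. Sign-compatibility means that in each coordinate $\alpha g^k_i$ lies between $0$ and $\alpha h_i$ with the same sign, so $\vex$ and $\vex+\alpha\veh$ bracket $\vex+\alpha\veg^k$ coordinate-wise; combined with $\vel\le\vex\le\veu$ and $\vel\le\vex+\alpha\veh\le\veu$ this yields $\vel\le\vex+\alpha\veg^k\le\veu$ for every $k$. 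By separable convexity of $f$ and the standard superadditivity of increments of univariate convex functions over sign-compatible sums,
\[
  f(\vex+\alpha\veh)-f(\vex) \;\ge\; \sum_k \bigl[f(\vex+\alpha\veg^k)-f(\vex)\bigr],
\]
so at least one summand is strictly negative. By Lemma~\ref{lem:DPproperties} the corresponding $\veg^k$ is a solution of $DP(\vex,\alpha)$ whose path weight equals that summand, contradicting the assumption that every $S$--$T$ path has nonnegative weight.

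The only nonroutine step is the superadditivity inequality, which follows coordinate-wise from the one-dimensional fact that, for convex $\varphi\colon\Z\to\Z$ and integers $a,b$ of the same sign, $\varphi(x+a+b)-\varphi(x)\ge[\varphi(x+a)-\varphi(x)]+[\varphi(x+b)-\varphi(x)]$, iterated across the sign-compatible decomposition of each coordinate of $\veh$; this is where separable convexity of $f$ is essential. Beyond this, the rest is bookkeeping: verifying the run time is just a substitution of the estimate \eqref{LmaxBound} into $nt L_{\max}^2$.
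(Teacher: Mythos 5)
Your proof is correct and follows the same route as the paper's: build the layered DAG $DP(\vex,\alpha)$, compute a minimum-weight $S$--$T$ path in a single sweep in time $\Oh(ntL_{\max}^2)$, use Lemma~\ref{DP:solutionProperties} for soundness of a negative-weight path and Lemma~\ref{lem:DPproperties} for completeness over Graver elements. The only difference is that where the paper justifies the step from ``no improving Graver element for this $\alpha$'' to ``no improving kernel vector $\veh$ for this $\alpha$'' by citing Proposition~\ref{prop:graverbest}, you spell out the underlying argument explicitly (sign-compatible decomposition via Proposition~\ref{prop:graver_conformal_sum}, bracketing for feasibility, and superadditivity of increments of separable convex functions), which is in fact the more careful justification for a fixed step length.
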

\begin{proof}
  It follows from Lemma~\ref{DP:solutionProperties} that all solutions of $DP(\vex)$ fulfill (i) and (ii). 
  Observe that if we take $\veh$ to be a solution of $DP(\vex, \alpha)$ with minimum weight, then either $f(\vex) = f(\vex+\alpha\veg)$ or $f(\vex) > f(\vex+\alpha\veg)$.
  Due to Lemma~\ref{lem:DPproperties} the set of solutions of $DP(\vex, \alpha)$ contains all $\veg \in \cGEn$ with $\vel \leq \vex + \alpha \veg \leq \veu$.
  Thus, by Proposition~\ref{prop:graverbest}, if $f(\vex) = f(\vex+\alpha \veg)$, no $\veg$ satisfying all three conditions (i), (ii), (iii) exist.

  Our goal is then to find the lightest $S$--$T$ path in the graph $DP(\vex, \alpha)$.
  However, since edges of negative weight will be present, we cannot use, e.g., Dijkstra's algorithm.
  Still, it can be observed that $DP(\vex, \alpha)$ is a directed acyclic graph, and moreover, finding the lightest path can be done in a layer-by-layer manner in time $\Oh(|V(DP(\vex, \alpha))|\cdot L_{\max}) = \Oh(nt L_{\max}^2)$; cf. \cite[Lemma 3.4]{HemmeckeEtAl2013}.
  The claimed run time follows from the bound on the maximum size of a~layer~\eqref{LmaxBound}.
\end{proof}

\subsection{Step lengths}
\label{sec:longSteps}
We have shown how to find a feasible step $\veh$ for any given step length $\alpha \in \N$ such that $f(\vex + \alpha \veh) \leq f(\vex + \alpha \veg)$ for any feasible step $\veg \in\cGEn$.
Now, we will show that there are not too many step lengths that need to be considered in order to find a Graver\hy{}best step which, by Proposition~\ref{prop:graverbest}, leads to a good bound on the total required number of steps.
This is the case in particular if we have an instance whose feasible solutions are all contained in a box \mbox{$\vel \leq \vex \leq \veu$} with \mbox{$\|\vel - \veu\|_\infty \leq N$}, as no steps of length $\alpha > N$ are feasible.

In the following, we use the proximity technique pioneered by Hochbaum and Shantikumar~\cite{HochbaumS:90} in the case of totally unimodular matrices and extended to the setting of Graver bases by Hemmecke, Köppe and Weismantel~\cite{HemmeckeEtAl2014}.
This technique allows to show that, provided some structure of the constraints (e.g., total unimodularity or bounded $\ell_\infty$-norm of its Graver elements), the continuous optimum is not too far from the integer optimum of the problem.

\begin{proposition}{(\cite[Theorem 3.14]{HemmeckeEtAl2014})}
\label{thm:proximity}
  Consider a combinatorial $n$\hy{}fold IP $(IP)_{E^{(n)}, \veb, \vel, \veu, f}$.
  Then for any optimal solution $\hat{\vex}$ of the~continuous relaxation of $(IP)_{E^{(n)}, \veb, \vel, \veu, f}$ there is an optimal solution $\vex^*$ of $(IP)_{E^{(n)}, \veb, \vel, \veu, f}$ with
  \[
    \|\hat{\vex} - \vex^*\|_\infty \leq nt \cdot \max \left\{\|\veg\|_\infty \mid \veg \in \cGEn\right\} .
  \] 
\end{proposition}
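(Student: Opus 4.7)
The plan is to argue by contradiction. Suppose that for the given $\hat{\vex}$ no integer optimum lies within the claimed $\ell_\infty$-distance, and among all integer optima pick one, $\vex^*$, that minimizes $\|\vex^*-\hat{\vex}\|_1$. Set $\vey := \vex^* - \hat{\vex}$ and write $g_\infty := \max\{\|\veg\|_\infty \mid \veg \in \cGEn\}$; since both $\vex^*$ and $\hat{\vex}$ are feasible we have $\A\vey = \mathbf{0}$, so $\vey$ lies in the real kernel of $\A$. The core of the proof will be to extract a single Graver element $\veg \in \cGEn$ that is sign-compatible with $\vey$ and satisfies $\veg \sqsubseteq \vey$, and then to swap it from $\vex^*$ to obtain a strictly closer integer optimum, contradicting the choice of $\vex^*$.

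To produce such a $\veg$, I would first clear denominators (using that $\hat{\vex}$ can be taken rational as an LP vertex) to obtain $q\vey \in \Z^{nt} \cap \ker(\A)$ for some positive integer $q$, and apply Proposition~\ref{prop:graver_conformal_sum} to write $q\vey$ as a sign-compatible sum of Graver basis elements. Dividing by $q$ realizes $\vey$ as a non-negative real combination $\vey = \sum_i \mu_i \veg^i$ of Graver elements each sign-compatible with $\vey$. A Carathéodory argument applied to the rational cone spanned by these sign-compatible Graver elements, in the ambient space $\R^{nt}$, shrinks the representation to $\vey = \sum_{i=1}^k \lambda_i \veg^i$ with $k \le nt$ and all $\lambda_i > 0$. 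Sign-compatibility then yields, coordinate-wise,
\[
  |y_j| \;=\; \sum_{i=1}^k \lambda_i\,|g^i_j| \;\le\; g_\infty \sum_{i=1}^k \lambda_i \;\le\; nt \cdot g_\infty \cdot \max_i \lambda_i.
\]
Combined with the standing assumption $\|\vey\|_\infty > nt \cdot g_\infty$, this forces $\max_i \lambda_i > 1$. Picking $i^*$ with $\lambda_{i^*} > 1$ gives $|y_j| \ge \lambda_{i^*}|g^{i^*}_j| \ge |g^{i^*}_j|$ for every $j$, so $\veg := \veg^{i^*}$ is the desired sign-compatible Graver element with $\veg \sqsubseteq \vey$.

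The last step is a convexity exchange. Consider $\vex^{**} := \vex^* - \veg$: it is integer, $\A\vex^{**} = \veb$ (since $\veg \in \ker(\A)$), and $\vex^{**}$ lies coordinate-wise between $\hat{\vex}$ and $\vex^*$ because $\veg$ is sign-compatible with $\vey$ and $\veg \sqsubseteq \vey$; hence $\vel \le \vex^{**} \le \veu$ follows from the feasibility of $\vex^*$ and $\hat{\vex}$. Analogously, $\hat{\vex} + \veg$ lies between $\hat{\vex}$ and $\vex^*$, so it is LP-feasible. From $(\vex^* - \veg) + (\hat{\vex} + \veg) = \vex^* + \hat{\vex}$ and the fact that the left-hand pair is no more spread apart coordinate-wise than the right-hand pair, separable convexity of $f$ gives
\[
  f(\vex^{**}) + f(\hat{\vex} + \veg) \;\le\; f(\vex^*) + f(\hat{\vex}).
\]
Optimality of $\vex^*$ for $(IP)_{\A,\veb,\vel,\veu,f}$ and of $\hat{\vex}$ for its continuous relaxation forces each term on the left to match its counterpart on the right; in particular $\vex^{**}$ is another integer optimum, and since $\veg \sqsubseteq \vey$ with $\veg \neq \mathbf{0}$ we obtain $\|\vex^{**} - \hat{\vex}\|_1 = \|\vey\|_1 - \|\veg\|_1 < \|\vey\|_1$, contradicting the choice of $\vex^*$.

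The main obstacle I expect lies in the Carathéodory step: carefully passing from the integer conformal decomposition of Proposition~\ref{prop:graver_conformal_sum} to a real non-negative combination supported on at most $nt$ sign-compatible Graver elements, so that the averaging inequality can force $\max_i \lambda_i > 1$. Once this inequality is in hand, the convexity exchange is essentially bookkeeping around the separability of $f$ and the coordinate-wise containment $\vex^{**}, \hat{\vex} + \veg \in [\min(\hat{\vex},\vex^*),\max(\hat{\vex},\vex^*)]$.
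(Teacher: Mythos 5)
The paper does not prove this proposition; it imports it verbatim from Hemmecke, K\"oppe and Weismantel, so there is no in-paper argument to compare against. Your reconstruction follows the standard proximity proof from that source: take an $\ell_1$-closest integer optimum $\vex^*$, conformally decompose $\vey=\vex^*-\hat{\vex}$ into sign-compatible Graver elements with at most $nt$ positive coefficients, note that $\|\vey\|_\infty> nt\cdot g_\infty$ forces some $\lambda_{i^*}>1$ and hence a Graver element $\veg\sqsubseteq\vey$, and then trade $\veg$ between $\vex^*$ and $\hat{\vex}$ using superadditivity of separable convex functions to contradict minimality. The averaging inequality, the extraction of $\veg\sqsubseteq\vey$ from $\lambda_{i^*}>1$, the preservation of sign-compatibility under the Carath\'eodory reduction (which only discards generators), and the convexity exchange are all sound.

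Two steps need repair. First, your justification for clearing denominators --- that ``$\hat{\vex}$ can be taken rational as an LP vertex'' --- does not hold: the continuous relaxation has a separable convex, not linear, objective, so its optima need not be vertices nor rational; moreover the proposition is asserted for \emph{any} continuous optimum, so you are not free to choose a convenient one. The standard fix: the cone $\{\vez\in\ker(\A)\mid \vez \mbox{ sign-compatible with }\vey\}$ is rational polyhedral, and by Proposition~\ref{prop:graver_conformal_sum} its integer points are non-negative integer combinations of the Graver elements it contains; hence those Graver elements generate the cone over the non-negative reals, and every real $\vey$ in it, rational or not, admits the desired decomposition, after which your Carath\'eodory step applies unchanged. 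Second, the exchange step evaluates $f$ at the fractional points $\hat{\vex}$ and $\hat{\vex}+\veg$, whereas $f$ is given only on $\Z^{nt}$; you must work with a convex extension of $f$ to the box, as the cited source does --- this is implicit in the paper's use of ``continuous relaxation'' but should be stated. With these two patches your proof is complete and matches the argument of the cited theorem.
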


This allows us then to reduce the original instance to an equivalent instance contained in a small box, as hinted at above.

\begin{lemma}[equivalent bounded instance]
\label{lem:equiv_bounded_instance}
  Let $(IP)_{E^{(n)},\veb, \vel, \veu, f}$ be a combinatorial $n$-fold IP of size $L = \l\veb, \vel, \veu, f\r$.
  With one call to an optimization oracle of its continuous relaxation, one can construct $\hat{\vel}, \hat{\veu} \in \Z^{nt}$ such that
  \[
    \min \big\{f(\vex) \mid \A\vex = \veb, \vel \leq \vex \leq \veu \big\} = \min \big\{f(\vex) \mid \A\vex = \veb, \hat{\vel} \leq \vex \leq \hat{\veu} \big\},
  \]
  and $\| \hat{\veu} - \hat{\vel} \|_\infty \leq nt\cdot g(E)$.
\end{lemma}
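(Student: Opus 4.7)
The plan is to apply the proximity technique of Proposition~\ref{thm:proximity} to shrink the original box $[\vel,\veu]$ to a box of width controlled by $g(E)$ around the continuous optimum, in such a way that at least one integer optimum survives the tightening.

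First, I would invoke the optimization oracle on the continuous relaxation $\min\{f(\vex) \mid \A\vex=\veb, \vel\leq \vex \leq \veu\}$ to obtain a fractional optimum $\hat{\vex}\in\RR^{nt}$. (If the relaxation is infeasible, then so is the integer program, and one can return trivially empty bounds; if it is unbounded, so is the integer program.) Since Proposition~\ref{thm:proximity} is stated in terms of $\max\{\|\veg\|_\infty : \veg \in \cGEn\}$, the next step is to upper-bound that quantity by $g(E)$. This is the same observation used inside the proof of Lemma~\ref{lem:DPproperties}: by Lemma~\ref{lem:partialsums}, every $\veg \in \cGEn$ is a sign-compatible sum of at most $g(E)$ elements of $\G(\mathbf{1}^{\transpose})$, and by Lemma~\ref{lem:GA_structure} each such element has entries in $\{-1,0,+1\}$. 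Sign-compatibility then forces $\|\veg\|_\infty \leq g(E)$, giving the desired bound uniformly.

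Plugging this into Proposition~\ref{thm:proximity} yields an integer optimum $\vex^*$ of $(IP)_{E^{(n)},\veb,\vel,\veu,f}$ satisfying $\|\hat{\vex}-\vex^*\|_\infty \leq nt\cdot g(E)$. I then define, for every $i\in [n]$ and $j\in [t]$,
\[
\hat{l}^i_j := \max\bigl\{l^i_j,\ \lceil \hat{x}^i_j - nt\cdot g(E)\rceil\bigr\}, \qquad \hat{u}^i_j := \min\bigl\{u^i_j,\ \lfloor \hat{x}^i_j + nt\cdot g(E)\rfloor\bigr\}.
\]
By construction $\vel \leq \hat{\vel} \leq \hat{\veu} \leq \veu$, so any $\vex$ feasible for the restricted program is also feasible for the original one and attains the same objective. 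Conversely, the proximity bound above places $\vex^*$ inside the box $[\hat{\vel},\hat{\veu}]$, so an optimizer of the original program is retained in the restricted one. Hence the two minima coincide, and the coordinate-wise width is $|\hat{u}^i_j - \hat{l}^i_j| \leq 2nt\cdot g(E) + 1$, which matches the claimed $O(nt\cdot g(E))$ bound (the stated $nt\cdot g(E)$ form is obtained by taking one-sided clamping or absorbing the factor of two into the definition).

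The main thing to be careful about is that the intersection with the original bounds in the definition of $\hat{\vel},\hat{\veu}$ must not clip $\vex^*$ away, which is automatic since $\vex^*$ already satisfies $\vel \leq \vex^* \leq \veu$ and lies within $\ell_\infty$-distance $nt\cdot g(E)$ of $\hat{\vex}$. The only nontrivial ingredient beyond bookkeeping is the uniform bound $\|\veg\|_\infty\leq g(E)$, and this is inherited from the structural lemmas already proved for combinatorial $n$-fold IPs; no new machinery is required.
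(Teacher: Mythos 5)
Your proposal is correct and follows essentially the same route as the paper: call the oracle for the continuous relaxation, bound $\max\{\|\veg\|_\infty \mid \veg\in\cGEn\}$ by $g(E)$ via the decomposition into $\G(\mathbf{1}^{\transpose})$-elements, apply Proposition~\ref{thm:proximity}, and clamp the box around $\hat{\vex}$ intersected with $[\vel,\veu]$. The only difference is the cosmetic factor of two in the box width, which you correctly flag and which does not affect the asymptotic bound used downstream.
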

\begin{proof}
  Returning to Proposition~\ref{thm:proximity}, observe that the~quantity $\max \left\{\|\veg\|_\infty \mid \veg \in \cGEn \right\}$ is bounded by $g(E)$ (Lemma~\ref{lem:GA_structure}).
  Hence, we can set new lower and upper bounds $\hat{\vel}$ and $\hat{\veu}$ defined by $\hat{l}_j^i := \max \left\{\lfloor \hat{x}_j^i \rfloor - n g(E), l_j^i \right\}$ and $\hat{u}_j^i := \min \left\{\lceil \hat{x}_j^i \rceil + n g(E), u_j^i \right\}$, and Theorem~\ref{thm:proximity} assures that the~integer optimum also lies within the~new bounds.
\end{proof}

\subsection{Finishing the proof}
\begin{proof}[Proof of Theorem~\ref{thm:combinatorialnfold}.]~
  We proceed in three steps.
  
  \smallskip
  \noindent
  \textit{Step 1: Bounding the feasible region.}
  First, we use Lemma~\ref{lem:equiv_bounded_instance} to construct new lower and upper bounds $\hat{\vel}, \hat{\veu}$ satisfying $\| \hat{\veu} - \hat{\vel} \|_\infty \leq nt\cdot g(E)$ and preserving the optimal value of $(IP)_{E^{(n)},\veb, \vel, \veu, f}$.
  Thus, we shall replace $\vel$ and $\veu$ by $\hat{\vel}$ and $\hat{\veu}$ from now on and assume that $\|\veu - \vel\|_\infty \leq nt\cdot g(E)$.
  
  \smallskip
  \noindent
  \textit{Step 2: Optimization.}
  Let us assume that we have an initial feasible solution~$\vex_0$ of $(IP)_{E^{(n)},\veb, \vel, \veu, f}$.
  Given a step length $\alpha \in \N$, it takes time $t^{\Oh(r)} (\Delta r)^{\Oh(r^2)} n$ by Lemma~\ref{lem:optimality_cert} to find a feasible step $\veh$ satisfying $f(\vex + \alpha \veh) \leq f(\vex + \alpha \veg)$ for all $\veg \in \cGEn$.
  Recall that no $\veg \in \cGEn$ can be feasible for $\alpha > n g(E)$ by our bound on $\| \veu - \vel \|_\infty$.
  Thus, applying Lemma~\ref{lem:optimality_cert} for all $\alpha \in [nt\cdot g(E)]$ and choosing a pair $(\alpha, \veh)$ which minimizes $f(\vex + \alpha \veh)$ surely finds a Graver\hy best step in time $t^{\Oh(r)} (\Delta r)^{\Oh(r^2)} n^2$.
  In order to reach the optimum, by Proposition~\ref{prop:graverbest} we need to make at most $(2nt - 2) \cdot \Oh(L)$ Graver\hy{}best steps, where $L = \l \veb, \mathbf{0}, \veu, f \r$; this is because $\Oh(L)$ is an upper bound on $f(\vex^0) - f(\vex^*)$ for some optimal solution~$\vex^*$ of $(IP)_{E^{(n)},\veb, \vel, \veu, f}$.
  In total, we need time $t^{\Oh(r)} (\Delta r)^{\Oh(r^2)} n^3 L$.
  
  \smallskip
  \noindent
  \textit{Step 3: Feasibility.}
  Now we are left with the task of finding a starting feasible solution $\vex_0$ in the case when we do not have it.
  We follow the lines of Hemmecke et al.~\cite[Lemma 3.8]{HemmeckeEtAl2013} and solve an auxiliary combinatorial $n$-fold IP given by the bimatrix $\bar{E} = \left(\begin{smallmatrix}\bar{D}\\\bar{A}\end{smallmatrix}\right)$ with $\bar{D} := (D ~ I_{r} ~ -I_{r} ~ \mathbf{0})$ and $\bar{A} := (A ~ \mathbf{1}_{2r+1}) = \mathbf{1}^{\transpose} \in \Z^{t+2r+1}$, where $I_r$ is the identity matrix of dimension~$r$, $\mathbf{0}$ is a column vector of length~$r$ and $\mathbf{1}_{2r+1}$ is the vector of all $1$s of length $2r+1$.
  The variables $\bar{\vex}$ of this auxiliary problem have a natural partition into~$nt$ variables~$\vex$ corresponding to the original problem fixed at the beginning of Sect.~\ref{sec:maintheorem}, and $n(2r+1)$ new auxiliary variables~$\tilde{\vex}$.
  Keep the original lower and upper bounds on~$\vex$ and introduce a lower bound $0$ and upper bound $ntg(E)\Delta$ on each auxiliary variable.
  Finally, let the new objective be $f(\vex) = \bar{\vew}^\transpose\bar{\vex}$, i.e., the sum of the auxiliary variables.
  Observe that it is easy to construct an initial feasible solution by setting $\vex = \vel$ and computing $\tilde{\vex}$ accordingly: $\tilde{\vex}$ serve the role of slack variables, and the slack in any constraint is at most $nt^2g(E)\Delta$ by the fact that $\|\veu - \vel\|_\infty \leq nt\cdot g(E)$ and $\Delta = 1 + \|D\|_\infty$.
    
  Then, applying the optimization algorithm described in the beginning of the proof either finds a solution to the auxiliary problem with objective value~$0$, implying $\tilde{\vex} = \mathbf{0}$, and thus $\vex$ is feasible for the original problem, or no such solution exists, meaning that the original problem is infeasible.
\end{proof}

\section{Applications}
\label{sec:applications}
Let us repeat the main result of this paper:
\begin{reptheorem}{thm:combinatorialnfold}[repeated]
  \mainresult
\end{reptheorem}

In applications, it is practical to use combinatorial $n$-fold IP formulations which contain inequalities.
Given an $n$-fold IP (in particular a combinatorial $n$-fold IP), we call the upper rows $(D~D~\cdots~D) \vex = \veb^{0}$ \emph{globally uniform constraints}, and the lower rows $A \vex^i = \veb^i$, for all $i \in [n]$, \emph{locally uniform constraints}.
So we first show that introducing inequalities into a combinatorial $n$-fold IP is possible.
However in the case of globally uniform constraints, we need a slightly different approach than in a standard $n$-fold IP to keep the rigid format of a combinatorial $n$-fold IP.

\paragraph{Inequalities in locally uniform constraints.}
We add $n$ variables $x_{t+1}^i$ for all $i \in [n]$ and we replace $D$ with $(D~\mathbf{0})$.
For each row $i$ where we wish to enforce $\mathbf{1}^{\transpose} \vex^i \leq b^i$, we set the upper bound on $u_{t+1}^i = b^i$ and lower bound $l_{t+1}^i = 0$.
Similarly, for each row $i$ where we wish to enforce $\mathbf{1}^{\transpose} \vex^i \geq b^i$, we set a lower bound $l_{t+1}^i = -b^i$ and an upper bound $u_{t+1}^i = 0$.
For all remaining rows we set $l_{t+1}^i = u_{t+1}^i = 0$, enforcing $\mathbf{1}^{\transpose} \vex^i = b^i$.

\paragraph{Inequalities in globally uniform constraints.}
We replace $D$ with $(D~I_r)$, where~$I_r$ is the $r \times r$ identity matrix.
Thus, we have introduced $r$ new variables $x_{t+j}^i$ with $i \in [n]$ and $j \in [r]$; however, we enforce them all to be $0$ by setting $l_{t+j} = u_{t+j}^i = 0$ for all $i \in [n]$ and $j \in [r]$.
Next, we introduce an $(n+1)$-st brick, set $u_j^{n+1} = 0$ for all $j \in [t]$ and set $b^{n+1} = \|D\|_\infty \cdot \|(b^1, \dots, b^n)\|_1$.
Then, for each row $i \in [r]$ where we wish to enforce a ``$\leq$'' inequality, we set $l_{t+i}^{n+1} = 0$ and $u_{t+i}^{n+1} = \|\veb\|_\infty$, and for each row $i \in [r]$ with a ``$\geq$'' inequality, we set $l_{t+i}^{n+1} = -\|\veb\|_\infty$ and $u_{t+i}^{n+1} = 0$.
We let $l_{t+i}^{n+1} = u_{t+r+i}^{n+1} = 0$ for equality.
To enforce a strict inequality ``$<$'', we proceed as for ``$\leq$'' and increase the corresponding right hand side of the inequality by one; similarly for ``$>$''.

\subsection{Weighted Set Multicover}
\label{subsec:wsm}

We demonstrate Theorem~\ref{thm:combinatorialnfold} on the following problem:
\prob{\textsc{Weighted Set Multicover}}
{A~universe $U$ of size $k$, a set system represented by a multiset \mbox{$\F = \{F_1, \dots, F_n\} \subseteq 2^U$}, weights $w_1, \dots, w_n \in \N$, demands $d_1, \dots, d_k \in \N$.}
{A~multisubset $\F' \subseteq \F$ minimizing $\sum_{F_i \in \F'} w_i$ and satisfying \mbox{$\big|\{i \mid F_i \in \F', j \in F_i\}\big| \geq d_j$} for all $j \in [k]$.}

We a ready to prove Theorem~\ref{thm:weightedSetMulticover}:
\begin{proof}[Proof of Theorem~\ref{thm:weightedSetMulticover}.]
  Observe that there are at most $2^k$ different sets \mbox{$F \in 2^U$}; we classify each pair $(F, w)$ in the input into one of $T \leq 2^k$ different types.
  Moreover, for any two pairs $(F, w)$ and $(F, w')$ with $w \leq w'$, for any solution containing $(F', w')$ and not containing $(F, w)$ there is another solution which is at least as good and contains $(F, w)$.
  We thus order the pairs $(F,w),(F,w'),\hdots$ by non-decreasing weight, so that lighter elements will be used before heavier ones in any optimal solution.

  This allows us to represent the input instance in a succinct way by $T$ functions $g^i, \dots, g^T: [n] \to \N$ such that, for any $i \in [T]$, $g^i(k)$ is defined as the sum of the $k$ lightest elements of type $i$ or $+\infty$ in case that there are less than $k$ elements of type $i$.
  Observe that since each $g^i$ is a partial sum of a non-decreasing sequence of weights, it is a convex function.

  We construct a combinatorial $n$-fold IP to solve the problem.
  Let $x_{\vef}^\tau$ for each $\vef \in 2^U$ and each $\tau \in [T]$ be a variable.
  Let $l_{\vef}^\tau = u_{\vef}^\tau = 0$ for each $\vef \in 2^U$ such that $\vef \neq F^\tau$, and let $l_{\vef}^\tau = 0$ and $u_{\vef}^\tau = n$ for $\vef = F^\tau$.
  The variable $x_{\vef}^\tau$ with $\vef = F^\tau$ represents the number of sets of type $\tau$ in the solution.
  The IP formulation then reads
  \begin{align*}
    \min~& \sum_{\tau=1}^{T} g^{\tau}(x_{\vef}^\tau)
    & \mbox{s.t.}~\sum_{\tau=1}^{T} \sum_{\vef \in 2^U} f_i x_{\vef}^\tau \geq d_i, \qquad \mbox{for all } i \in [k] \\
    ~&& \sum_{\vef \in 2^U} x_{\vef}^\tau \leq n \qquad \mbox{for all } \tau \in [T];
  \end{align*}
  note that $f_i$ is $1$ if $i \in \vef$ and $0$ otherwise.
  Let us determine the parameters $\hat{\Delta},\hat{r},\hat{t},\hat{n}$ and $\hat{L}$ of this combinatorial $n$-fold IP instance.
  Clearly, the largest coefficient $||\hat{D}||_\infty$ is $1$, the number of globally uniform constraints $\hat{r}$ is~$k$, the number of variables per brick $\hat{t}$ is $2^k$, the number of bricks $\hat{n}$ is $T$, and the length of the input $\hat{L}$ is at most $\log n + \log w_{\max}$.
\end{proof}

\subsection{Stringology}
\label{subsec:strings}
To show that Theorem~\ref{thm:combinatorialnfold} can be used to speed up many previous result, we show a single-exponential algorithm for an artificial ``meta-problem'' called $\delta$-{\sc Multi Strings} which generalizes many previously studied problems:

\prob{$\delta$-\textsc{Multi Strings}}
{A~set of strings $S=\{s_1, \dots, s_k\}$, each of length $L$ over alphabet $\Sigma \cup \{\star\}$, distance lower and upper bounds \mbox{$d_1, \dots, d_k \in \N$} and $D_1, \dots, D_k \in \N$, distance function \mbox{$\delta: \Sigma^{*} \times \Sigma^{*} \to \N$} and a binary parameter $b \in \{0,1\}$.}
{An output string $y \in \Sigma^L$ with $d_i \leq \delta(y, s_i) \leq D_i$ for each $s_i \in S$, which minimizes $b \cdot \big(\sum_{i=1}^k \delta(y, s_i) \big)$.}
We call a distance function $\delta: \Sigma^* \times \Sigma^* \to \N$ \emph{character-wise wildcard-compatible} if $\delta(x,y) = \sum_{i=1}^L \delta(x[i], y[i])$ for any two strings $x,y \in \Sigma^L$, and $\delta(e, \star) = 0$ for all $e \in \Sigma$.

\begin{theorem}
\label{thm:multistrings}
  There is an algorithm that solves instances of $\delta$-{\sc Multi Strings} in time $K^{\Oh(k^2)} \Oh(\log L)$, where \mbox{$K = \max\left\{|\Sigma|, k, \max_{e,f \in \Sigma} \delta(e,f) \right\}$} and $\delta$ is any character-wise wildcard-compatible function.
\end{theorem}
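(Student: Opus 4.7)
The plan is to mimic the combinatorial $n$-fold IP formulation of \textsc{Closest String} given at the start of the paper, taking advantage of the character-wise wildcard-compatibility of $\delta$ to linearize distance constraints and objective.

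\textbf{Step 1: Succinct column representation.} First I would view the $k$ input strings as a $k \times L$ matrix whose $i$-th column is a vector $\vef \in (\Sigma \cup \{\star\})^k$. There are at most $(|\Sigma|+1)^k \leq K^{\Oh(k)}$ distinct column types. Let $b^{\vef}$ denote the number of columns of type $\vef$; the succinct input is then the tuple $(b^{\vef})_{\vef}$ plus the bounds $d_i, D_i$, and has bit-length $\Oh((K^{\Oh(k)}) \log L)$.

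\textbf{Step 2: Combinatorial $n$-fold formulation.} One brick per column type $\vef$, with variables $x^{\vef}_e$ for each $e \in \Sigma$ encoding how many columns of type $\vef$ are assigned output letter $e$. The locally uniform constraints are $\sum_{e \in \Sigma} x^{\vef}_e = b^{\vef}$ (each such row is all-ones, as required by Definition~\ref{combinatorialNFoldDef}). The globally uniform constraints, by character-wise wildcard-compatibility,
\[
  \delta(y,s_i) \;=\; \sum_{\vef,\,e} \delta(e, f_i)\, x^{\vef}_e,
\]
become $d_i \leq \sum_{\vef, e} \delta(e, f_i)\, x^{\vef}_e \leq D_i$ for each $i \in [k]$, giving $r = 2k$ globally uniform rows after the inequality-to-equality conversion described at the start of Section~\ref{sec:applications}. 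The objective $b \cdot \sum_i \delta(y, s_i) = \sum_{\vef, e} \bigl(b\sum_i \delta(e,f_i)\bigr) x^{\vef}_e$ is linear and hence separable convex; if $b = 0$ the objective is identically zero and Theorem~\ref{thm:combinatorialnfold} reduces to feasibility testing.

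\textbf{Step 3: Parameter accounting.} I would check: $t = |\Sigma| \leq K$; $r = \Oh(k)$; $\Delta = 1 + \|D\|_\infty \leq 1 + \max_{e, f} \delta(e, f) \leq K + 1$; the number of bricks is $n \leq K^{\Oh(k)}$; and the encoded length is $L = \Oh(\log L)$ (the input multiplicities and distance bounds are all bounded by a polynomial in $L$ and $K$). Since the continuous relaxation is a linear program of size $K^{\Oh(k)}\log L$, it admits a polynomial-time optimization oracle. Theorem~\ref{thm:combinatorialnfold} then gives total run time
\[
  t^{\Oh(r)}(\Delta r)^{\Oh(r^2)}\, n^3 L \;=\; K^{\Oh(k)}\cdot (Kk)^{\Oh(k^2)} \cdot K^{\Oh(k)} \cdot \log L \;=\; K^{\Oh(k^2)}\Oh(\log L),
\]
as claimed.

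\textbf{Expected obstacle.} The only subtlety I anticipate is making sure the inequalities in the globally uniform rows (for the two-sided bounds $d_i \leq \delta(y,s_i) \leq D_i$) fit the rigid combinatorial $n$-fold template; this is handled by the inequality-gadget construction given at the start of Section~\ref{sec:applications}, which inflates $r$ and $t$ only by constant factors and introduces an extra brick, and is thus absorbed into the $K^{\Oh(k^2)}$ factor. Verifying that $b^{\vef}$ and the $\delta(e, f_i)$ entries are correctly produced by the succinct input, and that the wildcard convention $\delta(e, \star) = 0$ is compatible with the formulation, are routine checks.
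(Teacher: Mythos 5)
Your overall plan (succinct column types, one brick per input column type, linearized distance constraints via character-wise wildcard-compatibility, then invoke Theorem~\ref{thm:combinatorialnfold}) is the same as the paper's, but Step~2 as written does not produce a combinatorial $n$-fold IP, and this is precisely the subtlety the paper's introduction warns about. In your formulation the brick for column type $\vef$ has variables $x^{\vef}_e$, $e\in\Sigma$, and the coefficient of $x^{\vef}_e$ in the $i$-th globally uniform row is $\delta(e,f_i)$, which depends on $\vef$. Hence the top block of your constraint matrix is $(D_{\vef^1}~D_{\vef^2}~\cdots)$ with a \emph{different} matrix per brick, whereas Definition~\ref{combinatorialNFoldDef} requires the single matrix $D$ to be repeated identically in every brick. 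What you have written is a combinatorial \emph{pre}-$n$-fold IP, to which Theorem~\ref{thm:combinatorialnfold} does not directly apply.

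The fix is the dummy-variable padding from the introduction (formalized in Corollary~\ref{cor:combinatorial_prenfold}), which is what the paper does: every brick carries a variable $x^{\vece}_{\vealpha}$ for \emph{every} solution column type $\vealpha=(\vece',f)$, with $u^{\vece}_{\vealpha}=0$ whenever $\vece'\neq\vece$, so that a single $D$ indexed by all of $\T_s$ can be used in all bricks. This changes your parameter accounting: $t$ is not $|\Sigma|\leq K$ but $|\T_s|\leq(|\Sigma|+1)^k|\Sigma|\leq K^{2k}$. Fortunately the final bound survives, since $t^{\Oh(r)}=(K^{2k})^{\Oh(k)}=K^{\Oh(k^2)}$ and the layer-size bound~\eqref{LmaxBound} is polynomial in $t$; so the theorem's claimed run time is still obtained, but only after this correction. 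The rest of your argument (the $2k$ global inequalities, $\Delta\leq K+1$, $n\leq K^k$ bricks, the $b=0$ feasibility case, and the LP relaxation oracle) matches the paper.
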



When $\delta$ is the Hamming distance $d_H$, it is standard to first ``normalize'' the input to an equivalent instance over the alphabet $[k]$~\cite[Lemma 1]{HermelinRozenberg2015}.
Thus, for $\delta = d_H$ we get rid of the dependence on $|\Sigma|$:
\begin{theorem}
\label{thm:multistrings_hamming}
  $d_H$-{\sc Multi Strings} can be solved in time $k^{\Oh(k^2)} \Oh(\log L)$.
\end{theorem}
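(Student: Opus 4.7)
The plan is to obtain Theorem~\ref{thm:multistrings_hamming} as a direct corollary of Theorem~\ref{thm:multistrings} after applying a standard alphabet-normalization step to reduce $|\Sigma|$ to $\Oh(k)$. This is the reduction already alluded to in the paper via \cite[Lemma 1]{HermelinRozenberg2015}. The key observation is that since the Hamming distance at position $i$ depends only on the equality pattern among $y[i], s_1[i], \dots, s_k[i]$, one may relabel the characters independently in each column without changing any pairwise distance.

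Concretely, for each $i \in [L]$ the set $\{s_1[i], \dots, s_k[i]\}$ contains at most $k$ distinct non-wildcard characters; I would fix an injection $\pi_i$ from this set into $[k]$, reserve one extra symbol to represent output characters $y[i] \notin \{s_1[i], \dots, s_k[i]\}$ (all of which contribute $1$ to every Hamming distance at that column and are therefore interchangeable), and leave the wildcard $\star$ fixed (it contributes $0$ uniformly, by character-wise wildcard-compatibility of $d_H$). Applying each $\pi_i$ column-wise produces an equivalent instance of $d_H$-\textsc{Multi Strings} over an alphabet $\Sigma'$ of size at most $k+1$. For any candidate output string $y$ in the original instance, the induced string $y' \in (\Sigma')^L$ satisfies $d_H(y, s_j) = d_H(y', s_j')$ for all $j$, and conversely every $y'$ lifts to some valid $y$; in particular, all lower and upper distance bounds $d_j \leq \delta(y, s_j) \leq D_j$ and the summed objective $b\cdot\sum_j \delta(y,s_j)$ (when $b=1$) are preserved.

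Having normalized, I now feed the new instance into Theorem~\ref{thm:multistrings}. Its parameter becomes
\[
  K = \max\bigl\{|\Sigma'|,\, k,\, \max_{e,f \in \Sigma'} d_H(e,f)\bigr\} \leq \max\{k+1,\, k,\, 1\} = \Oh(k),
\]
so the bound $K^{\Oh(k^2)} \Oh(\log L)$ from Theorem~\ref{thm:multistrings} collapses to $k^{\Oh(k^2)} \Oh(\log L)$, which is exactly the claim. There is no real obstacle beyond carefully verifying that the column-wise relabeling is a bijection on distance values, which follows immediately from $d_H$ being character-wise wildcard-compatible; all the algorithmic substance has been packed into Theorem~\ref{thm:multistrings}, which in turn rests on the combinatorial $n$-fold IP machinery of Theorem~\ref{thm:combinatorialnfold}.
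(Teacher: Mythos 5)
Your proposal is correct and matches the paper's approach exactly: the paper also derives Theorem~\ref{thm:multistrings_hamming} from Theorem~\ref{thm:multistrings} via the standard column-wise alphabet normalization to $[k]$ (citing Lemma~1 of Hermelin and Rozenberg), which makes $K = \Oh(k)$. You merely spell out the relabeling argument that the paper delegates to the cited reference.
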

Then, Theorem~\ref{thm:strings_singleexp} below is a simple corollary of Theorem~\ref{thm:multistrings_hamming} and the fact that $\delta$-\textsc{Multi Strings} generalizes all the listed problems; see problem definitions and Table~\ref{tab:strings_results} below.

%
%

\begin{proof}[Proof of Theorem~\ref{thm:multistrings}]
  Let us fix an instance of $\delta$-{\sc Multi Strings}.
  We create an instance of combinatorial $n$-fold IP and show how solving it corresponds to solving the original $\delta$-\textsc{Multi Strings} problem.
    
  As is standard, we represent the input as an $L \times k$ matrix $C$ with entries from $\Sigma \cup \{\star\}$ whose rows are the input strings $s_1, \dots, s_k$.
  There are at most $T = (|\Sigma|+1)^k$ different \emph{input column types}; let $n_{\vece}$ be the number of columns of type $\vece \in (\Sigma \cup \{\star\})^k$ and denote $\T_c \subseteq (\Sigma \cup \{\star\})^k$ the set of input column types.
  A~solution can be represented as an $L \times (k+1)$ matrix with entries from $\Sigma \cup \{\star\}$ whose last row does not contain any $\star$ symbol.
  Thus, there are at most $(|\Sigma|+1)^k \cdot |\Sigma|$ \emph{solution column types} $\vealpha = (\vece, f) \in \big((\Sigma \cup \{\star\})^k \times \Sigma\big)$ and we denote $\T_s = \big((\Sigma \cup \{\star\})^k \times \Sigma\big)$ the set of all solution column types.
  We say that an input column type $\vece \in \T_c$ is \emph{compatible} with a solution column type $\vealpha \in \T_s$ if $\vealpha = (\vece, f)$ for some $f \in \Sigma$.
    
  Let us describe the combinatorial $n$-fold IP formulation.
  It consists of variables $x_{\vealpha}^{\vece}$ for each $\vealpha \in \T_s$ and each $\vece \in \T_c$.
  Intuitively, the variable $x_{\vealpha}^{\vece}$ encodes the number of columns $\vealpha$ in the solution; however, to obey the format of combinatorial $n$-fold IP, we need a copy of this variable for each brick, hence the upper index $\vece$.
  We set an upper bound $u_{\vealpha}^{\vece} = \|\veb\|_\infty$ for each two compatible~$\vece$ and $\vealpha$, and we set $u_{\vealpha}^{\vece} = 0$ for each pair which is not compatible; all lower bounds are set to $0$.
  The locally uniform constraints are simply $\sum_{\vealpha \in \T_s} x_{\vealpha}^{\vece} = n_{\vece}$ for all $\vece \in \T_c$.
  The globally uniform constraints are
    \begin{align*}
      \sum_{\vece \in \T_c} \  \sum_{\vealpha = (\vece', f) \in \T_s} \delta(f, e'_i) x_{\vealpha}^{\vece}& \geq d_i & \mbox{for all } s_i \in S \\
      \sum_{\vece \in \T_c} \ \sum_{\vealpha = (\vece', f) \in \T_s} \delta(f, e'_i) x_{\vealpha}^{\vece} & \leq D_i & \mbox{for all } s_i \in S
    \end{align*}
  and the objective is
  \[
    \min b \cdot \left(\sum_{i=1}^k \ \sum_{\vece \in \T_c} \ \sum_{\vealpha = (\vece', f) \in \T_s} \delta(f, e'_i) x_{\vealpha}^{\vece}\right) \enspace .
  \]
  We then apply Theorem~\ref{thm:combinatorialnfold} with the following set of parameters:
  \begin{itemize}
     \item $\hat{\Delta}$ is one plus the largest coefficient in $D$, which is $\displaystyle{1+\max_{e,f \in \Sigma} \delta(e,f) \leq 1+ K}$,
     \item $\hat{r}$ is the number of globally uniform constraints, which is $2k$,
     \item $\hat{t}$ is the number of variables per brick, which is \mbox{$|\T_s| \leq (|\Sigma|+1)^k |\Sigma| \leq K^{2k}$},
     \item $\hat{n}$ is the number of bricks, which is $|\T_c| \leq (|\Sigma|+1)^k \leq K^k$, and,
      \item $\hat{L}$ is the size of the input $\l \veb, \mathbf{0}, \veu, \vew \r \leq \log L$. \qedhere
  \end{itemize}
\end{proof}

The problem definitions follow.
Some problems reduce to solving polynomially (in $L$) or $k^k$ many instances of \textsc{Closest String}.
In such a case, we mention the fact after introducing the problem, and say that the problem \emph{poly-reduces} or \emph{\FPT-reduces} to {\sc Closest String}.

\prob{{\sc Closest String}}
{Strings $s_1, \dots, s_k \in \Sigma^L$, $d \in \N$.}
{A~string $y \in \Sigma^L$ such that $d_H(y, s_i) \leq d$ for all $s_i \in S$.}

\prob{{\sc Farthest String}}
{Strings $s_1, \dots, s_k \in \Sigma^L$, $d \in \N$.}
{A~string $y \in \Sigma^L$ such that $d_H(y, s_i) \geq d$ for all $s_i \in S$.}

\prob{{\sc $d$-Mismatch}}
{Strings $s_1, \dots, s_k \in \Sigma^L$, $d \in \N$.}
{A~string $y \in \Sigma^{L'}$ with $L' \leq L$ and a position $p \in [L-L']$ such that $d_H(y, s_{i,p,L'}) \geq d$ for all $s_i \in S$, where $s_{i,p,L'}$ is the substring of $s_i$ of length $L'$ starting at position $p$.}
\textbf{Note.} Gramm et al.~\cite{GrammEtAl2003} observe that $d$-{\sc Mismatch} poly-time reduces to {\sc Closest String}.

\prob{{\sc Distinguishing String Selection} (DSS)}
{Bad strings $S=\{s_1, \dots, s_{k_1}\}$, good strings $S' = \{s'_1, \dots, s'_{k_2}\}$, $k=k_1 + k_2$, $d_1, d_2 \in \N$, all strings of length $L$ over alphabet $\Sigma$.}
{A~string $y \in \Sigma^L$ such that $d_H(y, s_i) \leq d_1$ for each bad string $s_i$ and $d_H(y, s'_i) \geq L-d_2$ for each good string $s'_i$.}

\prob{{\sc Neighbor String}}
{Strings $s_1, \dots, s_k \in \Sigma^L$, $d_1, \dots, d_k \in \N$.}
{A~string $y \in \Sigma^L$ such that $d_H(y, s_i) \leq d_i$ for all $i \in [k]$.}
\textbf{Note.} \textsc{Neighbor String} is studied by Nishimura and Simjour~\cite{NishimuraSimjour2012}.
It generalizes DSS: given an instance of DSS, create an instance of {\sc Neighbor String} with $d_i = 0$ and $D_i = d_1$ for all bad strings, and $d_i = d_2$ and $D_i = L$ for all good strings.

\prob{{\sc Closest String with Wildcards}}
{Strings $s_1, \dots, s_k \in (\Sigma \cup \{\star\})^L$, $d \in \N$.}
{A~string $y \in \Sigma^L$ such that $d_H(y, s_i) \leq d$ for all $i \in [k]$, where $d_H(e, \star) = 0$ for any $e \in \Sigma$.}

\prob{{\sc Closest to Most Strings} (also known as {\sc Closest String with Outliers})}
{Strings $s_1, \dots, s_k \in \Sigma^L$, $o \in [k]$, $d \in \N$.}
{A~string $y \in \Sigma^L$ and a set of outliers $O \subseteq \{s_1, \dots, s_k\}$ such that $d_H(y, s_i) \leq d$ for all $s_i \not\in O$ and $|O| \leq o$.}
\textbf{Note.} {\sc Closest to Most Strings} is \FPT-reducible to {\sc Closest String} with parameter $k$~\cite{BoucherMa2011}.

\prob{$c$-{\sc Hamming Radius Clustering} ($c$-HRC)}
{Strings $s_1, \dots, s_k \in \Sigma^L$, $d \in \N$.}
{A~partition of $\{s_1, \dots, s_k\}$ into $S_1, \dots, S_c$ and output strings $y_1, \dots, y_c \in \Sigma^L$ such that $d(y_i, s_j) \leq d$ for all $i \in [c]$ and $s_j \in S_i$.}


\prob{{\sc Optimal Consensus}}
{Strings $s_1, \dots, s_k \in \Sigma^L$, $d \in \N$.}
{A~string $y \in \Sigma^L$ such that $d_H(y, s_i) \leq d$ for all $i \in [k]$ and $\sum_{i \in [k]} d_H(y, s_i)$ is minimal.}

See Table~\ref{tab:strings_results} for a summary of our improvements for the above-mentioned problems.

\begin{table}[htb]
  \centering
  \resizebox{\textwidth}{!}{%
    \begin{tabular}{lll}
      \toprule
      Problem   & Specialization of $\delta$-{\sc Multi Strings} & Previous best run time / hardness\\
      \midrule
      {\sc Closest String} & $D_i = d$ for all $i \in [k]$ & $2^{2^{\Oh(k \log k)}} \log^{\Oh(1)} n$ \cite{GrammEtAl2003}\\
      \midrule
      {\sc Farthest String} & $d_i = d$ for all $i \in [k]$ & $2^{2^{\Oh(k \log k)}} \log^{\Oh(1)} n$ \cite[``implicit'']{GrammEtAl2003}\\
      \midrule
      $d$-{\sc Mismatch} & poly-reduces to {\sc Closest String}~\cite{GrammEtAl2003} & $2^{2^{\Oh(k \log k)}} \log^{\Oh(1)} n$~\cite{GrammEtAl2003}\\
      \midrule
      {\sc Distinguishing String Selection} & special case of {\sc Neighbor String} & $2^{2^{\Oh(k \log k)}} \log^{\Oh(1)} n$~\cite[``implicit'']{GrammEtAl2003}\\
      \midrule
      {\sc Neighbor String} & $D_i = d_i$ for all $i \in [k]$ & $2^{2^{\Oh(k \log k)}} \log^{\Oh(1)} n$~\cite[``implicit'']{GrammEtAl2003}\\
      \midrule
      {\sc Closest String with Wildcards} & $D_i = d$ for all $i \in [k]$ & $2^{2^{\Oh(k \log k)}} \log^{\Oh(1)} n$ \cite{HermelinRozenberg2015}\\
      \midrule
      {\sc Closest to Most Strings} & \FPT-reduces to \textsc{Closest String}~\cite{BoucherMa2011} & $2^{2^{\Oh(k \log k)}} \log^{\Oh(1)} n$~\cite[implicit]{GrammEtAl2003} \\
      \midrule
      $c$-HRC & \FPT-reduces to \textsc{Closest String}~\cite{AmirEtAl2014} & $2^{2^{\Oh(k \log k)}} \log^{\Oh(1)} n$~\cite[implicit]{GrammEtAl2003} \\
      \midrule
      {\sc Optimal Consensus} & $D_i=d$ for all $i \in [k]$, $b=1$ & \FPT for $k=3$, open for $k>3$~\cite{AmirEtAl2011} \\
      \bottomrule
  \end{tabular}}
  \caption{If the ``specialization'' row does not contain a value, it means its ``default'' value is assumed.
    The default values are $\delta = d_H$, $b=0$, and for all $i \in [k]$, $d_i = 0$, $D_i = L$.
    In each row corresponding to a problem, the last column gives the run time of the algorithm with the slowest-growing dependency on $k$.
    Most problems either reduce to {\sc Closest String} and thus derive their time complexity from the result of Gramm et al. even though the original paper does not mention these problems; in that case, we write~\cite[implicit]{GrammEtAl2003}.
    For some problems, no fixed-parameter algorithm was known before, but it is not difficult to see that the ILP formulation of Gramm et al. could be modified to model these problems as well; in that case, we write~\cite[``implicit'']{GrammEtAl2003}.}
\label{tab:strings_results}
\end{table}

\subsection{Computational social choice}
\label{subsec:bribery}
For simplicity, we only show how Theorem~\ref{thm:combinatorialnfold} can be applied to speed up the $\mathcal{R}$-{\sc Swap Bribery} for two representative voting rules $\mathcal{R}$.
Let us first introduce the necessary definitions and terminology.

\medskip
\noindent
\textbf{Elections.}
An election~$(C,V)$ consists of a set $C$ of candidates and a set~$V$ of voters, who indicate their preferences over the candidates in $C$, represented via a \emph{preference order} $\pref_v$ which is a total order over $C$.
For ranked candidates~$c$ we denote by $\textrm{rank}(c,v)$ their rank in~$\pref_v$; then $v$'s most preferred candidate has rank 1 and their least preferred candidate has rank $|C|$.
For distinct candidates~$c,c'\in C$, we write $c\pref_v c'$ if voter~$v$ prefers~$c$ over~$c'$.
To simplify notation, we sometimes identify the candidate set~$C$ with $\{1,\hdots,|C|\}$, in particular when expressing permutations over $C$.
We sometimes identify a voter~$v$ with their preference order $\pref_v$, as long as no confusion arises.

\medskip
\noindent
\textbf{Swaps.}
Let $(C,V)$ be an election and let $\pref_v\in V$ be a voter.
For candidates $c,c'\in C$, a \emph{swap} $s = (c,c')_v$ means to exchange the positions of $c$ and $c'$ in~$\pref_v$; denote the perturbed order by~$\pref_v^s$.
A~swap~$(c,c')_v$ is \emph{admissible in $\pref_v$} if $\rank(c,v) = \rank(c',v)-1$.
A~set $S$ of swaps is \emph{admissible in $\pref_v$} if they can be applied sequentially in~$\pref_v$, one after the other, in some order, such that each one of them is admissible.
Note that the perturbed vote, denoted by $\pref_v^S$, is independent from the order in which the swaps of $S$ are applied.
We also extend this notation for applying swaps in several votes and denote it $V^S$.
We specify $v$'s cost of swaps by a function $\sigma^v: C\times C\rightarrow \mathbb{Z}$.

\medskip
\noindent
\textbf{Voting rules.}
A~voting rule~$\mathcal R$ is a function that maps an election $(C,V)$ to a subset $W\subseteq C$ of \emph{winners}.
Let us define two significant classes of voting rules:

\textit{Scoring protocols.}
A~scoring protocol is defined through a vector $\ves = (s_1,\hdots,s_{|C|})$ of integers with \mbox{$s_1\geq \cdots\geq s_{|C|} \geq 0$}.
For each position ${p\in\{1,\hdots,|C|\}}$, value $s_p$ specifies the number of points that each candidate $c$ receives from each voter that ranks $c$ as $j^{\text{th}}$ best.
Any candidate with the maximum number of points is a winner.
Examples of scoring protocols include the Plurality rule with $\ves = (1,0,\ldots,0)$, the $d$-Approval rule with $\ves = (1,\ldots,1,0,\ldots,0)$ with $d$ ones, and the Borda rule with $\ves = (|C|-1, |C|-2, \ldots, 1, 0)$.
Throughout, we consider only \emph{natural} scoring protocols for which $s_1 \leq |C|$; this is the case for the aforementioned popular rules.

\textit{C1 rules.}
A~candidate~$c\in C$ is a \emph{Condorcet winner} if any other~$c'\in C \setminus \{c\}$ satisfies $|\{\pref_v \in V \mid c\pref_v c' \}| > |\{v \in V \mid c' \pref_v c\}|$.
A~voting rule is \emph{Condorcet-consistent} if it selects the Condorcet winner in case there is one.
Fishburn~\cite{Fishburn1977} classified all Condorcet-consistent rules as C1, C2 or C3, depending on the kind of information needed to determine the winner.
For candidates $c,c' \in C$ let $v(c,c')$ be the number of voters who prefer $c$ over $c'$, that is, $v(c,c') = |\{\pref_v \in V \mid c \pref_v c'\}|$;
we write $c <_M c'$ if $c$ beats $c'$ in a head-to-head contest, that is, if $v(c,c') > v(c',c)$.

A~rule \emph{$\mathcal R$ is C1} if knowing $<_M$ suffices to determine the winner, that is, for each pair of candidates $c,c'$ we know whether $v(c,c') > v(c',c), v(c,c') < v(c',c)$ or $v(c,c') = v(c',c)$.
An example is the Copeland$^\alpha$ rule for $\alpha \in [0,1]$, which specifies that for each head-to-head contest between two distinct candidates, if some candidate is preferred by a majority of voters then they obtain one point and the other candidate obtains zero points, and if a tie occurs then both candidates obtain $\alpha$ points; the candidate with largest sum of points wins.

\prob{$\R$-{\sc Swap Bribery}}
{An election $(C,V)$, a designated candidate $c^{\star} \in C$ and swap costs $\sigma^v$ for $v \in V$.}
{A~set $S$ of admissible swaps of minimum cost so that $c^{\star}$ wins the election $(C, V^S)$ under the rule $\R$.}

We say that two voters $v,v'$ are of the same \emph{type} if $\pref_{v} = \pref_{v'}$ and $\sigma^{v} = \sigma^{v'}$.
%
We are ready to prove Theorem~\ref{thm:applicationBribery}:
\begin{proof}[Proof of Theorem~\ref{thm:applicationBribery}.]
  Let $n_1, \dots, n_{T}$ be the numbers of voters of given types.
  Let $x_j^i$ for $j \in [|C|!]$ and $i \in [T]$ be a variable encoding the number of voters of type $i$ that are bribed to be of order $j$ in the solution.
  With slight abuse of notation, we denote $\sigma^i(i,j)$ the cost of bribery for a voter of type $i$ to change order to $j$ (as by \cite[Proposition~3.2]{ElkindEtAl2009} this cost is fixed).
  Regardless of the voting rule $\R$, the objective and the locally uniform constraints are identical:
\[
    \min \sum_{i=1}^{T} \sum_{j=1}^{|C|!} \sigma^i(i,j) x_j^i~\mbox{subject to}~\sum_{j=1}^{|C|!} x_j^i = n_i~\mbox{for all}~i \in [T] \enspace .
\]
  The number of variables per brick $\hat{t}$ is $|C|!$, the number of bricks $\hat{n}$ is $T$, and the size of the instance $\hat{L}$ is $\log n + \log (|C|^2 \sigma_{\max})$, because at most $|C|^2$ swaps suffice to permute any order $i \in \left[|C|!\right]$ to any other order $j \in [|C|!]$~\cite[Proposition~3.2]{ElkindEtAl2009}.
  Let us now describe the globally uniform constraints separately for the two classes of voting rules which we study here.
    
  \paragraph{Natural scoring protocol.}
  Let $\ves = (s_1, \dots, s_{|C|})$ be a natural scoring protocol, i.e., \mbox{$s_1 \geq \dots \geq s_{|C|}$} and \mbox{$\|\ves\|_\infty \leq |C|$}.
  With slight abuse of notation, we denote $s_j(c)$, for $j \in [|C|!]$ and $c \in C$, the number of points obtained by candidate $c$ from a voter of order $j$.
  The globally uniform constraints then enforce that $c^{\star}$ gets at least as many points as any other candidate~$c$:
  \begin{align*}
    \sum_{i=1}^{T} \sum_{j=1}^{|C|!} s_j(c) x_j^i &\leq \sum_{i=1}^{T} \sum_{j=1}^{|C|!} s_j(c^{\star}) x_j^i & \mbox{for all } c \in C, c \neq c^{\star} \enspace .
  \end{align*}
  The number $\hat{r}$ of these constraints is $|C|-1$, and the largest coefficient in them is \mbox{$\|\ves\|_\infty \leq |C|$} (as is $\ves$ being a natural scoring protocol); therefore, $\hat{\Delta} = 1 + \|\ves\| \leq 1 + |C|$.
    
  \paragraph{Any C1 rule.}
  Let $\alpha_j(c,c')$ be $1$ if a voter with order $j \in [|C|!]$ prefers $c$ to $c'$ and $0$ otherwise.
  Recall that a voting rule is C1 if, to determine the winner, it is sufficient to know, for each pair of candidates $c,c'$, whether $v(c,c') > v(c',c), v(c,c') < v(c',c)$ or $v(c,c') = v(c',c)$; where $v(c,c') = |\{v \mid c \pref_v c'\}|$.
  We call a tuple $<_M \in \{<, =, >\}^{|C|^2}$ a \emph{scenario}.
  Thus, a C1 rule can be viewed as partitioning the set of all scenarios into those that select $c^{\star}$ as a winner and those that do not.
  Then, it suffices to enumerate all the at most $3^{|C|^2}$ scenarios~$<_M$ where $c^{\star}$ wins, and for each of them to solve a combinatorial $n$-fold IP with globally uniform constraints enforcing the scenario $<_M$:
  \begin{align*}
    \sum_{i=1}^{T} \sum_{j=1}^{|C|!} \alpha_j(c,c') x_j^i &> \sum_{i=1}^{T} \sum_{j=1}^{|C|!} \alpha_j(c',c) x_j^i & \mbox{for all } c, c' \in C \mbox{ s.t. } c <_M c' \\
    \sum_{i=1}^{T} \sum_{j=1}^{|C|!} \alpha_j(c,c') x_j^i &= \sum_{i=1}^{T} \sum_{j=1}^{|C|!} \alpha_j(c',c) x_j^i & \mbox{for all } c, c' \in C  \mbox{ which are incomparable.}
  \end{align*}
  The number $\hat{r}$ of these constraints is $\binom{|C|}{2} \leq |C|^2$, and the largest coefficient in them is 1, so $\hat{\Delta} = 2$.
  The proof is finished by plugging in the values $\hat{\Delta}, \hat{r}, \hat{t}, \hat{n}$ and $\hat{L}$ into Theorem~\ref{thm:combinatorialnfold}.
\end{proof}

\subsection{Huge $n$-fold integer programming with small domains}
\label{subsec:hugenfold}
Onn introduces the {\sc Huge $n$-fold integer programming} problem~\cite{Onn2014,OnnSarrabezolles2015}, which concerns problems that can be formulated as an $n$-fold IP with the number of bricks $n$ given in binary.
Bricks are thus represented not explicitly, but succinctly by their multiplicity.
It is at first unclear if this problem admits an optimal solution which can be encoded in polynomial space, but this is possible by a theorem of Eisenbrand and Shmonin~\cite[Theorem~2]{EisenbrandShmonin2006}, as pointed out by Onn~\cite[Theorem~1.3~(1)]{Onn2014}.
Thus, the problem is as follows:

Let $E = \left(\begin{smallmatrix}D\\A\end{smallmatrix}\right) \in \Z^{(r+s) \times t}$ be a bimatrix.
Let $T$ be a positive integer representing the \emph{number of types of bricks}.
We are given $T$ positive integers $n_1, \dots, n_{T}$ with $n = \sum_{i=1}^{T} n_i$ and vectors $\veb^0 \in \Z^r$ and $\vel^i, \veu^i \in \Z^t$ and $\veb^i \in \Z^s$, and a separable convex function $f^i$, for every $i\in [T]$.
For $i \in [T]$ and $\ell \in [n_i]$ we define the \emph{index function} $\iota$ as $\iota(i, \ell) := \left(\sum_{j=1}^{i-1} n_j \right) + \ell$.

We call an $n$-fold IP instance given by the constraint matrix $E^{(n)}$ with the right hand side $\hat{\veb}$ defined by $\hat{\veb}^{\iota(i,\ell)} := \veb^i$ and $\hat{\veb}^0 := \veb^0$, lower and upper bounds defined by $\hat{\vel}^{\iota(i, \ell)} := \vel^i$ and $\hat{\veu}^{\iota(i,\ell)} := \veu^i$ with the objective function $f(\vex) := \sum_{i=1}^n \sum_{\ell=1}^{m_i} f^i\big(\vex^{\iota(i,\ell)}\big)$ the \emph{huge instance}.

For $i \in [T]$ and $\ell \in [n_i]$, and a feasible solution $\vex$ of the huge instance, we say that the brick $\vex^{\iota(i, \ell)}$ is of \emph{type} $i$, and we say that~$\vex^{\iota(i,\ell)}$ has \emph{configuration} $\vecc \in \Z^t$ with $\hat{\vel}^i\le\vecc\le\hat{\veu}^i$ if $\vex^{\iota(i,\ell)} = \vecc$.
The \emph{succinct representation of $\vex$} is the set of tuples 
$\left\{ \left(\vex^{i,j}, m^{i,j} \right) \mid \mbox{$\vex$ has $m^{i,j}$ bricks of type $i$ with configuration $\vex^{i,j}$}\right\}$.

\medskip
\prob{\textsc{Huge $n$\hy fold Integer Programming}}
{
    Bimatrix $E = \left(\begin{smallmatrix}D\\A\end{smallmatrix}\right) \in \Z^{(r+s) \times t}$, positive integers $n_1, \dots, n_{T}$, $\veb^0 \in \Z^r$, for every $i \in [T]$ vectors $\vel^i, \veu^i \in \Z^t$ and $\veb^i \in \Z^s$ and a separable convex function~$f^i$.
}
{
    The succinct representation of an optimal solution, if such exists.
}

In the special case of small domains we obtain the following:
\begin{theorem}\label{thm:hugeNFold}
  Let $d_1, \dots, d_t \in \N$ be such that $d_j = \max_{i \in [n]} u_j^i - l_j^i$, $d_{\max} = \max_{j \in [t]} d_j$ and let $\delta = \prod_{j=1}^t d_j$.
  Then the huge $n$-fold IP problem can be solved in time $\delta^{\Oh(r)} (t d_{\max} \|D\|_\infty r)^{\Oh(r^2)} \Oh(T^3 \log n)$.\footnote{In fact, our result holds even in the case when $f$ is an arbitrary (i.e. non-convex) function, but this does not imply any more power because of bounded domains.}
\end{theorem}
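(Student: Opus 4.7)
The plan is to reduce a \textsc{Huge $n$-fold IP} instance to a combinatorial $n$-fold IP via a \emph{configuration-based} encoding, then invoke Theorem~\ref{thm:combinatorialnfold}. For each type $i \in [T]$, I would first enumerate all vectors $\vecc \in \Z^t$ satisfying $\vel^i \leq \vecc \leq \veu^i$ and $A\vecc = \veb^i$; after shifting each type's domain so that $\mathbf{0} \leq \vecc \leq \veu^i - \vel^i$, there are at most $\delta = \prod_j d_j$ such configurations, and each has $\|\vecc\|_\infty \leq d_{\max}$. Call these the \emph{feasible configurations} $\C_i$ of type $i$. The key observation is that, in the huge instance, which bricks of a given type take which configuration is irrelevant: the feasibility of the globally uniform constraint and the value of $f$ depend only on the multiset of (type, configuration) pairs used. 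Hence an optimal solution is fully described by integers $y^i_{\vecc}$ ($i \in [T]$, $\vecc \in \C_i$) counting how many bricks of type $i$ receive configuration $\vecc$; this is exactly the succinct representation requested.

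Next I would write down a combinatorial $n$-fold IP in the variables $\vey = (y^i_{\vecc})$ with $T$ bricks (one per type), where brick $i$ has one variable per configuration in $\C_i$ (padding with zero-bounded dummy variables so that every brick has the same width $\hat t \le \delta$). The locally uniform constraints take the required form $\mathbf{1}^{\transpose} \vey^i = n_i$, enforcing that all $n_i$ bricks of type $i$ are assigned some configuration. The globally uniform constraints are
\[
  \sum_{i=1}^{T} \sum_{\vecc \in \C_i} (D\vecc)\, y^i_{\vecc} \;=\; \veb^0 - \sum_{i=1}^T n_i D\vel^i,
\]
i.e.\ the constraint matrix block $\hat D$ has columns $D\vecc$ indexed by configurations, giving $\hat r = r$. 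The objective is $\sum_i \sum_{\vecc \in \C_i} f^i(\vecc)\, y^i_{\vecc}$, which is \emph{linear} in $\vey$ and hence trivially separable convex regardless of whether the original $f^i$ are convex---this justifies the footnote claim that convexity is unnecessary here, since the convexity requirement is transferred onto the tabulated coefficients.

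To invoke Theorem~\ref{thm:combinatorialnfold} I read off the parameters: $\hat t \le \delta$, $\hat n = T$, $\hat r = r$, and $\|\hat D\|_\infty \leq t\, d_{\max}\, \|D\|_\infty$ since every column $D\vecc$ has entries bounded by $\sum_j |D_{k,j}|\,|c_j| \le t\|D\|_\infty d_{\max}$; thus $\hat\Delta \le 1 + t\,d_{\max}\,\|D\|_\infty$. The encoding length $\hat L$ is $O(\log n + \log \|\veb\|_\infty + \log \|\vel,\veu\|_\infty)$. Because the objective is linear, the continuous relaxation of the new IP is an ordinary linear program, so the oracle cost $\mathsf{oo}$ is polynomial and absorbed. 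Plugging in yields the stated bound $\delta^{O(r)} (t\,d_{\max}\,\|D\|_\infty\, r)^{O(r^2)} \cdot O(T^3 \log n)$, and the returned vector $\vey$ is already the desired succinct representation.

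The main obstacle I anticipate is the bookkeeping: verifying the correspondence between integer solutions of the huge instance and of the reformulated combinatorial $n$-fold IP (especially after the domain shift, which changes $\veb^0$ and the objective constant), and carefully padding bricks of different types so that they share a common variable layout of width $\hat t$ while preserving the form $A = \mathbf{1}^{\transpose}$. Once this is set up, enumeration of $\C_i$ runs in time $O(\delta \cdot t \cdot T)$ to build $\hat D$ and evaluate $f^i(\vecc)$, which is dominated by the combinatorial $n$-fold IP solve and hence does not affect the claimed run time.
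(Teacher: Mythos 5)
Your proposal matches the paper's proof essentially step for step: shift each type's domain to start at $\mathbf{0}$, enumerate the at most $\delta$ configurations per type, introduce counting variables $y^i_{\vecc}$ with locally uniform constraints $\mathbf{1}^{\transpose}\vey^i = n_i$ and globally uniform constraints with columns $D\vecc$ (the paper writes this as $DC\vey=\veb^0$ with $C$ the matrix of all configurations, zeroing out infeasible ones via bounds rather than padding, which is the same device), and read off $\hat\Delta \le 1 + t\,d_{\max}\|D\|_\infty$, $\hat r = r$, $\hat t \le \delta$, $\hat n = T$. Your added remarks that the reformulated objective is linear (justifying the footnote) and that the continuous relaxation is an LP are correct and consistent with, though not spelled out in, the paper.
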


This result is useful for the following reason.
Knop et al.~\cite{KnopEtAl2017} obtain $n$-fold IP formulations with small domains for the very general $\R$-\textsc{Multi Bribery} problem.
Together with Theorem~\ref{thm:hugeNFold}, this immediatelly implies an exponential speedup in the number of bricks, without having to reformulate these problems as combinatorial $n$-fold IPs.
However, there are still benefits in using Theorem~\ref{thm:combinatorialnfold} directly (as shown in previous sections) as it leads to better dependence on the respective parameters.

\begin{proof}[Proof of Theorem~\ref{thm:hugeNFold}]
  Let $E, n_1, \dots, n_{T}, \veb, \vel, \veu, f$ be an instance $I$ of \textsc{Huge $n$\hy{}fold IP}.
  First, we shall prove that we can restrict our attention to the case where $\vel = \mathbf{0}$ and $\veu^i \leq (d_1, \dots, d_t) = \ved$ for all $i \in [T]$.
  Consider a variable $x_j^i$ with $l_j^i \neq 0$ and any row $\vece^\transpose \vex = b$ of of the system $E^{(n)} \vex = \veb$.
  Because the contribution of $x_j^i$ to the right hand side is $e_j^i x_j^i$, we have that 
 \[
    \vece \vex = b \Leftrightarrow \vece \vex - e_j^i l_j^i = b - e_j^i l_j^i \enspace .
\]
  Let $I'$ be an instance of \textsc{Huge $n$-fold IP} obtained from $I$ by, for every row $\vece \vex = b$ and every variable $x_j^i$, changing the right hand side from $b$ to $b - e_j^i l_j^i$, and setting $u_j^i := u_j^i - l_j^i$ and $l_j^i := 0$.
  Clearly there is a bijection between the feasible solutions of $I$ and $I'$ such that if $\vex-\vel$ is a feasible solution of $I'$, $\vex$ is a feasible solution of $I$, and thus minimizing $f(\vex - \vel)$ over $I'$ is equivalent to minimizing $f(\vex)$ over $I$.
  Thus, from now on assume that $I$ satisfies $\vel = \mathbf{0}$ and $\veu^i \leq \ved$ for all $i \in [T]$.
    
  Let $\C^i$ for $i \in [T]$ be the set of all possible configurations of a brick of type~$i$, defined as $\C^i = \left\{\vecc \in \Z^t \mid A \vecc = \veb^i, \mathbf{0} \leq \vecc \leq \veu^i \right\}$ and let $\C = \prod_{j=1}^t [0\,:\,d_j]$ be the set of all configurations.
  Clearly, $\C^i \subseteq \C$ for all $i \in [T]$, and $|\C| = \delta$.
  Let $C \in \Z^{t \times \delta}$ be a matrix whose columns are all configurations from $\C$.
    
  We shall give a combinatorial $n$-fold IP formulation solving the huge $n$-fold IP instance $I$.
  The formulation contains variables $y_{\vecc}^i$ for each $\vecc \in \C$ and each $i \in [T]$ encoding how many bricks of type $i$ have configuration $\vecc$ in the solution of $I$. The formulation then is
    \begin{align}
             \min & & \hat{f}(\vey)                  & = \sum_{i=1}^{T} \sum_{\vecc \in \C} f^i(\vecc) y_{\vecc}^i & \label{eqn:hugenfold:obj}\\
      \mbox{s.t.} & & DC \vey                        & = \veb^0                                                    & \label{eqn:hugenfold:DC}\\
                  & & \mathbf{1}^{\transpose} \vey^i & = n_i                                                       & \mbox{for all } i \in [T] \label{eqn:hugenfold:local}\\
                  & & y_{\vecc}^i                    & = 0                                                         & \mbox{for all } \vecc \not\in \C^i, i \in [T] \label{eqn:hugenfold:notconf}\\
                  & & 0 \le y_{\vecc}^i              & \leq \|\veb\|_\infty                                        & \mbox{for all } \vecc \in \C^i, i \in [T] \enspace   .
    \label{eqn:hugenfold:conf}
    \end{align}
  It remains to verify that the formulation above corresponds to the huge $n$-fold IP instance $I$.
  The objective~\eqref{eqn:hugenfold:obj} clearly has the same value.
  Consider the globally uniform constraints~\eqref{eqn:hugenfold:DC}. In the huge $n$-fold IP instance, a configuration $\vecc \in \C^i$ of a brick of type $i \in [T]$ contributes $D\vecc$ to the right hand side in the first $r$ rows.
  This corresponds in our program to the column $D\vecc$ of the matrix~$DC$.
  The locally uniform constraints~\eqref{eqn:hugenfold:local} simply state that the solution needs to contain exactly $n_i$ bricks of type $i$.
  Finally, since a brick of type $i$ can never have a configuration $\vecc \not\in \C^i$ we set all variables $y_{\vecc}^i$ with $\vecc \not\in \C^i$ to zero with the upper bound~\eqref{eqn:hugenfold:notconf}, and place no restrictions on $y_{\vecc}^i$ with $\vecc \in \C^i$~\eqref{eqn:hugenfold:conf}.
    
  The parameters $\hat{\Delta},\hat{r},\hat{t},\hat{n},\hat{L}$ of the resulting combinatorial $n$-fold IP are:
  \begin{itemize}
    \item $\hat{\Delta}$ is one plus the largest coefficient in the upper matrix $\hat{D} = DC$, which is \mbox{$\|DC\|_\infty \leq td_{\max}\|D\|_\infty$},
    \item the number of globally uniform constraints $\hat{r} = r$,
    \item the number of variables in a brick $\hat{t} = \delta$,
    \item the number of bricks $\hat{n} = T$, and,
    \item the input length $\hat{L} = \l \hat{\veb}, \mathbf{0}, \hat{\veu}, \hat{f} \r \leq \log n \cdot \left(\max_{i \in [T]} \max_{\vecc \in \C^i} f^i(\vecc)\right)$. \qedhere
  \end{itemize}
\end{proof}

\subsection{Miscellaneous}
Recall from Sect.~\ref{sec:introduction} our comparison of the Gramm et al.~\cite{GrammEtAl2003} ILP for \textsc{Closest String} to a similar combinatorial $n$-fold IP:
\[
\begin{array}{ccccl|cccccl}
D_1    & D_2    & \cdots & D_{k^k} & \leq d  & \hspace{9.5em}~
& D & D & \cdots & D & \leq d
\\
\mathbf{1}^{\transpose}    & 0      & \cdots & 0 & = b^1 & \hspace{20em}   
& \mathbf{1}^{\transpose} & 0 & \cdots & 0 & = b^1 
\\
0      & \mathbf{1}^{\transpose}    & \cdots & 0 & = b^2 & \hspace{20em} 
& 0      & \mathbf{1}^{\transpose}    & \cdots & 0 & = b^2
\\
\vdots & \vdots & \ddots & \vdots &  =  \vdots & \hspace{20em}
& \vdots & \vdots & \ddots & \vdots &  =  \vdots
\\
0      & 0      & \cdots & \mathbf{1}^{\transpose} & = b^{k^k} & \hspace{20em}  
& 0      & 0      & \cdots & \mathbf{1}^{\transpose} & = b^{k^k},
\\
\end{array}
\]
where $D = (D_1~D_2~\dots~D_{k^k})$.

This similarity strongly suggests a general way how to construct the formulation on the right given the formulation on the left.
Since formulations like the one on the left are ubiquitous in the literature, this would immediatelly imply exponential speed-ups for all such problems.

\begin{definition}[Combinatorial pre-$n$-fold IP]
  Let $T, r, t_1, \dots, t_T \in \N$ and $D_i \in \Z^{r \times t_\tau}$ for each $\tau \in [T]$.
  Let $t = t_1 + \cdots t_T$ and $D = (D_1 \cdots D_T)$.
  Let
  \[
    F = \left[\begin{array}{cccc}
              D_1                     & D_2                        & \cdots & D_\tau \\
              \mathbf{1}^{\transpose} & 0                          & \cdots & 0      \\
              0                       & \mathbf{1}^{\transpose}    & \cdots & 0      \\
              \vdots                  & \vdots                     & \ddots & \vdots \\
              0                       & 0                          & \cdots & \mathbf{1}^{\transpose}\\
              \end{array}\right] \enspace .
  \]
  Moreover, let $\veb = (\veb^0, b^1, \dots, b^T)$ with $\veb^0 \in \Z^r$ and $b^\tau \in \Z$ for each $\tau \in [T]$, $\vel, \veu \in \Z^t$, and let $f: \Z^t \to \Z$ be a separable convex function.
  Then for $\diamondsuit \in \{<, \leq, =, \geq, >\}^{r+T}$, a \emph{combinatorial pre-$n$-fold IP} is the problem
  \[
    \min \big\{  f(\vex) \mid F \vex \diamondsuit \veb, \vel \leq \vex \leq \veu \big\} \enspace .
  \]
\end{definition}

\begin{corollary}
\label{cor:combinatorial_prenfold}
  Any combinatorial pre-$n$-fold IP with $L = \l \veb, \vel, \veu, f \r$ and $\Delta = 1 + \|D\|_\infty$ can be solved in time $t^{\Oh(r)} (\Delta r)^{\Oh(r^2)}\Oh(n^3 L) + \mathsf{oo}$, where $\mathsf{oo}$ is the time required by one call to an optimization oracle of the continuous relaxation of the given IP.
\end{corollary}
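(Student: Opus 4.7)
The plan is to convert the given combinatorial pre-$n$-fold IP into an equivalent combinatorial $n$-fold IP (in the sense of Definition~\ref{combinatorialNFoldDef}) whose parameters are only mildly larger, and then invoke Theorem~\ref{thm:combinatorialnfold} directly. The first step is to unify the brick widths. Set $D := (D_1 \mid D_2 \mid \cdots \mid D_T)$, so that $D \in \Z^{r \times t}$. For each $\tau \in [T]$, replace the brick $\vex^\tau \in \Z^{t_\tau}$ by an expanded brick $\bar\vex^\tau = (\bar\vex^{\tau,1}, \dots, \bar\vex^{\tau,T})$ of width $t$, where $\bar\vex^{\tau,\sigma} \in \Z^{t_\sigma}$. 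Preserve the original bounds and the summand of the objective on $\bar\vex^{\tau,\tau}$, and pin $\bar\vex^{\tau,\sigma} = \mathbf{0}$ for $\sigma \neq \tau$ by setting the corresponding lower and upper bounds to zero. Then $(D \mid \cdots \mid D)$ applied to $(\bar\vex^1, \dots, \bar\vex^T)$ equals $\sum_\tau D_\tau \bar\vex^{\tau,\tau} = \sum_\tau D_\tau \vex^\tau$, reproducing the original upper block, and each row $\mathbf{1}^\top \bar\vex^\tau = b^\tau$ reduces to $\mathbf{1}^\top \bar\vex^{\tau,\tau} = b^\tau$, i.e.\ the original $\tau$-th locally uniform constraint.

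Next, I would remove the inequalities encoded in $\diamondsuit$ using the two gadgets introduced at the beginning of Section~\ref{sec:applications}. Locally uniform inequalities are converted to equalities by appending a single slack variable per brick (which adds a zero column to $D$), leaving $\|D\|_\infty$ unchanged. Globally uniform inequalities are converted by augmenting $D$ to $(D \mid I_r)$ and adding one further brick that carries the corresponding slack variables with appropriate bounds; strict inequalities are handled by bumping the right-hand side by one, exactly as in Section~\ref{sec:applications}. Altogether these transformations yield a combinatorial $n$-fold IP with $\hat r = r$, $\hat t \le t + r + 1$, $\hat n \le T + 1$, $\hat\Delta = 1 + \|D\|_\infty = \Delta$, and $\hat L = O(L)$; the bijection between feasible solutions (and hence between optima) of the pre-$n$-fold IP and of the expanded IP is obtained by projecting onto the non-pinned coordinates.

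Finally, I would apply Theorem~\ref{thm:combinatorialnfold} to the transformed instance. An optimization oracle for its continuous relaxation can be simulated by a single call to the oracle for the original continuous relaxation: the pinned variables contribute nothing, the inequality slacks are determined from the primal values, and the two continuous optima coincide. Plugging the above parameters into Theorem~\ref{thm:combinatorialnfold} yields $\hat t^{\,O(\hat r)} (\hat\Delta \hat r)^{O(\hat r^2)} O(\hat n^3 \hat L) + \mathsf{oo} = t^{O(r)} (\Delta r)^{O(r^2)} O(n^3 L) + \mathsf{oo}$, since $(t+r+1)^{O(r)} = t^{O(r)} \cdot (r+1)^{O(r)}$ and the $(r+1)^{O(r)}$ factor is absorbed into $(\Delta r)^{O(r^2)}$. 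The only delicate point is the bookkeeping in the inequality-removal step -- making sure that the slack bounds, the sign conventions, and the zero-padding compose so that optima of the expanded IP pull back to optima of the original -- but each individual gadget has already been verified at the beginning of Section~\ref{sec:applications}, so the corollary follows directly.
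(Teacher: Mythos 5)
Your proposal is correct and follows essentially the same route as the paper: pad each brick of width $t_\tau$ to width $t$ with dummy variables pinned to zero by their bounds (the paper phrases this via an injective index map $\varphi$ and ``dummy'' pairs, you via expanded bricks $\bar{\vex}^{\tau,\sigma}$), so that the upper block becomes $(D~D~\cdots~D)$, and then invoke Theorem~\ref{thm:combinatorialnfold} on the resulting combinatorial $n$-fold IP with inequalities. Your treatment is in fact slightly more explicit than the paper's, which delegates the handling of $\diamondsuit$ to the inequality gadgets of Section~\ref{sec:applications} without restating them.
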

\begin{proof}
  We shall create a combinatorial $n$-fold IP instance based on the input combinatorial pre-$n$-fold IP.
  Let
  \[
    E^{(n)} = \left[\begin{array}{cccc}
                      D                       & D                          & \cdots & D \\
                      \mathbf{1}^{\transpose} & 0                          & \cdots & 0 \\
                      0                       & \mathbf{1}^{\transpose}    & \cdots & 0 \\
                      \vdots                  & \vdots                     & \ddots & \vdots \\
                      0                       & 0                          & \cdots & \mathbf{1}^{\transpose} \\
                    \end{array}\right] \enspace .
  \]
  Let $\bar{T}_i = \sum_{j=1}^\tau t_\tau$ for each $\tau \in [T]$.
  Let $\varphi: \bigcup_{\tau=1}^T \big(\{\tau \} \times [t_\tau]\big) \to [T] \times [t]$ be an injective mapping from the original variables to the new ones, defined as follows: for each $\tau \in [T]$ and $j \in [t_\tau]$, $\varphi(\tau,j) = (\tau, \bar{T}_{\tau-1} + j)$.
  We call any pair $(\tau,j) \in \N^2$ without a preimage in $\varphi$ \emph{dummy}.

  Then, for any $\tau \in [T]$ and $j \in [t_\tau]$, let $(\tau', j') = \varphi(\tau,j)$, and set $\hat{l}^{\tau'}_{j'} = l^\tau_j$, $\hat{u}^{\tau'}_{j'} = u^\tau_j$, and $\hat{f}^{\tau'}_{j'} = f^\tau_j$.
  For any $\tau \in [T]$ and $j \in [t]$ which form a dummy pair, set $\hat{u}^\tau_j = \hat{l}^\tau_j = 0$ and let $\hat{f}^\tau_j$ be the zero function.

  Now we see that \[\min \big\{ \hat{f}(\hat{\vex}) \mid E^{(n)} \hat{\vex} \diamondsuit \veb, \hat{\vel} \leq \hat{\vex} \leq \hat{\veu} \big\}\] is a combinatorial $n$-fold IP (with inequalities) and thus can be solved in the claimed time by Theorem~\ref{thm:combinatorialnfold}.
\end{proof}

\paragraph{Parameterizing by the number of numbers.}
Fellows et al.~\cite{FellowsGR:2012} argue that many problems' \NP-hardness construction is based on having many distinct objects, when it would be quite natural that the number of distinct objects, and thus the numbers representing them, is bounded by a parameter.
Fellows et al.~\cite[Theorem 5]{FellowsGR:2012} give a fixed-parameter algorithm for a certain Mealy automaton problem which models, for example the \textsc{Heat-sensitive scheduling} problem parameterized by the number of distinct heat levels.
Their algorithm relies on Lenstra's algorithm and has a doubly-exponential dependence on the parameter.
In the conclusions of their paper, the authors state that ``[o]ur main FPT result, Theorem 5, has a poor worst-case running-time guarantee. Can this be improved—at least in important special cases?''

It is easy to observe that their ILP only has $\sigma$ constraints, and that the coefficients are bounded by $|S|^{|S|^2}$.
Thus, applying our algorithm with $t = \Delta = \Oh(|S|^{|S|^2})$ and $r = \sigma$ yields an algorithm with runtime $|S|^{\Oh(\sigma |S|^2)}$, which is single-exponential in their (combined) parameter.

\paragraph{Scheduling meets fixed-parameter tractability.}
Mnich and Wiese~\cite{MnichWiese2015} study the parameterized complexity of fundamental scheduling problems, starting with \textsc{Makespan Minimization} on identical machines.
They give an algorithm with doubly-exponential dependence for parameter maximum job length~$p_{\max}$, and polylogarithmic dependence on the number $m$ of machines.
Knop and Kouteck{\'y}~\cite{KnopKoutecky2017} use standard $n$-fold IP to reduce the dependence on $p_{\max}$ to single-exponential, however their algorithm depends polynomially on~$m$.

We observe that in the proof of~\cite[Theorem 2]{MnichWiese2015} there is an ILP related to a set of configurations which is of size $p_{\max}^{p_{\max}}$.
The coefficients of this ILP are unbounded, but by~\cite[Lemma 1]{MnichWiese2015} we know that all the coefficients differ by at most $p_{\max}^{p_{\max}}$.
Moreover, because the number of machines $m$ is fixed before the construction of the ILP, we can appropriately subtract from the right hand sides and decrease all coefficients such that $\Delta \leq p_{\max}^{p_{\max}}$.
Then, take the constraints~\eqref{eqn:combnfoldip-cond1} as globally uniform and notice that there are $r = p_{\max}$ of them.
In conclusion, we have $t = \Delta \leq p_{\max}^{p_{\max}}$ and $r = p_{\max}$, which yields an algorithm with run time $p_{\max}^{p_{\max}^2} \cdot (\log n  + \log m)$; we therefore improve both the algorithm by Mnich and Wiese~\cite{MnichWiese2015} as well as the algorithm by Knop and Kouteck{\'y}~\cite{KnopKoutecky2017}.

\paragraph{Lobbying in multiple referenda.}
Bredereck et al.~\cite{BredereckEtAl2014c} study the computational social choice problem of lobbying in multiple referenda, and show a Lenstra-based fixed-parameter algorithm for the parameter $m$=``number of choices''.
The number of choices induces the parameter $\ell$=``number of ballots'', which is clearly bounded by $\ell \leq 2^{m}$.
This leads to an ILP formulation with $2^{\Oh(m)}$ variables, and thus to a run time double-exponential in $m$.

We point out that the proof of \cite[Theorem 9]{BredereckEtAl2014c} contains a combinatorial pre-$n$-fold IP, with one set of constraints indexed over $i$, and another set of constraints indexed by $j$, which are simply sums of variables over all $i$.
Since $i \in [m]$ and $j \in [\ell]$, we have the parameters $\Delta = 1$, $t=2^m$, $r=m$.
Therefore, we improve their algorithm to only single-exponential in $m$, namely $m^{\Oh(m^2)} \log n$.

\paragraph{Weighted set multicover (WSM) in graph algorithms.}
In Subsect.~\ref{subsec:wsm}, we give a combinatorial $n$-fold IP formulation for WSM.
While WSM was studied in the context of computational social choice by Bredereck et al.~\cite{BredereckEtAl2015}, it appeared implicitly several times in algorithms for restricted classes of graphs, namely graphs of bounded vertex cover number and neighborhood diversity.
We briefly mention some of these results which we improve here.

Fiala et al.~\cite{FialaGK:2011} show that \textsc{Equitable Coloring} and $L(p,1)$-\textsc{Coloring} parameterized by vertex cover are fixed-parameter tractable.
In the proof of~\cite[Theorem 4]{FialaGK:2011} and similarly of~\cite[Theorem~9]{FialaGK:2011} they construct a combinatorial pre-$n$-fold IP.
Lampis~\cite{Lampis2012} introduced the neighborhood diversity parameter and used combinatorial pre-$n$-folds to show that \textsc{Graph Coloring} and \textsc{Hamiltonian Cycle} are \FPT with this parameterization.
Ganian~\cite{Ganian:2012} later showed in a similar fashion that \textsc{Vertex Disjoint Paths} and \textsc{Precoloring Extension} are also fixed-parameter tractable parameterized by neighborhood diversity.
Similarly, Fiala et al.~\cite{FialaEtAl2017} shows that the \textsc{Uniform Channel Assignment} problem is (triple-exponential) fixed-parameter tractable parameterized by neighborhood diversity and the largest edge weight.
Ganian and Obdr{\v{z}{\'a}lek~\cite{GanianOrdyniak2016} and later Knop et al.~\cite{KnopEtAl2017c} study extensions of the MSO logic, and provide fixed-parameter algorithms for their model checking on graphs of bounded neighborhood diversity.
These algorithms again use combinatorial pre-$n$-folds under the hood.

We can speed up all aforementioned algorithms by applying Corollary~\ref{cor:combinatorial_prenfold}.





\section{Discussion}
\label{sec:discussion}
We established new and fast fixed-parameter algorithms for a class of $n$-fold IPs, which led to the first single-exponential time algorithms for several well-studied problems.
Many intriguing questions arise; e.g., is \textsc{Huge $n$-fold IP} fixed-parameter tractable for parameter $(r,s,t,\Delta)$?
One sees that optimality certification is fixed-parameter tractable using ideas similar to Onn~\cite{Onn2014}; yet, one possibly needs exponentially (in the input size) many augmenting steps.

For most of our applications, complexity lower bounds are not known to us.
Our algorithms yield complexity upper bounds of $k^{\Oh(k^2)}$ on the dependence on parameter $k$ for various problems, such as {\sc Closest String}, {\sc Weighted Set Multicover}, Score-{\sc Swap Bribery} or even {\sc Makespan Minimization}~\cite{KnopKoutecky2017}.
Is this just a common feature of our algorithm, or are there hidden connections between some of these problems?
And what are their actual complexities?
All we know so far is a trivial ETH-based $2^{o(k)}$ lower bound for {\sc Closest String} based on its reduction from {\sc Satisfiability}~\cite{FrancesLitman1997}.

\bibliographystyle{alpha}      
\bibliography{nfolds}   


\end{document}